
\documentclass[acmsmall,screen,nonacm]{acmart}

\pdfpagewidth=8in
\pdfpageheight=12in

\usepackage[ruled]{algorithm2e}
\usepackage{orcidlink} 
\usepackage{multirow}
\theoremstyle{acmplain}
\newtheorem{theorem}{Theorem}[section]
\newtheorem{corollary}{Corollary}[theorem]
\newtheorem{lemma}[theorem]{Lemma}
\newtheorem*{remark}{Remark}
\usepackage{afterpage,lipsum,hyperref,graphicx}
\AtBeginDocument{%
  \providecommand\BibTeX{{%
    \normalfont B\kern-0.5em{\scshape i\kern-0.25em b}\kern-0.8em\TeX}}}


\setcopyright{none}
\copyrightyear{2018}
\acmYear{2018}
\acmDOI{XXXXXXX.XXXXXXX}





\begin{document}

\title{SP$_{RE}$V}


\author{Srivathsan Amruth\orcidlink{https://orcid.org/0009-0001-7568-7279}}
\affiliation{%
 \institution{National University Of Singapore}
 \country{amruth@u.nus.edu}}

\renewcommand{\shortauthors}{}

\begin{abstract}
SPREV, denoting \textit{(hyper)\textbf{S}phere \textbf{RE}duced to two-dimensional \textbf{RE}gular \textbf{P}olygon for \textbf{V}isualisation}, is a novel dimensionality reduction technique developed to addresses the challenges presented by reducing the dimension and data visualisation of labelled datasets characterized by the convergence of trifecta of characteristics—small class size, high dimensionality, and low sample size. SPREV aims not only to reveal but also to visually represent hidden patterns latent within such datasets. The methodology's unique blend of geometric principles tuned for use in discrete computational settings positions it as a indispensable tool for the modern data science toolkit, empowering users to discern trends, glean insights, and navigate the intricacies of their data promptly and efficiently.
\end{abstract}


\maketitle

\section{Introduction}

In contemporary research landscapes, the advent of developing sophisticated technologies driven by a heightened emphasis on precise contextual analytics but impeded by elevated data procurement costs stemming from legal and ethical considerations\cite{crow2006research}\cite{krotov2023big}\cite{richards2014big} surrounding data collection and storage\cite{klose2020edm} has ushered in an era\cite{sen2013data} where datasets exhibit intricate structures characterized by small class sizes, high dimensions, and limited sample sizes. This unique convergence of factors poses substantial challenges\cite{safo2014design} to the effective analysis, interpretation, and visualization of data, particularly in fields ranging from biology\cite{mahmud2019high}\cite{shen2023data}\cite{shen2024ultra} to finance\cite{choi2021boosting}\cite{6974480}\cite{saha2009small}. As researchers grapple with these complex datasets, the need for innovative methodologies becomes increasingly evident.

The quintessential problem of navigating through datasets characterized by a small number of classes, a high feature space, and a paucity of samples has garnered attention across diverse domains\cite{jackson2001adaptive}\cite{sun2022effectiveness}\cite{wang2022enhanced}. The challenges inherent in such datasets extend beyond conventional methodologies, often leading to issues of overfitting (often termed as data-piling\cite{shen2022classification} in the literature regarding this topic), increased computational complexity\cite{marron2007distance}, and limited interpretability of results\cite{ahn2007high}\cite{shen2016statistics}\cite{shen2022classification}. Acknowledging the exigency of addressing these challenges, our paper seeks to illuminate the underexplored realm of small class size, high dimension, and low sample size problems.


Examining the application of microarrays for gene expression measurement\cite{eisen199912}\cite{harrington2000monitoring}, a single measurement simultaneously reveals expression levels for $10^3$ to $10^4$ genes. Due to the substantial cost associated with these measurements, most datasets are constrained to modest sizes, typically in the range of $10^1$ to $10^2$ observations. Consequently, the dimensionality of the data, far exceeds the sample size. 

Considering another instance within the domain of biology, in medical diagnostics, the study of rare diseases often encounters a formidable barrier to assembling extensive datasets\cite{konietschke2021small}. The inherent scarcity of patients afflicted with these conditions limits the availability of labeled data, complicating the development of robust models for diagnostic and prognostic purposes. The primary obstacle, often, is rooted in the prohibitive costs associated with the procurement and annotation of an adequate dataset, exacerbated by the integration of high-dimensional imaging modalities that are required to furnish nuanced insights. This paradigm shift underscores the importance of tailored methodologies to navigate the complexities arising from restricted sample sizes and high dimensionality, offering a nuanced understanding of the opportunities and obstacles.

Similarly, in customer churn, where the focus lies on identifying patterns and trends within a confined set of classes, challenges\cite{jamjoom2021use} arise when these classes are inherently limited alongside limited samples. Looking at private banking\cite{ali2014dynamic} for this, understanding consumer behavior across a small set of target demographics due to the small niche client lists in tiny market segments demands intricate analysis of sparse datasets, where the challenge is not only the limited number of classes but also the necessity to discern subtle relationships among them. This complexity arises from the abundance of information collected per customer during the enrollment process for services offered by such service providers, resulting in a dataset characterized by high dimensionality per sample. Consequently, this confluence of factors necessitates a nuanced examination of methodologies tailored to the distinctive demands imposed by limited sample sizes, high dimensionality, and the exploration of relationships among a few pivotal classes. This approach fosters a deeper understanding of the challenges and opportunities inherent in such research focus across diverse domains.

In the face of escalating demands for data-driven insights in the modern era, SPREV emerges as a tool poised to provide a fresh perspective. Its tailored approach to dimension reduction and visualization in the context of low sample size and small class size problems positions it as a valuable addition to the modern data science toolkit. By addressing the unique challenges posed by contemporary datasets, SPREV has the potential to advance our analytical capabilities, opening avenues for exploration and discovery in the intricate landscapes of small class size, high-dimensional, low sample size data.

\section{Complications surfaced by Low Sample Size}

\subsection{Statistical Power and Reliability:}
One of the primary challenges associated with low sample sizes is the diminished statistical power, limiting the ability to detect true effects or relationships in the data\cite{button2013power}. In research and data analysis, statistical power is essential for making reliable inferences. With small sample sizes, the likelihood of obtaining statistically significant results is reduced, leading to less reliable and conclusive findings. For readers who are not acquainted with the terminology, statistical power\cite{cohen1992statistical} is the ability of the test to detect a true effect if it exists, indicating the likelihood that a significance test will correctly reject a false null hypothesis. A higher statistical power is desirable as it reduces the likelihood of Type II errors, where a true effect is overlooked. Type II errors\cite{banerjee2009hypothesis} are errors that occur when a statistical test fails to reject a false null hypothesis, thereby overlooking a genuine effect.

\subsection{Inadequate Precision and Confidence Intervals:}
The precision of estimates and confidence intervals is compromised with low sample sizes\cite{landis1977measurement}. Larger sample sizes provide more precise estimates of population parameters and narrower confidence intervals. In contrast, small sample sizes result in less accurate estimations and wider confidence intervals, limiting the precision of statistical inferences and making it challenging to ascertain the true nature of relationships within the data.

\subsection{Underpowered Hypothesis Testing:}
As mentioned earlier, low sample sizes reduce the power of statistical tests, and as a corollary increasing the risk of Type II errors\cite{cohen2013statistical}—failing to detect true effects. In hypothesis testing, where the goal is to make informed decisions about population parameters based on sample data, underpowered analyses undermine the reliability of conclusions. 

\subsection{Difficulty Detecting Small Effects:}
Small sample sizes make it challenging to detect small or subtle effects within the data\cite{raudys1991small}. In many disciplines, including data analysis, the identification of nuanced patterns or trends is crucial for drawing meaningful conclusions. In the absence of sufficient sample size, researchers may overlook important relationships, impacting the depth and accuracy of insights derived from the analysis.

\subsection{Generalisability and External Validity:}
Low sample sizes often struggle to describe the whole population and hence, result in non-representative sampling, compromising the external validity of a study\cite{asadoorian2005essentials}. In the context of data analysis, the ability to generalize findings to a larger population is hindered when the sample size is insufficient. This limitation restricts the broader applicability of research outcomes and may lead to inappropriate generalizations.

\subsection{Implications for Experimental and Observational Studies:}
 Both experimental and observational studies are affected by low sample sizes\cite{hertzog2008considerations}\cite{easterbrook1991publication}. In experimental research, small sample sizes may compromise the internal validity of experiments, limiting the ability to establish causal relationships. In observational studies, the representativeness of the sample is crucial for making unbiased inferences about the wider population, and small sample sizes impede this goal. Researchers may fail to identify significant effects or relationships that exist in the population due to inadequate sample sizes.

In summary, low sample size and small class size pose considerable impediments across diverse facets of research and data analysis. Researchers must exercise caution regarding these constraints to uphold the veracity and dependability of their findings. They ought to contemplate strategies for mitigating these challenges, all the while recognizing the plausible influence of limited class sizes on data quality. In alignment with this perspective, the assumptions articulated in our methodology are attuned to these considerations, crafted to accommodate for such limitations.

\section{Why Do We Care About High Dimensions}

\subsection{Introduction to High Dimensions}
In contemporary data-driven landscapes, the proliferation of high-dimensional datasets has become emblematic of our capacity to capture nuanced and intricate information across various domains. High-dimensional data refers to datasets where the number of features or variables vastly exceeds the number of observations, presenting an inherent richness that mirrors the complexity of real-world systems. This proliferation of dimensions occurs naturally in diverse scientific, technological, and societal contexts, underscoring the ubiquity of high-dimensional challenges.

Real-world phenomena often exhibit multifaceted characteristics, and as data collection technologies advance, the ability to capture intricate details increases. For instance, in genomics, high-throughput sequencing technologies yield datasets with dimensions corresponding to the multitude of genes, regulatory elements, and molecular interactions, providing a comprehensive snapshot of cellular processes. Similarly, in image and signal processing, the pixel intensity values across different channels or sensors contribute to high-dimensional representations of visual or sensory information. Moreover, in economic and financial analyses, the inclusion of numerous economic indicators and market variables results in datasets with dimensions reflecting the multifaceted nature of economic systems.

The emergence of high-dimensional datasets reflects the increasing granularity and complexity with which we probe the intricacies of our world. As dimensions accrue, however, the analytical landscape undergoes a fundamental transformation, necessitating a departure from conventional methodologies to address the unique challenges that manifest in these expansive spaces.

\subsection{The Curse of Dimensionality}

At the heart of the challenges posed by high-dimensional data lies the curse of dimensionality, a phenomenon marked by the exponential increase in the volume of the data space as the number of features grows. This results in sparsity, as data points become increasingly scattered, diluting the representation of the underlying structure. The curse of dimensionality also manifests in the diminishing efficacy of traditional statistical methods, rendering them inadequate for capturing meaningful patterns and relationships. Consequently, as dimensions proliferate, the reliability and interpretability of analyses conducted within these expansive spaces become compromised.

\subsection{Existence of Near-Orthogonality in Higher Dimensions}
\label{section:Near-Orthogonality}
Random vectors tend to become approximately orthogonal as the dimensionality increases. This phenomenon is often referred to as the "concentration of measure" or the "curse of dimensionality". To understand why high-dimensional random vectors become almost orthogonal, consider the following intuition: Orthogonality implies independence between vectors, and is measured by the inner product being zero. In high-dimensional spaces, the likelihood of two random vectors being exactly orthogonal decreases, but they become increasingly close to orthogonal (inner product approaches zero), since random components in each dimension have a diminished influence on the overall inner product, due to the vast number of possible configurations high-dimensional vectors can have. More informally, having more dimensions means more ways for random vectors to end up nearly independent, and hence nearly orthogonal.

Therefore, high-dimensional data produces near-orthogonality, which introduces the challenges associated with decreasing explainability in the realm of medical data analysis. We will now prove the existence of this "curse of dimensionality", starting with introducing a few preliminary results.

\subsubsection{Preliminary results needed for showing the existence of near-orthogonality.} 
\label{subsubsection: Prelim Results for northo}
These are the notations we will use throughout this paper. Let $\vec{u},\vec{v} \in \mathbb{R}^{n}$ be two random vectors, and let $\vec{u'},\vec{v'}$ be their unit vectors (the norm is 1), so $\vec{u} = \left |\vec{u}  \right | \vec{u'}$ and $\vec{v} = \left |\vec{v}  \right | \vec{v'}$. Furthermore, $\textup{P}\left ( \cdot  \right )$ denotes probability and $\textup{E}\left ( \cdot  \right )$ denotes expectation.

\begin{lemma}
\label{lemma:n-ortho one}
$\forall n\in \mathbb{N}\::  e^{\sqrt{n}}> n$.
\end{lemma}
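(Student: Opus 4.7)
The plan is to take natural logarithms on both sides and reduce the claim $e^{\sqrt{n}} > n$ to the equivalent inequality $\sqrt{n} > \ln n$. For $n = 0$ (if the paper's convention includes $0 \in \mathbb{N}$) the original inequality reads $1 > 0$, and for $n = 1$ it reads $e > 1$, so both are immediate base cases. Thus the real work is to verify $\sqrt{x} > \ln x$ for every real $x \geq 1$; the discrete statement will then follow at once by restriction to integers, avoiding any need for induction.

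To prove the continuous inequality I would introduce the auxiliary function $f(x) = \sqrt{x} - \ln x$ on $[1, \infty)$ and show it is strictly positive there by locating its global minimum via elementary calculus. Differentiating yields $f'(x) = \tfrac{1}{2\sqrt{x}} - \tfrac{1}{x} = \tfrac{\sqrt{x} - 2}{2x}$, whose unique zero on the domain is $x = 4$; the sign analysis of the numerator shows $f$ is decreasing on $[1,4]$ and increasing on $[4,\infty)$, so $f$ attains its minimum at $x = 4$ with value $f(4) = 2 - 2\ln 2$. It then suffices to observe that $\ln 2 < 1$, which follows from the standard inequality $e > 2$, to conclude $f(4) > 0$ and therefore $f(x) > 0$ throughout $[1, \infty)$.

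The hard part will essentially be nonexistent: this is a routine one-variable calculus exercise, and the only items requiring care are the bookkeeping of the base cases (owing to ambiguity about whether $0 \in \mathbb{N}$) and the citation of the elementary bound $\ln 2 < 1$. No sharper estimate or non-trivial analytic tool is needed, so the whole argument can be written compactly in a few lines.
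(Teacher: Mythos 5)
Your proposal is correct, and it is in fact considerably more rigorous than what the paper offers: the paper does not really prove Lemma~\ref{lemma:n-ortho one} at all, dismissing it as ``trivial to verify, e.g.\ checking that the graph of $y = e^{\sqrt{n}}-n$ lies above the positive $x$-axis,'' which is an appeal to a picture rather than an argument. Your route — taking logarithms to reduce the claim to $\sqrt{x} > \ln x$ on $[1,\infty)$, then minimizing $f(x) = \sqrt{x} - \ln x$ to find the global minimum $f(4) = 2 - 2\ln 2 > 0$ — is a complete and standard one-variable calculus proof; the derivative computation $f'(x) = \frac{\sqrt{x}-2}{2x}$ and the sign analysis are correct, and the base cases $n=0,1$ are handled appropriately given the ambiguity over whether $0\in\mathbb{N}$. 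What your approach buys is an actual certificate of the inequality for \emph{all} $n$, not just the range a plotted graph happens to display; the only cost is a few lines of routine analysis. If anything, you could shorten it slightly by noting that the continuous inequality for $x\ge 1$ already covers $n=1$, so only the $n=0$ convention needs separate mention.
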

This result is trivial to verify, e.g. checking that the graph of $y = e^{\sqrt{n}}-n$ lies above the positive x-axis.

\begin{proposition}
\label{proposition:random-vec-orthogonal-unit-vect}
Two random vectors $\vec{u},\vec{v}$ are approximately orthogonal if and only if their unit vectors are also approximately orthogonal.
\end{proposition}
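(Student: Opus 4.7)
The plan is to reduce the equivalence to a direct algebraic identity between the cosines of the relevant angles, so that the ``approximate'' qualifier becomes irrelevant to the logical structure of the proof. First I would fix the operational meaning of ``approximately orthogonal'': two nonzero vectors are approximately orthogonal precisely when the cosine of the angle between them is close to zero, i.e.\ when their normalised inner product $\frac{\vec{x}\cdot\vec{y}}{|\vec{x}||\vec{y}|}$ is close to zero. Writing this out explicitly at the start is important, because under a raw ``inner product close to zero'' reading the statement would be false (one could rescale $\vec{u}$ by a huge constant and break the equivalence), whereas under the angular reading it becomes tight.

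Next I would use the decompositions $\vec{u} = |\vec{u}|\,\vec{u}'$ and $\vec{v} = |\vec{v}|\,\vec{v}'$ given in Section~\ref{subsubsection: Prelim Results for northo} to compute
\[
\vec{u}\cdot\vec{v} \;=\; |\vec{u}|\,|\vec{v}|\,(\vec{u}'\cdot\vec{v}'),
\]
and then divide through by $|\vec{u}|\,|\vec{v}|$ (assuming both vectors are nonzero, which I would treat as a standing hypothesis, since otherwise orthogonality is vacuous) to obtain
\[
\frac{\vec{u}\cdot\vec{v}}{|\vec{u}|\,|\vec{v}|} \;=\; \vec{u}'\cdot\vec{v}' \;=\; \frac{\vec{u}'\cdot\vec{v}'}{|\vec{u}'|\,|\vec{v}'|},
\]
where the final equality holds because $|\vec{u}'|=|\vec{v}'|=1$. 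This single chain of equalities shows that the cosine of the angle between $\vec{u}$ and $\vec{v}$ equals the cosine of the angle between $\vec{u}'$ and $\vec{v}'$.

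From here both directions of the biconditional are immediate: one quantity is close to zero if and only if the other is, and in fact they are identically equal for any fixed pair of realisations, so the statement holds not just in expectation but pointwise. I would then conclude by remarking that, since the angle is preserved exactly under positive scaling, the ``approximately'' in the hypothesis and the conclusion can be taken with the same tolerance $\varepsilon$; no probabilistic machinery is needed at this stage.

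Honestly I do not anticipate a genuine obstacle in the argument itself; the only delicate point is definitional hygiene. The main thing I would be careful about is making the notion of ``approximate orthogonality'' precise before invoking it, so that the proposition has actual content and is not accidentally interpreted in the (false) inner-product-magnitude sense. Once that is pinned down, the proof is a one-line consequence of bilinearity of the dot product.
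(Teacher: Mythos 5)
Your computation is the same one the paper uses: both arguments rest on the single identity $\langle\vec{u},\vec{v}\rangle = |\vec{u}|\,|\vec{v}|\,\langle\vec{u}',\vec{v}'\rangle$ and read off the two directions of the biconditional from it. Where you genuinely depart from the paper is in the definitional groundwork, and your version is the stronger one. The paper's accompanying remark pins ``approximately orthogonal'' to the raw condition $\langle\vec{u},\vec{v}\rangle\approx 0$, and its proof then pushes ``$\approx 0$'' through a division by $|\vec{u}|\,|\vec{v}|$ in one direction and a multiplication by it in the other; that manipulation is only sound if the product of norms is bounded above and bounded away from zero, which is nowhere assumed. Under that raw reading the proposition is false, exactly as you observe: rescaling $\vec{u}$ by a large constant turns a fixed nonzero cosine into an arbitrarily large inner product, and shrinking it turns a fixed large cosine into an arbitrarily small one. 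Your choice to define near-orthogonality via the normalised quantity $\frac{\vec{u}\cdot\vec{v}}{|\vec{u}|\,|\vec{v}|}$ collapses the equivalence to an exact pointwise equality of two identical numbers, so the biconditional holds with the same tolerance on both sides and no hidden boundedness assumption is needed. This is also the reading the rest of the section implicitly relies on, since the subsequent near-orthogonality argument works exclusively with unit vectors. The only housekeeping items are the ones you already flag: state $\vec{u},\vec{v}\neq\vec{0}$ as a standing hypothesis, and note that no probabilistic content enters at this step despite the vectors being called ``random.''
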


\begin{remark}
Orthogonality between two vectors $\vec{u},\vec{v}$ is equivalent to their inner product being zero, i.e. $\left \langle \vec{u} , \vec{v} \right \rangle = 0$. We can verify approximate orthogonality (near-orthogonality) between two vectors $\vec{u},\vec{v}$ similarly by checking that $\left \langle \vec{u} , \vec{v} \right \rangle \approx  0$, or equivalently that $\left \langle \vec{u} , \vec{v} \right \rangle \to  0$.
\end{remark}

\begin{proof}
Assume that $\left \langle \vec{u} , \vec{v} \right \rangle \approx  0$. Here, $\left \langle \vec{u} , \vec{v} \right \rangle = \left \langle \left |\vec{u}  \right | \vec{u'} , \left |\vec{v}  \right | \vec{v'} \right \rangle = \left | \vec{u} \right |\left | \vec{v} \right | \left \langle \vec{u'} , \vec{v'} \right \rangle$, so $\left \langle \vec{u'} , \vec{v'} \right \rangle = \frac{\left \langle \vec{u} , \vec{v} \right \rangle}{\left | \vec{u} \right |\left | \vec{v} \right |} \approx  \frac{0}{\left | \vec{u} \right |\left | \vec{v} \right |} = 0$.
If instead we assume that $\left \langle \vec{u'} , \vec{v'} \right \rangle \approx 0$, then similarly $\left \langle \vec{u} , \vec{v} \right \rangle = \left | \vec{u} \right |\left | \vec{v} \right | \left \langle \vec{u'} , \vec{v'} \right \rangle \approx 0$. Thus both statements of Proposition \ref{proposition:random-vec-orthogonal-unit-vect} are biconditional as desired.
\end{proof}

\begin{theorem}
\label{Hoeffdings Inequality}
Hoeffding's inequality \cite{doi:10.1080/01621459.1963.10500830}: Given $n$ independent random variables $X_{1}, \cdots , X_{n}$ with $X_i \in \left [ a_i,b_i \right ]$ at least almost surely, the sum of random variables $S_{n} := \sum_{i=1}^{n}X_{n}$ satisfies the following inequality:
\begin{equation}
\label{equation 1 of northo}
\forall \varepsilon \in \mathbb{R}^{+}:\textup{P}\left [ \left | S_{n} - \textup{E}\left ( S_{n} \right ) \right | \geq   \varepsilon \right ] \leq 2\, \textup{exp}\left [ -\frac{2\varepsilon^{2}}{\sum_{k=1}^{n}\left ( b_{k}-a_{k} \right )^{2}} \right ]
\end{equation}
\end{theorem}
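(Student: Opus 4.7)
The plan is to prove Hoeffding's inequality via the classical exponential moment (Chernoff) technique, which converts a tail probability into a problem about bounding a moment generating function. First I would fix $\varepsilon > 0$ and an auxiliary parameter $s > 0$ and apply Markov's inequality to the strictly increasing map $x \mapsto e^{sx}$, writing
\[ \textup{P}\left [ S_n - \textup{E}(S_n) \geq \varepsilon \right ] = \textup{P}\left [ e^{s(S_n - \textup{E}(S_n))} \geq e^{s\varepsilon} \right ] \leq e^{-s\varepsilon}\, \textup{E}\left [ e^{s(S_n - \textup{E}(S_n))} \right ]. \]
Because the $X_i$ are independent, so are the centred variables $Y_i := X_i - \textup{E}(X_i)$, and the moment generating function of the sum factors: $\textup{E}[e^{s(S_n - \textup{E}(S_n))}] = \prod_{i=1}^{n} \textup{E}[e^{sY_i}]$. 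Each $Y_i$ still lives almost surely in an interval of length $b_i - a_i$, and has mean zero.

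The crucial technical ingredient, and the step I expect to be the main obstacle, is \emph{Hoeffding's lemma}: for any random variable $Y$ with $\textup{E}(Y) = 0$ and $Y \in [a,b]$ almost surely, one has $\textup{E}[e^{sY}] \leq \exp(s^2(b-a)^2/8)$. I would prove this by exploiting the convexity of $y \mapsto e^{sy}$: on the interval $[a,b]$ this function is dominated by its chord $\frac{b - y}{b - a}\,e^{sa} + \frac{y - a}{b - a}\,e^{sb}$. Taking expectations and using $\textup{E}(Y) = 0$ gives a closed-form bound $\textup{E}[e^{sY}] \leq \exp(\varphi(h))$ where $h = s(b-a)$, $p = -a/(b-a)$, and $\varphi(h) = -hp + \log(1 - p + p e^h)$. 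A second-order Taylor expansion at $h = 0$ with an explicit bound on $\varphi''$ then yields $\varphi(h) \leq h^2/8$; bounding $\varphi''$ using the AM--GM inequality on the Bernoulli variance $q(1-q) \leq 1/4$ is the computational heart of the argument.

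With the lemma in hand, the product bound becomes $\textup{E}[e^{s(S_n - \textup{E}(S_n))}] \leq \exp\!\bigl(s^2 \sum_{k=1}^{n} (b_k - a_k)^2 / 8\bigr)$, so
\[ \textup{P}\left [ S_n - \textup{E}(S_n) \geq \varepsilon \right ] \leq \exp\!\left( -s\varepsilon + \tfrac{s^2}{8}\textstyle\sum_{k=1}^{n}(b_k - a_k)^2 \right). \]
This upper bound is a quadratic in $s$, which I would minimise by the choice $s = 4\varepsilon / \sum_{k=1}^{n}(b_k - a_k)^2$, producing the one-sided tail estimate $\exp\!\bigl(-2\varepsilon^2 / \sum_{k=1}^{n}(b_k - a_k)^2\bigr)$. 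Finally, applying the identical argument to the variables $-X_i$, whose supports $[-b_i, -a_i]$ have unchanged length, yields the matching lower-tail bound; a union bound combines the two one-sided estimates into the two-sided inequality of Theorem~\ref{Hoeffdings Inequality}, with the factor of $2$ on the right-hand side accounting for the two tails.
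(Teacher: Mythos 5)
Your proposal is correct: the Chernoff exponential-moment bound, Hoeffding's lemma proved via convexity of $y \mapsto e^{sy}$ and the second-order Taylor estimate $\varphi''(h) \leq 1/4$, the optimal choice $s = 4\varepsilon / \sum_{k=1}^{n}(b_k - a_k)^2$, and the union bound over the two tails together give exactly the stated two-sided inequality. The paper itself offers no proof of Theorem~\ref{Hoeffdings Inequality} and simply cites Hoeffding's 1963 article, and your argument is precisely the canonical proof found there, so there is nothing to reconcile.
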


\subsubsection{Proof of near-orthogonality in higher dimensions.}We formally claim in the upcoming Proposition \ref{proposition: as n grows, inner product tends to 0} that for large $n \in \mathbb{N}$, random vectors in $\mathbb{R}^n$ tend to become approximately orthogonal. The mathematically-equivalent statement is that as dimensionality increases, there must exist a more-than-proportional number of unit vectors (and thus, vectors) that can exhibit near-orthogonality. Our proof of Proposition \ref{proposition: as n grows, inner product tends to 0} begins from the preliminary results in Section \ref{subsubsection: Prelim Results for northo}, and after the proof, we demonstrate this "curse of dimensionality" with a simulation in Section \ref{subsubsection: Simulation for norhto in high dim}.

\begin{proposition}
\label{proposition: as n grows, inner product tends to 0}
\textup{As} $n \to \infty $, $\exists \:  \vec{v}^1,\cdots, \vec{v}^m \in \mathbb{R}^n$ \textup{(where} $m > n$\textup{) such that} $\forall i,j \in \mathbb{N}_{\leq m}: \left \langle \vec{v}^j,\vec{v}^i \right \rangle \to 0$.
\end{proposition}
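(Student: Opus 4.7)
The plan is to apply the probabilistic method via Hoeffding's inequality (Theorem \ref{Hoeffdings Inequality}). By Proposition \ref{proposition:random-vec-orthogonal-unit-vect}, it suffices to exhibit $m > n$ unit vectors whose pairwise (off-diagonal) inner products tend to zero, so I would work with normalized vectors throughout. The natural candidate count is $m = \lceil e^{\sqrt{n}} \rceil$, which exceeds $n$ by Lemma \ref{lemma:n-ortho one} yet is small enough for a union bound over pairs to still go through.

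Next, I would sample the vectors at random: each coordinate $v^{i}_{k}$ is drawn independently from $\{-1/\sqrt{n}, +1/\sqrt{n}\}$ with equal probability, which forces $\|\vec{v}^{i}\| = 1$ deterministically and makes coordinates across distinct vectors mutually independent. For any fixed pair $i \neq j$, the inner product $\langle \vec{v}^{i}, \vec{v}^{j} \rangle = \sum_{k=1}^{n} v^{i}_{k} v^{j}_{k}$ is then a sum of $n$ independent mean-zero random variables each bounded in $[-1/n, +1/n]$. Hoeffding's inequality, with $\sum_{k}(b_{k}-a_{k})^{2} = 4/n$, yields
$$\textup{P}\!\left[\left|\langle \vec{v}^{i}, \vec{v}^{j} \rangle\right| \geq \varepsilon \right] \leq 2\exp\!\left(-\tfrac{1}{2}n\varepsilon^{2}\right).$$
A union bound over the fewer than $m^{2}$ unordered pairs then controls the probability that \emph{any} pair exceeds $\varepsilon$ by $2m^{2}\exp(-\tfrac{1}{2}n\varepsilon^{2}) \leq 2\exp(2\sqrt{n} - \tfrac{1}{2}n\varepsilon^{2})$.

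To match the proposition's requirement that the inner products themselves tend to zero, I would let $\varepsilon = \varepsilon_{n} \to 0$ slowly enough that this exponent still diverges to $-\infty$; for instance, $\varepsilon_{n} = n^{-1/5}$ gives $n\varepsilon_{n}^{2} = n^{3/5}$, which asymptotically dominates $2\sqrt{n}$. Hence for each sufficiently large $n$ the failure probability is strictly less than $1$, so a deterministic choice of $m > n$ unit vectors with all pairwise inner products at most $\varepsilon_{n}$ in absolute value exists, and these inner products tend to zero as $n \to \infty$. Invoking Proposition \ref{proposition:random-vec-orthogonal-unit-vect} then lifts the conclusion back to the original (non-unit) setting.

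The main obstacle I anticipate is the joint calibration of $m$, $\varepsilon_{n}$, and the Hoeffding exponent: $m$ must beat $n$ (which by Lemma \ref{lemma:n-ortho one} is cleanly achieved by $m \geq e^{\sqrt{n}}$), while $\log m$ and $n\varepsilon_{n}^{2}$ must leave a gap so that the union bound collapses even as $\varepsilon_{n} \to 0$. The choice $m \sim e^{\sqrt{n}}$ together with $\varepsilon_{n}$ decaying as any sufficiently small negative power of $n$ handles both constraints, but writing the argument without hiding a constant-versus-asymptotic mismatch in the final limit will require some care.
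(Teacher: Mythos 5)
Your proposal is correct and follows essentially the same route as the paper's proof: Rademacher-type unit vectors with coordinates in $\{-1/\sqrt{n},+1/\sqrt{n}\}$, Hoeffding's inequality giving a tail bound of $2\exp(-n\varepsilon^2/2)$ per pair, a union bound over fewer than $m^2$ pairs with $m=e^{\sqrt{n}}>n$ via Lemma \ref{lemma:n-ortho one}, and a threshold $\varepsilon_n\to 0$ chosen so the exponent still diverges. The only difference is cosmetic — you take $\varepsilon_n=n^{-1/5}$ where the paper takes $\varepsilon_n=\sqrt{5/\sqrt{n}}$, both of which make $n\varepsilon_n^2$ dominate $2\sqrt{n}$ — and your direct treatment of $\langle\vec{v}^i,\vec{v}^j\rangle$ as a sum of independent products is, if anything, slightly cleaner than the paper's detour through a fixed arbitrary unit vector $\vec{u}$.
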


\begin{proof}
Let $\varepsilon = \sqrt{\frac{5}{\sqrt{n}}}$ and $m = e^{\sqrt{n}}$. From Lemma \ref{lemma:n-ortho one} we guarantee that $m > n$. Note that the values of $\varepsilon$ and $m$ depend on $n$, and $n$ itself approaches infinity in Proposition \ref{proposition: as n grows, inner product tends to 0}. Next, by Proposition \ref{proposition:random-vec-orthogonal-unit-vect}, it suffices to consider unit vectors when checking for near-orthogonality, so we define a set of $m$ randomly-generated unit vectors $\vec{v}^1, \cdots, \vec{v}^m$ as follows:
\newline $\forall i \in \mathbb{N}_{\leq m}$, let each component of $\vec{v}^i$ be $v_{k}^{i}$ ($k \in \mathbb{N}_{\leq n}$) with each $v_{k}^{i}$ chosen independently and randomly from the set $\left \{ -\frac{1}{\sqrt{n}},\frac{1}{\sqrt{n}} \right \}$, so that $\textup{P}\left ( v_{k}^{i} = -\frac{1}{\sqrt{n}}\right )=\textup{P}\left ( v_{k}^{i} = \frac{1}{\sqrt{n}}\right )=\frac{1}{2}$. Then, for each component of any of these random unit vectors, we have

\begin{equation}
\label{equation 2 of northo}
\textup{E}\left ( v_{k}^{i} \right ) = \frac{1}{2}\left ( -\frac{1}{\sqrt{n}} \right )+\frac{1}{2}\left (\frac{1}{\sqrt{n}} \right ) = 0
\end{equation}

Now, consider an arbitrary unit vector $u\in \mathbb{R}^n$, with $u_k$ representing its k-th component. We know from (\ref{equation 2 of northo}) that $\textup{E}\left ( v_{k}^{i} \right ) = 0$, so

\begin{equation}
\label{equation 3 of northo}
\textup{E}\left ( \left \langle \vec{u},\vec{v}^i \right \rangle \right ) = \textup{E}\left [ \sum_{k=1}^{n} 
\left ( u_k\cdot v_{k}^{i} \right ) \right ] = \textup{E}\left ( u_k \sum_{k=1}^{n} v_{k}^{i} \right ) = u_k \textup{E}\left ( \sum_{k=1}^{n} v_{k}^{i} \right ) = u_k \sum_{k=1}^{n} \textup{E}\left ( v_{k}^{i} \right ) = u_k \sum_{k=1}^{n} 0 = 0
\end{equation}

Moreover, we know $v_{k}^{i}$ can only take the values $-\frac{1}{\sqrt{n}}$ or $\frac{1}{\sqrt{n}}$, so it is clear that $u_k \cdot v_{k}^{i} \in \left [ -\frac{u_k}{\sqrt{n}},\frac{u_k}{\sqrt{n}} \right ]$. These conditions allow us to apply Hoeffding's inequality by setting (with reference to Theorem \ref{Hoeffdings Inequality}) $a_k=-\frac{u_k}{\sqrt{n}}$, $b_k=\frac{u_k}{\sqrt{n}}$, and $S_n = \sum_{k=1}^{n}X_k$, where $\textup{E}\left ( S_n \right ) = \textup{E}\left ( \sum_{k=1}^{n}X_k \right ) = \textup{E}\left [ \sum_{k=1}^{n} \left ( u_k \cdot v_{k}^{i} \right ) \right ] = 0$ by (\ref{equation 3 of northo}). Hoeffding's inequality thus gives the following results:

\begin{equation*}
\forall \varepsilon \in \mathbb{R}^{+}:\textup{P}\left [ \left | S_{n} - \textup{E}\left ( S_{n} \right ) \right | \geq   \varepsilon \right ] \leq 2\, \textup{exp}\left [ -\frac{2\varepsilon^{2}}{\sum_{k=1}^{n}\left ( b_{k}-a_{k} \right )^{2}} \right ] \: \left [ \textup{original statement of (\ref{equation 1 of northo}) before substitution} \right ]
\end{equation*}

\begin{equation*}
\textup{P}\left ( \left | \sum_{k=1}^{n} X_k-0\right | \geq \varepsilon \right ) \leq 2\, \textup{exp} \left \{ -\frac{2\varepsilon^2}{\sum_{k=1}^{n} \left [ \frac{u_k}{\sqrt{n}} - \left ( -\frac{u_k}{\sqrt{n}} \right ) \right ]^2} \right \} \: \left [ \textup{substituting into (\ref{equation 1 of northo})} \right ]
\end{equation*}

\begin{equation*}
\textup{P}\left ( \left | \sum_{k=1}^{n} X_k \right | \geq \varepsilon \right ) \leq 2 \, \textup{exp} \left [ -\frac{2\varepsilon^2}{\sum_{k=1}^{n} \left ( \frac{2u_k}{\sqrt{n}} \right )^2} \right ] = 2\, \textup{exp} \left ( -\frac{2\varepsilon^2}{\frac{4}{n}\sum_{k=1}^{n}u_{k}^{2}} \right ) = 2\, \textup{exp} \left ( \frac{-\frac{1}{2}\varepsilon^2n}{\sum_{k=1}^{n}u_{k}^{2}} \right )
= 2\, \textup{exp} \left ( \frac{-\frac{\varepsilon^2n}{2}}{\left | \vec{u} \right |^2} \right )
\end{equation*}

Recall that $\vec{u}$ is a unit vector. Then $\left | \vec{u} \right | = \left | \vec{u} \right |^2 = 1$, and we conclude that
\begin{equation}
\label{equation 4 of northo}
\textup{P}\left ( \left | \sum_{k=1}^{n} X_k \right | \geq \varepsilon \right ) \leq 2\, \textup{exp} \left ( -\frac{\varepsilon^2n}{2} \right )
\end{equation}

Now, we can check the value of the inner product of two random unit vectors $\vec{v}^j$ and $\vec{v}^i$, as follows:
\begin{equation*}
\textup{P}\left ( \forall i,j \in \mathbb{N}_{\leq m}: \left | \left \langle \vec{v}^j,\vec{v}^i \right \rangle \right | \leq \varepsilon \right ) = 1 - \textup{P}\left ( \forall i,j \in \mathbb{N}_{\leq m}: \left | \left \langle \vec{v}^j,\vec{v}^i \right \rangle \right | >  \varepsilon \right ) \geq 1 - \sum_{i<j}^{} \textup{P}\left ( \left | \left \langle \vec{v}^j,\vec{v}^i \right \rangle \right | >  \varepsilon \right )
\end{equation*}

\begin{equation*}
= 1 - \sum_{i<j}^{} \textup{P}\left ( \left | \left \langle \vec{v}^j,\vec{v}^i \right \rangle \right | \geq  \varepsilon \right ) = 1 - \binom{m}{2}\, \textup{P}\left ( \left | \left \langle \vec{v}^j,\vec{v}^i \right \rangle \right | \geq  \varepsilon \right )
\end{equation*}

\begin{equation*}
\geq 1 - m^2\, \textup{P}\left ( \left | \left \langle \vec{v}^j,\vec{v}^i \right \rangle \right | \geq  \varepsilon \right ) \: \left [ \because m > 0 \Rightarrow m^2 > \binom{m}{2} = \frac{m\left ( m-1 \right )}{2} \right ]
\end{equation*}

\begin{equation*}
= 1 - m^2\, \textup{P}\left ( \left | \left \langle \vec{u},\vec{v}^i \right \rangle \right | \geq  \varepsilon \right ) \, \left ( \because \vec{u} \textup{ is an arbitrary unit vector, we can replace } \vec{v}^j \textup{with } 
\vec{u} \right )
\end{equation*}

\begin{equation}
\label{equation 5 of northo}
= 1 - m^2\, \textup{P}\left ( \left | \sum_{k=1}^{n}X_k \right | \geq  \varepsilon \right ) \, \left [ \because \left \langle \vec{u},\vec{v}^i \right \rangle = \sum_{k=1}^{n}\left ( u_k \cdot v_k^i \right )
= \sum_{k=1}^{n}X_k \right ]
\end{equation}

At this stage, we know from (\ref{equation 4 of northo}) that $\textup{P}\left ( \left | \sum_{k=1}^{n} X_k \right | \geq \varepsilon \right ) \leq 2\, \textup{exp} \left ( -\frac{\varepsilon^2n}{2} \right )$. Of course, $m > 0 \Rightarrow -m^2 < 0$, implying that $-m^2\, \textup{P}\left ( \left | \sum_{k=1}^{n} X_k \right | \geq \varepsilon \right ) \geq -2m^{2}\, \textup{exp} \left ( -\frac{\varepsilon^2n}{2} \right )$. Therefore, $1-m^2\, \textup{P}\left ( \left | \sum_{k=1}^{n} X_k \right | \geq \varepsilon \right ) \geq 1-2m^{2}\, \textup{exp} \left ( -\frac{\varepsilon^2n}{2} \right )$. We can substitute this into (\ref{equation 5 of northo}):
\begin{equation*}
\textup{P}\left ( \forall i,j \in \mathbb{N}_{\leq m}: \left | \left \langle \vec{v}^j,\vec{v}^i \right \rangle \right | \leq \varepsilon \right ) \geq 1-m^2\, \textup{P}\left ( \left | \sum_{k=1}^{n} X_k \right | \geq \varepsilon \right ) \geq 1-2m^{2}\, \textup{exp} \left ( -\frac{\varepsilon^2n}{2} \right )
\end{equation*}

Then, we substitute in our chosen values of $\varepsilon$ and $m$.
\begin{equation*}
\textup{P}\left ( \forall i,j \in \mathbb{N}_{\leq m}: \left | \left \langle \vec{v}^j,\vec{v}^i \right \rangle \right | \leq \sqrt{\frac{5}{\sqrt{n}}} \right ) \geq 1 - 2\left ( e^{\sqrt{n}} \right )^2\, \textup{exp} \left [ -\frac{\left ( \sqrt{\frac{5}{\sqrt{n}}} \right )^2n}{2} \right ]
\end{equation*}

\begin{equation}
\label{equation 6 of northo}
= 1-2\, \textup{exp}\left ( 2\sqrt{n} \right )\, \textup{exp}\left ( -\frac{5\sqrt{n}}{2} \right ) = 1-2\, \textup{exp} \left ( -\frac{1\sqrt{n}}{2} \right )
\end{equation}

Lastly, we can finally address Proposition \ref{proposition: as n grows, inner product tends to 0} by taking $n \to \infty$ on the result in (\ref{equation 6 of northo}):
\begin{equation*}
\lim_{n\to\infty} \textup{P}\left ( \forall i,j \in \mathbb{N}_{\leq m}: \left | \left \langle \vec{v}^j,\vec{v}^i \right \rangle \right | \leq \sqrt{\frac{5}{\sqrt{n}}} \right ) \geq \lim_{n\to\infty} \left [ 1-2\, \textup{exp} \left ( -\frac{5\sqrt{n}}{2} \right ) \right ]
\end{equation*}

\begin{equation*}
\textup{P}\left ( \forall i,j \in \mathbb{N}_{\leq m}: \left | \left \langle \vec{v}^j,\vec{v}^i \right \rangle \right | \leq 0 \right ) \geq 1-2\, \lim_{n\to\infty}\, \textup{exp} \left ( -\frac{5\sqrt{n}}{2} \right )
\end{equation*}

\begin{equation*}
\textup{P}\left ( \forall i,j \in \mathbb{N}_{\leq m}: \left | \left \langle \vec{v}^j,\vec{v}^i \right \rangle \right | \leq 0 \right ) \geq 1-2(0)
\end{equation*}

\begin{equation*}
\textup{P}\left ( \forall i,j \in \mathbb{N}_{\leq m}: \left | \left \langle \vec{v}^j,\vec{v}^i \right \rangle \right | \leq 0 \right ) \geq 1
\end{equation*}

\begin{equation*}
\textup{P}\left ( \forall i,j \in \mathbb{N}_{\leq m}: \left | \left \langle \vec{v}^j,\vec{v}^i \right \rangle \right | = 0 \right )= 1 \Rightarrow \left \langle \vec{v}^j,\vec{v}^i \right \rangle \to 0
\end{equation*}

And thus, as $n \to \infty$, the probability of the inner product of two random unit vectors being zero approaches 1, and so we conclude that $n \to \infty \Rightarrow \left \langle \vec{v}^j,\vec{v}^i \right \rangle \to 0$ as desired in Proposition \ref{proposition: as n grows, inner product tends to 0}.
\end{proof}

\begin{remark}
When generating random unit vectors in $\mathbb{R}^n$, each vector component must take its value from the set $\left \{ -\frac{1}{\sqrt{n}},\frac{1}{\sqrt{n}} \right \}$. This ensures that regardless of how large $n$ is, the norm of the vector is 1, as desired. This is easy to verify:
\begin{equation*}
\left | \vec{v}^i \right |= \sqrt{\left ( \pm \frac{1}{\sqrt{n}}\right )^2n}=1 \textup{ as desired.} \: \: \square
\end{equation*}
\end{remark}

\subsubsection{Simulation to show Almost Orthogonality in Higher Dimensions}
\label{subsubsection: Simulation for norhto in high dim}
Other than a proof of the "curse of dimensionality", Proposition \ref{proposition: as n grows, inner product tends to 0} describes a method to generate random vectors. This helps us simulate what happens when dimensionality increases, then check for pairwise orthogonality via the value of the inner product, or equivalently checking for cosine differences.

Figure 1 illustrates the resulting cosine similarity value, which ranges between -1 and 1, where 1 indicates identical directions, and -1 indicates opposite directions. Thus, the desired value is 0, which indicates orthogonality, and indeed we observe that the values approach 0 rapidly as dimensionality increases.

\begin{figure}[h]
  \centering
  \includegraphics[width=\linewidth]{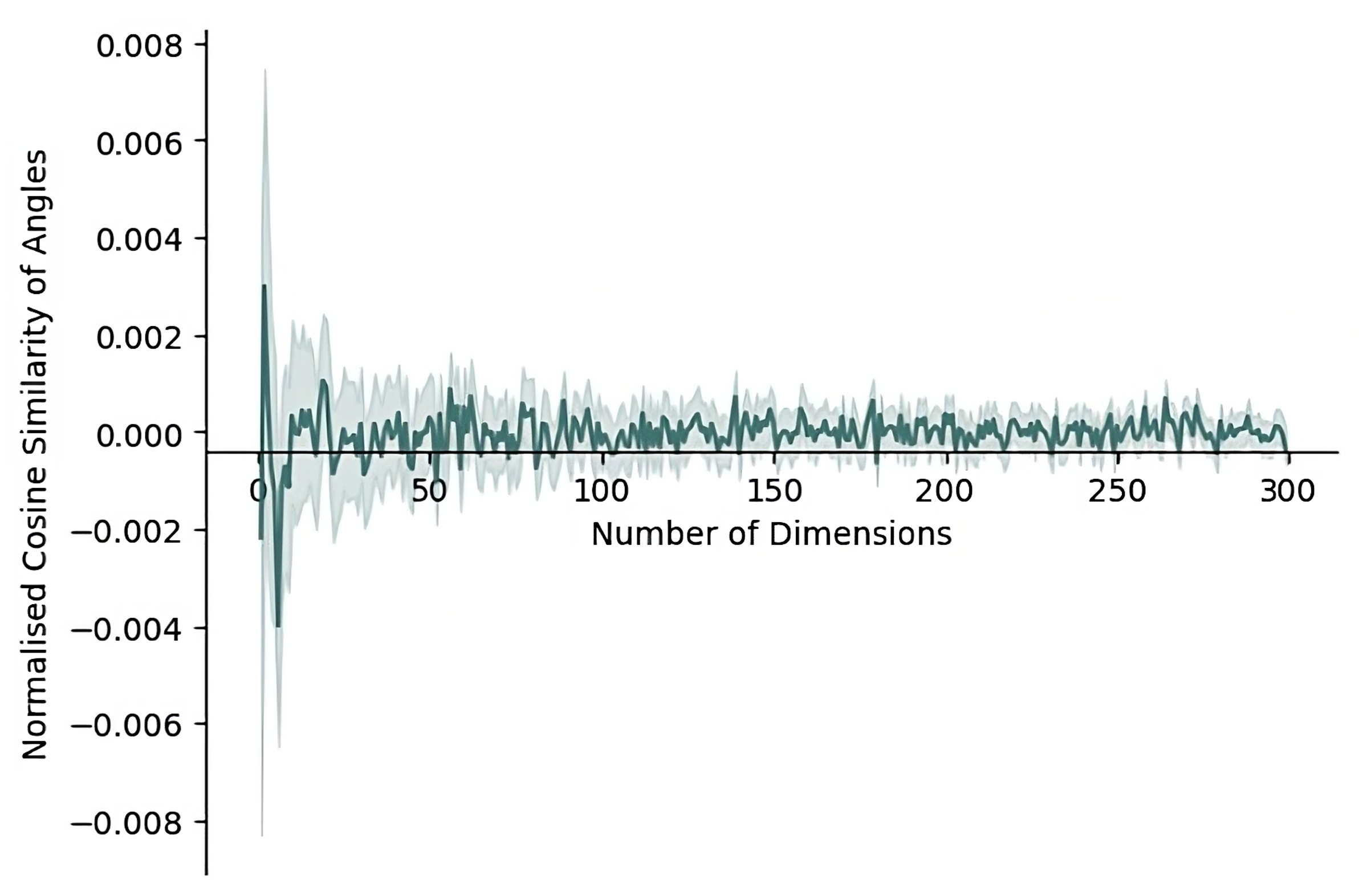}
  \caption{Graph showing Simulation of Cosine Differences between one hundred thousand vector samples per given N dimension}
  \Description{The graph shows that the cosine difference tends towards zero implying increasing occurrences of near orthogonality}
\end{figure}

\subsection{Implications of Orthogonality on Data}
High-dimensional random vectors becoming almost orthogonal has significant implications for data processing. Here are a few ways in which it affects the field. In the context of high-dimensional data, these are challenges encountered in data analysis:

\subsubsection{Predictive inaccuracy}
Orthogonal data may fail to capture interdependencies between variables that are essential for understanding underlying mechanisms. This causes an incomplete representation of the various features in data. For example in medical data, patient demographics or medical history might be incompletely represented thus, possibly generating oversimplified models with limited predictive power. Without accurate predictions, diagnosis, treatment planning, and prognosis are affected.

\subsubsection{Increased vulnerability to noise}
In high-dimensional data, noise or measurement errors are more damaging due to the large number of variables. Orthogonal data fails to capture the dependencies like described earlier, so we lose the ability to use these dependencies to distinguish meaningful signals from random fluctuations.



\subsection{Computational and Memory Constraints}

High-dimensional datasets impose formidable computational burdens, as algorithms traditionally designed for lower dimensions exhibit increased time complexity and memory requirements. Analytical procedures that thrive in low-dimensional settings struggle to navigate the vast parameter spaces inherent in high dimensions, leading to computational inefficiencies and prohibitive resource demands. The computational challenges associated with high dimensions impede the scalability of analyses, hindering the timely processing and interpretation of large-scale datasets.

\subsection{Data Sparsity and Overfitting}

The sparsity induced by the curse of dimensionality exacerbates the risk of overfitting, a phenomenon where models capture noise as if it were genuine signal. In high-dimensional spaces, the abundance of features provides more opportunities for chance correlations, amplifying the susceptibility to overfitting. As a consequence, predictive models trained on such datasets may fail to generalize well to new observations, compromising their reliability and predictive accuracy. Addressing the delicate balance between model complexity and generalizability becomes paramount in mitigating the adverse effects of sparsity and overfitting.

\subsection{Dimensionality Reduction and Advanced Analytical Approaches}

In response to the challenges posed by high-dimensional data, dimensionality reduction techniques have emerged as pivotal tools to distill essential information and mitigate the curse of dimensionality. Methods such as principal component analysis (PCA) and manifold learning aim to uncover lower-dimensional representations that preserve the salient features of the data. As we confront the complexities of high-dimensional datasets, a synthesis of these techniques becomes imperative for unlocking the latent insights embedded in the expansive realms of multidimensional data.

\section{Exploring Traditional Dimensionality Reduction Methods}

\subsection{Principal Component Analysis (PCA)}
\subsubsection{What is PCA}
Principal Component Analysis\cite{pearson1901liii} is a widely used linear dimensionality reduction technique. It simplifies high-dimensional data by transforming it into a lower-dimensional space while retaining the most significant information. This transformation achieves dimensionality reduction by identifying new axes, called principal components, that capture the largest variances in the data.

\subsubsection{Theoretical Foundations}

PCA relies on the concept of eigenvalues and eigenvectors. Eigenvalues, denoted by $\lambda_i$, represent the amount of variance explained by each principal component $i$, while eigenvectors, denoted by $u_i$, define the directions of these components in the original high-dimensional space. PCA identifies principal components in order of decreasing variance, ensuring that the first few components capture the most significant information in the data.

\subsubsection{Applications in Exploratory Data Analysis}

PCA finds numerous applications in exploratory data analysis:

\begin{itemize}
    \item Visualization: By projecting high-dimensional data onto the first few principal components, PCA allows for visualization in lower dimensions, facilitating exploration of data patterns and relationships.
    \item Feature selection: PCA identifies the most informative features by ranking them based on their associated variance. This helps select the most relevant features for further analysis, reducing computational costs and potential overfitting.
    \item Anomaly detection: Deviations from the captured variance by PCA can indicate outliers or anomalies in the data, allowing for their identification and investigation.
\end{itemize}

\subsubsection{Strengths}

\begin{itemize}
    \item Model Interpretability: Principal components represent the directions of maximum variance, providing insights into the underlying structure of the data. As a note to the reader, it is important to understand that Model Interpretability pertains to the understanding of observable manifestation of cause-and-effect relationships within a system, specifically denoting the capacity to anticipate outcomes in response to alterations in input or algorithmic parameters. Hence, when we refer to interpretability throughout this paper we are not directly addressing the output that we observed but rather regarding the "explainability" of the given output.
    \item Efficiency: PCA is computationally efficient, making it suitable for analyzing large datasets.
\end{itemize}

\subsubsection{Weaknesses}

\begin{itemize}
    \item Assumes linearity: PCA assumes a linear relationship between features, which might not hold true for complex datasets.
    \item Ignores non-linear relationships: PCA may not effectively preserve well-separated clusters in the original space if the data exhibits non-linear relationships.
\end{itemize}

PCA's ability to condense information while retaining the essential variance makes it an invaluable tool for enhancing interpretability. The principal components, being orthogonal, provide a clear view of the underlying relationships within the data. Additionally, PCA enables insightful visualizations, as the first few principal components often capture the dominant patterns, allowing for a simplified yet comprehensive representation of the dataset's structure.

\subsection{t-Distributed Stochastic Neighbor Embedding (t-SNE)}
\subsubsection{What is t-SNE}

t-Distributed Stochastic Neighbor Embedding\cite{van2008visualizing}, is a non-linear dimensionality reduction technique designed to visualize high-dimensional data by preserving the local similarities between data points. Unlike PCA, which focuses on capturing global variance, t-SNE aims to embed the data in a lower-dimensional space while maintaining the distances between neighboring points in the original space.

\subsubsection{Theoretical Foundations}

t-SNE operates by calculating pairwise similarities between data points in both the high-dimensional and low-dimensional spaces. These similarities are often represented by a Gaussian kernel in the high-dimensional space and a Student t-distribution kernel in the low-dimensional space. It then minimizes a cost function, often the Kullback-Leibler divergence, that measures the difference between these similarities, ensuring that nearby points in the original space remain close together in the lower-dimensional representation.

\subsubsection{Applications in Exploratory Data Analysis}

t-SNE finds numerous applications in exploratory data analysis for high-dimensional datasets with non-linear relationships:

\begin{itemize}
    \item Visualization: t-SNE excels at revealing non-linear structures and clusters in high-dimensional data, making it a valuable tool for understanding complex relationships and identifying hidden patterns.
    \item Clustering: By preserving local similarities, t-SNE can aid in identifying clusters and exploring their characteristics in high-dimensional data.
\end{itemize}

\subsubsection{Strengths}

\begin{itemize}
    \item Handles non-linear relationships: Effective for visualizing complex datasets with intricate structures and non-linear relationships.
    \item Preserves local similarities: Captures the relationships between neighboring data points, allowing for better cluster visualization.
\end{itemize}

\subsubsection{Weaknesses}

\begin{itemize}
    \item Less interpretable: The resulting dimensions may not have a clear physical meaning, making interpretation challenging.
    \item Computationally expensive: Can be computationally expensive for large datasets, especially compared to PCA.
\end{itemize}

t-SNE finds widespread applications in various domains, including image analysis, natural language processing, and bioinformatics. Its ability to reveal intricate structures and clusters within high-dimensional data makes it a valuable tool for exploratory data analysis and pattern recognition. t-SNE has been employed to visualize complex relationships in biological data, uncover hidden patterns in images, and assist in understanding the semantic relationships between words.

\subsection{Uniform Manifold Approximation and Projection for Dimension Reduction (UMAP)}
\subsubsection{What is UMAP}
Uniform Manifold Approximation and Projection\cite{mcinnes2018umap}, is a recent dimensionality reduction technique that extends concepts from t-SNE, utilizing a topological perspective. It focuses on preserving local structures in the data, constructing a low-dimensional representation that captures relationships between neighboring points, ultimately retaining the overall high-dimensional topological structures.

\subsubsection{Theoretical Foundations}
UMAP constructs a fuzzy topological representation of the data using nearest neighbors, capturing connectivity between data points for the preservation of local structures. Optimization involves a cost function considering topological information, encouraging proximity between points in the original and low-dimensional spaces. UMAP employs an efficient optimization strategy with adaptive learning rates, enhancing computational speed compared to t-SNE.

\subsubsection{Applications in Exploratory Data Analysis}
UMAP proves valuable for analyzing complex, high-dimensional datasets with both linear and non-linear relationships. It excels in visualization, revealing intricate cluster patterns and broader relationships within the data. UMAP aids in clustering, offering insights into both cluster characteristics and the high-dimensional data's topological structure.

\subsubsection{Strengths}
\begin{itemize}
    \item Handles non-linear relationships effectively.
    \item Preserves both topological structures, capturing fundamental relationships and connections between data points.
    \item Relatively interpretable, providing insights into the data structure compared to t-SNE.
\end{itemize}

\subsubsection{Weaknesses}
\begin{itemize}
    \item Less interpretable than PCA, with resulting dimensions lacking a clear physical meaning.
    \item Can be computationally expensive for large datasets, though generally faster than t-SNE.
\end{itemize}

UMAP excels at visualizing intricate structures and hidden patterns in complex, high-dimensional data, even when those relationships are non-linear, aiding it's efficacy in various applications, such as discerning concealed patterns in images and attaining a holistic comprehension of intricate relationships in biological data. This renders UMAP as a powerhouse for delivering visual representations conducive to in-depth exploratory analyses.

\section{Proposed Methodology}

While Principal Component Analysis (PCA) effectively reduces dimensionality by leveraging principal components (eigenvalues), it can struggle with capturing non-linear relationships. Conversely, techniques like UMAP and t-SNE, which utilize local embedding, excel at preserving local structure but may sacrifice global information, particularly for smaller datasets.

Our proposed method addresses this trade-off by seeking to capture both interpretable global structure, similar to PCA, and also metric information embedded in the data, akin to approaches like t-SNE and UMAP. The primary objective is to enable faster visualization, in particular, for high dimensional datasets characterized by small class sizes and low sample sizes.

\subsection{Key steps in our method:}

\begin{enumerate}
    \item Convex Hull Generation: We begin by constructing the convex envelope of the discrete point cloud data. This convex shape encapsulates the entire data distribution.
    \item Bounding Ball Construction: We then create a bounding ball around the convex hull. The surface of this ball, the hypersphere of which will serve as a pseudo-ground truth representation space for our classes.
    \item Characteristic High Dimension Surface Class Solver: For every data point, we engage in a process to project instances associated with same class label onto the surface of the sphere. This involves infimising over a chosen distance metric, to find a point on the sphere, for each class, that has the smallest distance to all the points of a given class. This process subsequently dictates the precise spatial coordinates of each class on the spherical surface. 
    \item Similarity Matrix Calculation: We then compute the distance from each data point to every designated point representing the Characteristic High Dimension Surface Class on the surface of the sphere, employing a selected distance metric. This distance serves as an indicator of the similarity between the data point and each class. Consequently, this will elucidate the extent to which each data point embodies the characteristics of each class.
    \item Visualization: Following the construction of a regular simple polygon with the Characteristic High Dimension Surface Class as vertices, the next step involves mapping the points from the high dimension to two dimensions. Each data point is positioned within the polygon based on weighted distances to the class points obtained from the previously calculated Similarity Matrix. These weights are determined through simple convex combination calculations. This mapping process ensures that each data point is accurately placed within the two-dimensional space, taking into account its proximity to the class points and the associated weights obtained from the Similarity Matrix.
\end{enumerate}

\begin{figure}[H]
  \centering
  \includegraphics[width=0.93\textwidth,height=\textheight,keepaspectratio]{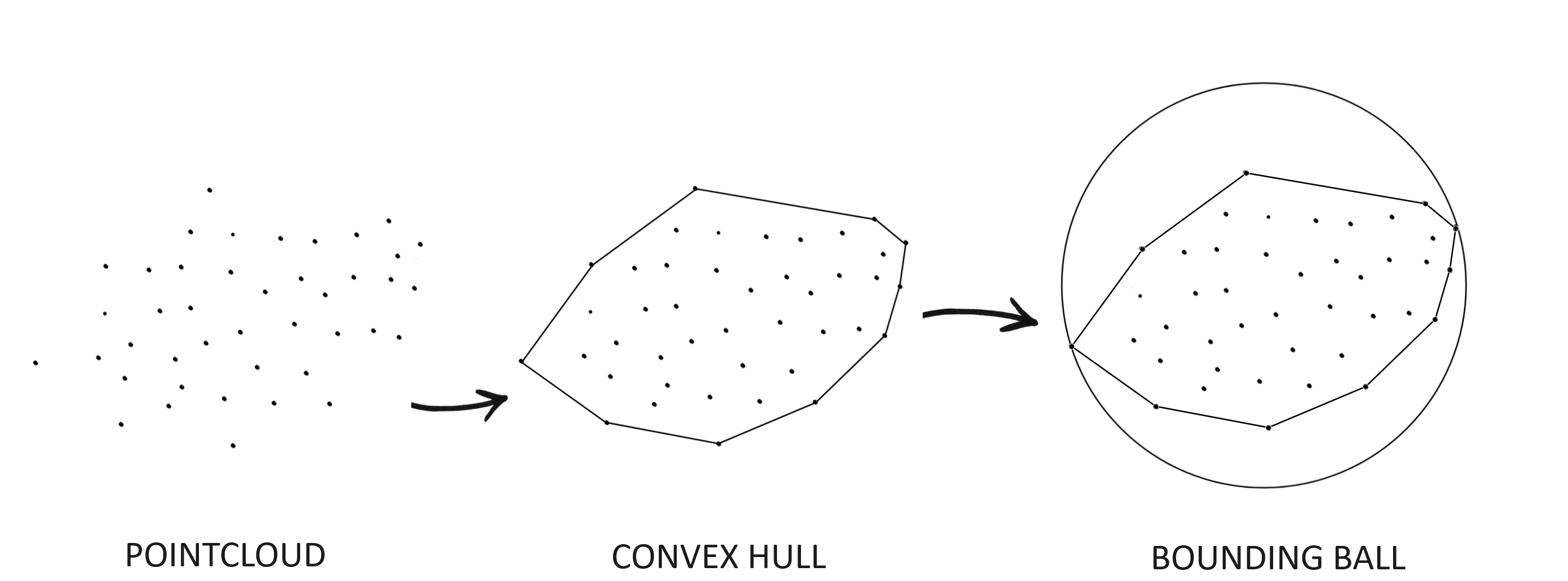}
  \caption{Pictorial Representation of Convex Hull and Bounding Ball over Data Points}
  \Description{Picture Showing Convex Hull and Bounding Ball}
  \label{ConvexHullBoundingBall-VIS}
\end{figure}

This approach offers several advantages:

\begin{itemize}

     \item Preservation of Global Structure and Handling Non-linear Relationships: Leveraging the convex hull and its bounding sphere and mapping into a regular polygon sustains an understanding of the broader data distribution, akin to PCA. In contrast to PCA's assumption of linearity, our method adeptly manages non-linear data structures by leveraging the distance metric employed to project data points onto the sphere's surface. This capability ensures the preservation of valuable non-linear relationships among data points via metric information, aligning with techniques like t-SNE and UMAP.
    
    \item Interpretable Visualisation: By mapping data points onto a pre-defined geometric shape (sphere) and utilizing class-specific points as anchors in another geometric shape (regular simple polygon), the visualization offers a more intuitive interpretation compared to the abstract lower-dimensional spaces generated by PCA and t-SNE. This can be particularly valuable when exploring data with well-defined class structures.

    \item Rapid Acquisition of Relationships from the Visualisation: This method explicitly incorporates class labels throughout the process, designating points as "Characteristic High Dimension Surface Class" to represent each class later within the visualization space. This approach facilitates a class-centric organization of the data, potentially enhancing separation and interpretability for class-related analysis. By capitalizing on the pre-defined structure of the polygon outlined by these classes, our method not only allows for efficient visualization but is also poised to be particularly advantageous for quickly understanding relationships present in smaller datasets.
\end{itemize}

\subsection{Underlying Assumptions:}

Our method operates under the assumption that each data point is expressible as an interpolation among various classes, suggesting the presence of convexity within the dataset allowing for the existence of intermediate points between classes.

Additionally, we posit an assumption regarding the homology of the overarching data distribution, asserting its homeomorphism to a "genus 0" (indicating the absence of holes) n-ball without internal cavities. This topological consideration ensures the continuity of the transformation mapping to the regular polygon, validating its applicability in our approach.

Furthermore, there is a reliance on having pre-defined class labels for data points. This limits its applicability to scenarios where such labels are not readily available or where the data structure itself lacks well-defined class boundaries.

\subsection{Rationale for Bounding Ball}

The significance of Riemannian manifolds in higher-dimensional setting arises when exploring and attempting to port over and extend familiar analytical tools traditionally employed in lower-dimensional settings. Mathematicians, in this pursuit, usually seek to make measurements in these high dimensional spaces and in this context, a crucial distinction emerges between a general manifold of higher dimension and a Riemannian manifold. Unlike a general manifold, the Riemannian manifold has a positive definite inner product defined on the tangent plane. This characteristic ensures that the conventional understanding of measurement, as applicable in the familiar one, two, and three dimensions setting, persists in the realm of high dimensional Riemannian manifolds.

Another supporting decision to incorporate a spherical envelope in our methodology stems from its ability to preserve the sense of perfect "mid-point of all classes" interpolation while facilitating an intuitive mapping onto a simple regular two dimensional polygon. At the heart of this choice lies the concept that a point located at the center of the sphere, when transported, maintains its status as the perfect mid-point, ensuring consistency in representation. We prove this at the end of the following section. This property aligns seamlessly with the underlying objective of creating visualizations that reflect the intrinsic relationships within high-dimensional data.

The geometric harmony offered by the sphere contributes to the uniform weightage and transport of points across, resulting in a more spatially balanced representation. Leveraging the sphere's properties like radius and centroid during our measures of distances ensures not only a consistent spatial relationship and increased speed of computations in higher dimensions but also enhances the interpretability of the resulting 2D visualization we aim to observe. In essence, the choice of a spherical envelope provides a more visually coherent and intuitively interpretable framework

\section{Theoretical Foundations for SPREV}

\subsection{Formalising the proposed approach}

\begin{definition} \textbf{Supervised Learning Dataset.}
In the context of Supervised Learning, consider a dataset \( \mathcal{D} \) comprising \( m \) datapoints, each characterized by a feature vector of size \( n \). This dataset can be formally represented as:
\[ \mathcal{D} = \{(x_i, y_i) \mid \forall i \in \mathbb{N}, \exists m \in \mathbb{N}, 1 \leq i \leq m\} \quad \Rightarrow\{(x_1, y_1), (x_2, y_2), \ldots, (x_m, y_m)\}\]
Here, \( x_i \) denotes the feature vector of the \(i\)-th datapoint, and \( y_i \) represents the corresponding output label or target variable. This representation encapsulates the input-output pairs within the dataset for all \( i \) belonging to the integer values in the range \( 1 \leq i \leq m \) in the supervised learning context.

\end{definition}

\begin{definition} \textbf{Class Labels.}
In the context of our task, we interpret the outputs \( y_i \) in \( \mathcal{D} \) as class labels belonging to a set \( C \), where each \( y_i \) corresponds to some \( c_i \). Formally, \( C = \{c_1, c_2, \ldots, c_q\}\), and the output-label relationship is defined by a mapping \[ f: Y \rightarrow C \quad \Rightarrow  \quad f(y_i) = c_j, \quad m \leq q, \quad 1 \leq i \leq m, \quad 1 \leq j \leq q, \quad \forall i, j \in \mathbb{N}  , \quad  \exists m, q \in \mathbb{N} \]
This mapping guarantees that each class label in \( C \) possesses at least one corresponding output label in \( Y \), and conversely, every element in \( Y \) maps to \( C \). The cardinality of such a class set \( C \) is at most as large as our output set, \( Y \), a reflection of practical considerations when establishing clustering groups in supervised learning to discern patterns. An alternative perspective on this procedure is as the establishment of class boundaries as a means to enhance the comprehension of the underlying data's structure in the context of our problem.
\end{definition}

\begin{corollary}
From the aforementioned definitions, it follows that the dataset \(\mathcal{D}\), originally represented as \[\mathcal{D} = \{(x_i, y_i) \mid \forall i \in \mathbb{N}, \exists m \in \mathbb{N} , 1 \leq i \leq m\}\] can alternatively be expressed as \[\mathcal{D} =  \{ (x_i, c_j) \mid \forall i, j\in \mathbb{N},\exists m,q \in \mathbb{N}, 1 \leq i \leq m, 1 \leq j \leq q, q\leq m\}\]

In this alternative representation, \(x_i\) denotes the feature vector of the \(i\)-th datapoint, and \(c_j\) signifies the \(j\)-th distinct class label. This formulation will now serve as the foundation for elucidating the subsequent steps in our process.

\end{corollary}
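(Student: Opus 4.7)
The plan is to proceed by direct substitution, since the corollary is essentially a re-indexing statement that unpacks the two preceding definitions. First I would observe that the only difference between the two representations of $\mathcal{D}$ lies in the second coordinate of each pair: the original uses the raw output label $y_i$, while the new form uses the class label $c_j$. Hence the task reduces to showing that, for every $(x_i, y_i) \in \mathcal{D}$, there is a unique $c_j \in C$ that can legitimately replace $y_i$, and that the quantifier conditions governing the index $j$ match those declared in the target expression.

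Next I would invoke the mapping $f : Y \to C$ from the Class Labels definition. Since $f$ is defined on all of $Y$, applying $f$ to each $y_i$ yields some $c_{j(i)} \in C$ with $1 \leq j(i) \leq q$, so the substitution $(x_i, y_i) \leadsto (x_i, c_{j(i)})$ is well-defined on every element of $\mathcal{D}$. Conversely, because the definition guarantees that each $c_j \in C$ has at least one preimage under $f$, no class label in $C$ is introduced spuriously: every $c_j$ appearing in the new representation genuinely corresponds to at least one datapoint in $\mathcal{D}$. These two observations together show that the set $\{(x_i, c_j)\}$ produced by the substitution coincides, as a set of input-class pairs, with the original $\mathcal{D}$ up to renaming of the second coordinate.

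Finally I would verify the quantifier constraints. The indices $i$ and $m$ are inherited verbatim from the original representation, so the conditions $i \in \mathbb{N}$, $m \in \mathbb{N}$, and $1 \leq i \leq m$ carry over. The conditions $j \in \mathbb{N}$, $q \in \mathbb{N}$, and $1 \leq j \leq q$ follow from the specification of $C = \{c_1, \dots, c_q\}$ in the Class Labels definition. The remaining inequality $q \leq m$ is precisely the cardinality bound $|C| \leq |Y|$ already stated as part of that definition, reflecting the fact that $f$ is surjective onto $C$ from a set of size at most $m$.

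I do not anticipate a genuine mathematical obstacle, as the corollary is a bookkeeping consequence of the definitions rather than a substantive claim. The only point requiring mild care is making the indexing $j = j(i)$ explicit so that the rewritten comprehension $\{(x_i, c_j)\}$ is unambiguous; the cleanest way is to note that $f$ partitions the indices $\{1, \dots, m\}$ into at most $q$ fibers and that the same underlying pairs are enumerated under either notation.
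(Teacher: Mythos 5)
Your proposal is correct and matches the paper's intent: the paper states this corollary without an explicit proof, treating it as an immediate bookkeeping consequence of the Supervised Learning Dataset and Class Labels definitions, and your direct-substitution argument via the mapping $f : Y \to C$ is exactly the elaboration that justifies it. Your verification of the quantifier constraints (in particular that $q \leq m$ follows from the stated cardinality bound $|C| \leq |Y|$) fills in the only detail the paper leaves implicit, so nothing further is needed.
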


\subsection{Convex Hull Generation Step}

\begin{definition} \textbf{Convex Hull}

The convex hull of a set \(X\) is defined as the smallest convex set that contains \(X\). Considering a finite set of points \(x_1, x_2, \ldots, x_n\), the convex hull, denoted as \(\text{conv}(X)\), is defined as:
\[ \text{conv}(X) = \left\{ \sum_{i=1}^{n} \lambda_i x_i \mid \exists n \in \mathbb{N}, x_i \in X,   \sum_{i=1}^{n} \lambda_i = 1, \lambda_i \geq 0 , \forall i \in \mathbb{N} , 1\leq i \leq n \right\} \]

In this definition:
\begin{itemize}
  \item \(x_i\) are the finite countable elements (points) in the set \(X\).
  \item \(n\) denotes the cardinality of the set \(X\), and thus signifies the number of elements, or points, participating in the convex combination.
  \item \(\lambda_i\) are non-negative real number coefficients, representing the weights of the points in the convex combination.
  \item The sum of the weights \(\sum_{i=1}^{n} \lambda_i\) equals 1, ensuring that it is a convex combination.
\end{itemize}

Geometrically, the convex hull is the smallest convex shape that encloses all points in the set \(X\), and any point within the convex hull can be expressed as a convex combination of points from \(X\).
\end{definition}
\begin{corollary}
If \(Y\) is any convex set that contains \(X\), then \(\text{conv}(X) \subseteq Y\)
\end{corollary}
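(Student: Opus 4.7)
The plan is to show that every element of $\text{conv}(X)$ lies in $Y$ by induction on the number of points appearing in the convex combination, using only the hypotheses that $X \subseteq Y$ and that $Y$ is convex. First I would fix an arbitrary $z \in \text{conv}(X)$, which by definition has the form $z = \sum_{i=1}^{n} \lambda_i x_i$ with $x_i \in X$, $\lambda_i \geq 0$, and $\sum_{i=1}^{n} \lambda_i = 1$, and then argue that $z \in Y$ irrespective of how large $n$ is.

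For the base cases, when $n = 1$ we simply have $\lambda_1 = 1$ and $z = x_1 \in X \subseteq Y$ by hypothesis. When $n = 2$, the expression $z = \lambda_1 x_1 + \lambda_2 x_2$ with $\lambda_1 + \lambda_2 = 1$ is a convex combination of two points of $Y$ (since $x_1, x_2 \in X \subseteq Y$), and by the very definition of a convex set every such combination lies in $Y$. This two-point case is the only place where I actually invoke convexity of $Y$; the rest of the argument just reuses it.

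For the inductive step, assume the claim holds for all convex combinations of at most $n-1$ points from $X$, and consider $z = \sum_{i=1}^{n} \lambda_i x_i$. I would first handle the degenerate situation $\lambda_n = 1$ (forcing all other $\lambda_i = 0$, so $z = x_n \in Y$) separately, and otherwise write $s = \sum_{i=1}^{n-1} \lambda_i = 1 - \lambda_n > 0$, so that
\[
z \;=\; s\sum_{i=1}^{n-1}\frac{\lambda_i}{s}x_i \;+\; \lambda_n x_n.
\]
The inner sum $w := \sum_{i=1}^{n-1}(\lambda_i/s)x_i$ is a convex combination of $n-1$ points of $X$ (the weights are nonnegative and sum to $1$), so by the inductive hypothesis $w \in Y$. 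Then $z = s\,w + \lambda_n x_n$ is a two-point convex combination of elements of $Y$, which lies in $Y$ by convexity. This closes the induction and yields $\text{conv}(X) \subseteq Y$.

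I do not expect any real obstacle here: the statement is essentially a restatement of the universal property of the convex hull, and the only subtlety is bookkeeping around the case $\lambda_n = 1$ (equivalently $s = 0$) when rescaling to apply the inductive hypothesis. One could alternatively bypass induction entirely by invoking that $\text{conv}(X)$ is the intersection of all convex sets containing $X$, but since the excerpt defines $\text{conv}(X)$ constructively via convex combinations, the inductive route is the most direct and stays within the definitions already provided.
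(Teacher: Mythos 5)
Your proof is correct and complete: the induction on the number of points in the convex combination, with the two-point case carrying the convexity of $Y$ and the careful handling of the degenerate $\lambda_n = 1$ (i.e.\ $s = 0$) case, is the standard argument for this universal property. The paper itself states this corollary without any proof, so there is nothing to compare against; your argument fills that gap and stays entirely within the constructive definition of $\text{conv}(X)$ that the paper provides.
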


\begin{definition}
Let \( \Delta_{(n-1)} \) denote the standard (n-1)-simplex in \( \mathbb{R}^n \), defined as the set of n standard unit vectors(points) \(s_1, \ldots, s_n\) such that \[\Delta_{(n-1)}  = \left\{ (s_1, \ldots, s_n) \in \mathbb{R}^n : s_i \geq 0,  \sum_{i=1}^{n} s_i = 1, \forall i \in \mathbb{N} , 1 \leq i \leq n, \exists n \in \mathbb{N} \}\right\} \].
\end{definition}

\begin{theorem}
\( \Delta_{(n-1)} \) is a compact set
\end{theorem}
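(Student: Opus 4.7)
\medskip

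\noindent\textbf{Proof proposal.} The plan is to invoke the Heine--Borel theorem: a subset of $\mathbb{R}^n$ (with the standard Euclidean topology) is compact if and only if it is both closed and bounded. So the proof reduces to establishing these two properties for $\Delta_{(n-1)}$.

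\smallskip

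For boundedness, I would observe that any point $(s_1,\ldots,s_n) \in \Delta_{(n-1)}$ satisfies $0 \le s_i \le 1$ for every $i$, since each $s_i$ is non-negative and the coordinates sum to $1$. Consequently $\Delta_{(n-1)} \subseteq [0,1]^n$, which immediately yields, for instance, $\|(s_1,\ldots,s_n)\|_2 \le \sqrt{n}$, so the simplex is contained in a ball of finite radius about the origin.

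\smallskip

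For closedness, I would write $\Delta_{(n-1)}$ as a finite intersection of preimages of closed sets under continuous maps. Specifically, define the continuous linear functional $\sigma : \mathbb{R}^n \to \mathbb{R}$ by $\sigma(s_1,\ldots,s_n) = \sum_{i=1}^n s_i$, and the continuous coordinate projections $\pi_i(s_1,\ldots,s_n) = s_i$. Then
\[
\Delta_{(n-1)} \;=\; \sigma^{-1}(\{1\}) \cap \bigcap_{i=1}^{n} \pi_i^{-1}\bigl([0,\infty)\bigr).
\]
Each $\pi_i^{-1}([0,\infty))$ is closed as the preimage of a closed set under a continuous map, and $\sigma^{-1}(\{1\})$ is similarly closed (singletons are closed in $\mathbb{R}$). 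A finite intersection of closed sets is closed, so $\Delta_{(n-1)}$ is closed in $\mathbb{R}^n$.

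\smallskip

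Combining the two, $\Delta_{(n-1)}$ is a closed and bounded subset of $\mathbb{R}^n$, and therefore compact by Heine--Borel. I do not anticipate a genuine obstacle here: each step is standard, and the only mild care needed is to make the continuity arguments explicit rather than appealing to intuition, so that the proof is self-contained and suitable for readers who may not immediately recall Heine--Borel's equivalence.
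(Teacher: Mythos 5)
Your proposal is correct and follows essentially the same route as the paper: both establish closedness by writing $\Delta_{(n-1)}$ as the intersection of the hyperplane $\sum_i s_i = 1$ with the non-negative orthant, establish boundedness by noting $\Delta_{(n-1)} \subseteq [0,1]^n$, and conclude compactness via Heine--Borel. Your version is somewhat more explicit (phrasing the closed sets as preimages of closed sets under continuous maps, and naming Heine--Borel, which the paper leaves implicit), but the argument is the same.
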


\begin{proof}
Consider the following hyperplane:
\[ H = \left\{ (s_1, \ldots, s_n) \in \mathbb{R}^n \mid \sum_{i=1}^{n} s_i = 1 , \forall i \in \mathbb{N} , 1 \leq i \leq n, \exists n \in \mathbb{N} \}\right\}  \]

By definition, \( \mathbb{R}^k_+ \) is closed, and \( H \) is also closed. The set \( \Delta_{(n-1)}\) can be re-expressed as the intersection of these closed sets: \( \Delta^n = \mathbb{R}^k_+ \cap H \).
Therefore, \( \Delta_{(n-1)}\) is also closed by De Morgan's theorem where an intersection of arbitrary number of closed sets is itself closed.\cite{rudin1976principles}

Consider the hypercube:
\[[0,1]^n\]

Given that \( \Delta_{(n-1)}\) the subset of the hypercube\cite{1807022}, it is therefore bounded.

\end{proof}

\begin{definition}
A vector space can be defined as a two-tuple \[(V, \mathbb{F})\] where \cite{strang2019introduction}
\cite{axler2015linear}

\begin{itemize}
    \item $u,v,w \in V$ , a set of elements called vectors
    \item $\mathbb{F}$ is a field (like  the real numbers $\mathbb{R}$ or complex numbers $\mathbb{C}$) whose elements $a,b \in \mathbb{F}$ are numbers that follow standard algebra rules\cite{4682000}. We refer to these elements as scalars.
    \item $+$ is an operation on $V$ called \textbf{vector addition} satisfying the following 4 axioms:
    \begin{itemize}
        \item \textbf{Commutativity}: $u + v = v + u$ for all $u, v \in V$.
        \item \textbf{Associativity}: $(u + v) + w = u + (v + w)$ for all $u, v, w \in V$.
        \item \textbf{Identity element (Existence of a zero vector)}: There exists a unique element $0 \in V$ such that $u + 0 = u$ for all $u \in V$.
        \item \textbf{Inverse elements (Existence of additive inverses)}: For every $u \in V$, there exists a unique element $-u \in V$ such that $u + (-u) = 0$.
    \end{itemize}
\item $\cdot$ is a \textbf{scalar multiplication} operation from $F \times V$ to $V$, satisfying the following 4 axioms:
    \begin{itemize}
        \item \textbf{Distributivity with respect to field addition}: $(a + b) \cdot u = a \cdot u + b \cdot u$ for all $a, b \in F$ and $u \in V$.
        \item \textbf{Compatibility with field multiplication}: $a \cdot (b \cdot u) = (ab) \cdot u$ for all $a, b \in F$ and $u \in V$.
        \item \textbf{Distributivity with respect to vector addition}: $a \cdot (u + v) = a \cdot u + a \cdot v$ for all $a \in F$ and $u, v \in V$.
        \item \textbf{Identity element}: $1 \cdot u = u$ for all $u \in V$, where 1 is the multiplicative identity (an element leaves unchanged every element when the operation is applied\cite{weissteinIdentity}) in $F$.
    \end{itemize}
    \item Vector addition: $\mathbf{u} + \mathbf{v} = \mathbf{w}$ $\implies$ $\mathbf{u}, \mathbf{v}, \mathbf{w} \in V$.
    \item Scalar multiplication: $a \cdot \mathbf{u} = \mathbf{v}$ $\implies$ $a \in F$ and $\mathbf{v}, \mathbf{u} \in V$.

\end{itemize}

\end{definition}

\begin{definition}
A \textbf{finite-dimensional vector space} \cite{halmos1974finite} is one where there exists a finite subset $\{v_1, v_2, \dots, v_n\} \subset V$ such that:

\begin{itemize}
    \item \textbf{Spanning}: Every vector $u \in V$ can be expressed as a linear combination of the basis vectors: $u = c_1 v_1 + c_2 v_2 + \dots + c_n v_n$ for some scalars $c_1, c_2, \dots, c_n \in F$.
    \item \textbf{Linear independence}: No non-trivial linear combination of the basis vectors is equal to the zero vector: $c_1 v_1 + c_2 v_2 + \dots + c_n v_n = 0$ implies $c_1 = c_2 = \dots = c_n = 0$.
\end{itemize}

The smallest such set satisfying both properties is called a \textbf{basis}, and its cardinality is the \textbf{dimension} of the space, denoted by $\dim(V)$.

Simply put, a vector space is called finite dimensional if it is spanned by a finite set of vectors.

\end{definition}

\begin{theorem} \textbf{All norms are equivalent in a finite dimensional space.}
Let \(V\) be a real or complex finite-dimensional vector space, and let \(\| \cdot \|_a\) and \(\| \cdot \|_b\) be norms on \(V\). Then \(\| \cdot \|_a\) and \(\| \cdot \|_b\) are equivalent, in the sense that there exist strictly positive constants \(C, D > 0\) such that for every vector \(v \in V\),
\[ C \|v\|_b \leq \|v\|_a \leq D \|v\|_b. \]

This can be proved via Heine–Borel as done by Steven G. Johnson \cite{johnson2012notes} or alternatively we can use the triangle inequality and Bolzano–Weierstrass theorem to construct an inequality to come to a similar conclusion that can be found in standard functional analysis texts.\cite{fitzpatrick2009advanced}
\end{theorem}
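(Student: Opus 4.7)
The plan is to reduce the general statement to comparison with a single reference norm and then exploit compactness of the Euclidean unit sphere, which is where finite-dimensionality crucially enters. Equivalence of norms is manifestly transitive and symmetric, so it suffices to prove that an arbitrary norm $\|\cdot\|$ on $V$ is equivalent to a fixed reference norm; applying this separately to $\|\cdot\|_a$ and $\|\cdot\|_b$ and then chaining the resulting inequalities yields the constants $C, D$ in the statement. For the reference norm I fix a basis $\{e_1, \ldots, e_n\}$ of $V$ (which exists by finite-dimensionality) and, writing each $v \in V$ uniquely as $v = \sum_{i=1}^{n} c_i e_i$, define $\|v\|_2 = \left(\sum_{i=1}^{n} |c_i|^2\right)^{1/2}$.

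The easier half is the upper estimate $\|v\| \leq M \|v\|_2$. Expanding $v$ in the chosen basis and applying the triangle inequality followed by Cauchy--Schwarz gives $\|v\| \leq \sum_{i=1}^{n} |c_i|\, \|e_i\| \leq M \|v\|_2$ with $M = \left(\sum_{i=1}^{n} \|e_i\|^2\right)^{1/2} > 0$. A useful corollary of the same estimate, via the reverse triangle inequality, is that the map $v \mapsto \|v\|$ is Lipschitz continuous with respect to the $\|\cdot\|_2$-topology, since $\bigl|\,\|u\| - \|v\|\,\bigr| \leq \|u - v\| \leq M \|u - v\|_2$.

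The lower bound $\|v\| \geq m \|v\|_2$ is the substantive step and the main obstacle. I consider the Euclidean unit sphere $S = \{v \in V : \|v\|_2 = 1\}$; identifying $V$ with $\mathbb{R}^n$ or $\mathbb{C}^n$ through the chosen basis, $S$ is closed and bounded, hence compact by Bolzano--Weierstrass / Heine--Borel. Because $\|\cdot\|$ is continuous with respect to $\|\cdot\|_2$ (from the previous paragraph), the extreme value theorem supplies a minimizer $v_0 \in S$ with $m := \|v_0\|$. Crucially $m > 0$: membership of $v_0$ in $S$ forces $v_0 \neq 0$, and positive-definiteness of a norm rules out $\|v_0\| = 0$. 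Rescaling shows that for any nonzero $v \in V$ the vector $v / \|v\|_2$ lies in $S$, so $\|v\| \geq m \|v\|_2$, and the bound is trivial at $v = 0$.

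Combining the two estimates yields $m \|v\|_2 \leq \|v\| \leq M \|v\|_2$ for every $v \in V$ and every norm $\|\cdot\|$. Applied to $\|\cdot\|_a$ with constants $m_a, M_a$ and to $\|\cdot\|_b$ with $m_b, M_b$, eliminating $\|\cdot\|_2$ produces $C \|v\|_b \leq \|v\|_a \leq D \|v\|_b$ with $C = m_a / M_b$ and $D = M_a / m_b$, both strictly positive as required. The real sticking point is the compactness step: it is a genuinely finite-dimensional input, since Riesz's lemma implies that the closed unit ball of any infinite-dimensional normed space fails to be compact, so the hypothesis $\dim(V) < \infty$ enters the argument irreducibly at the invocation of Heine--Borel.
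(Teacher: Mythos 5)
Your proof is correct and follows precisely the route the paper gestures at: it combines the triangle-inequality/Cauchy--Schwarz upper bound with compactness of the Euclidean unit sphere via Heine--Borel/Bolzano--Weierstrass and the extreme value theorem, which is exactly the argument in the references the paper cites (Johnson's notes and standard functional analysis texts). Since the paper itself only states the theorem and defers to those sources, your write-up actually supplies the details the paper omits, and does so correctly, including the key observation that positivity of the minimum on the sphere is where finite-dimensionality is irreducibly used.
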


\begin{theorem} 
The convex hull of a finite set is compact.
\end{theorem}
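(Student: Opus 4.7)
The plan is to realize $\text{conv}(X)$ as the continuous image of the standard simplex $\Delta_{(n-1)}$ and then invoke the classical fact that continuous images of compact sets are compact. Since the compactness of $\Delta_{(n-1)}$ has already been established in the excerpt, all the hard preparatory work is in place, and the argument reduces to producing the right parameterization and verifying its continuity.

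First I would fix a finite set $X = \{x_1, \ldots, x_n\} \subset V$ and define the parameterization map $\phi : \Delta_{(n-1)} \to V$ by
\[
\phi(\lambda_1, \ldots, \lambda_n) \;=\; \sum_{i=1}^{n} \lambda_i\, x_i .
\]
By the definition of the convex hull given earlier, the image $\phi(\Delta_{(n-1)})$ is exactly $\text{conv}(X)$, so surjectivity onto the target set is immediate. Next I would verify that $\phi$ is continuous. Equipping $V$ with any norm $\|\cdot\|$ (the choice is immaterial by the equivalence-of-norms theorem just proved, since $\text{span}(X)$ is finite-dimensional), for any $\lambda, \mu \in \Delta_{(n-1)}$ one obtains
\[
\|\phi(\lambda) - \phi(\mu)\| \;=\; \Bigl\| \sum_{i=1}^n (\lambda_i - \mu_i)\, x_i \Bigr\| \;\leq\; \Bigl(\max_{1 \leq i \leq n} \|x_i\|\Bigr) \sum_{i=1}^n |\lambda_i - \mu_i|,
\]
which shows that $\phi$ is Lipschitz continuous with respect to the $\ell^1$ metric on $\mathbb{R}^n$.

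To finish, I would combine the three ingredients: $\Delta_{(n-1)}$ is compact (by the earlier theorem), $\phi$ is continuous (by the Lipschitz bound above), and $\phi(\Delta_{(n-1)}) = \text{conv}(X)$ (by definition). The standard topological lemma that the continuous image of a compact set is compact then yields the desired conclusion.

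The main obstacle, such as it is, lies in knowing which topology to place on $V$: without a canonical norm, continuity of $\phi$ is not a priori well-defined. The preceding theorem on equivalence of norms in finite-dimensional spaces dissolves this difficulty, guaranteeing that any choice of norm produces the same notion of compactness and continuity. Once this conceptual point is noted, the remainder of the proof is a short verification, and no further analytical input is required.
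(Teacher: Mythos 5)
Your proposal is correct and follows essentially the same route as the paper: both realize $\text{conv}(X)$ as the image of the compact standard simplex $\Delta_{(n-1)}$ under the map $\phi(\lambda)=\sum_i \lambda_i x_i$ and conclude via the fact that continuous images of compact sets are compact. The only cosmetic difference is that you verify continuity with an explicit Lipschitz bound, whereas the paper appeals to the continuity of addition and scalar multiplication; both are valid and rest on the same norm-equivalence observation.
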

\begin{proof}

Let \(\{x_1, \ldots, x_n\} \in X\) be a finite subset of some Banach space \(B\). Considering the standard simplex in \(\mathbb{R}^n\), and the following map
\[ \phi: (s_1, \ldots, s_n) \in \Delta_{(n-1)} \mapsto (s_1x_1 + \ldots + s_nx_n) \in X \subset B, \]

\begin{itemize}
    \item Since \textbf{addition} is \textbf{continuous}, (because of triangle inequality)
    \item and \textbf{multiplication} with scalars is \textbf{continuous} due to positive homogeneity of the norm (WLOG, we can obtain this from the preceding theorem, given norm equivalence in finite dimensional space) \cite{1509256}
\end{itemize}

Hence, $\phi$ (which is a composition of the above two continuous functions) is also \textbf{continuous}, and its image is \(\text{conv}\{x_1, \ldots, x_n\}\). $\because$ \(S\) is compact $\Rightarrow$ \(\text{conv}\{X\}\) is compact.\cite{22608}

\end{proof}

\begin{corollary} \textbf{Convex Envelope of Dataset, $\mathcal{M}^n$, is compact.}
For a given dataset \( \mathcal{D} \), consisting of \( m \) discrete points in a space of finite dimension \( n \), the convex hull, \( \mathcal{M}^n \) that encapsulates \( \mathcal{D} \) in the space is the smallest convex set that contains it. From the preceeding theorem, we can easily see that this convex set is compact.
\end{corollary}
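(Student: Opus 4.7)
The plan is to recognise the corollary as an immediate specialisation of the preceding theorem (``The convex hull of a finite set is compact''), so the only real work is verifying that the hypotheses of that theorem are satisfied in the supervised-learning setting just formalised. First I would observe that the dataset $\mathcal{D}=\{(x_1,y_1),\ldots,(x_m,y_m)\}$ contributes a finite set of feature vectors $X=\{x_1,\ldots,x_m\}\subset\mathbb{R}^n$, and that $\mathcal{M}^n$ is by definition nothing other than $\mathrm{conv}(X)$.

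Next I would check the ambient-space assumption. The preceding theorem is stated for a finite subset of a Banach space $B$; here $B=\mathbb{R}^n$ with any norm (for definiteness, the Euclidean norm induced by the inner product used throughout the paper). Since $n$ is finite, $\mathbb{R}^n$ is a finite-dimensional normed vector space, hence a Banach space, and by the norm-equivalence theorem already invoked it does not matter which norm is fixed. This means the hypotheses of the preceding theorem apply verbatim to our $X$.

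Having matched the hypotheses, I would simply invoke the theorem: the continuous image $\phi(\Delta_{(m-1)})=\mathrm{conv}(X)$ of the compact standard simplex $\Delta_{(m-1)}$ under the continuous affine map $\phi(s_1,\ldots,s_m)=\sum_{i=1}^{m}s_i x_i$ is compact, and this image is precisely $\mathcal{M}^n$. Therefore $\mathcal{M}^n$ is compact, completing the proof.

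I do not anticipate a substantive obstacle, as the corollary is deliberately a direct translation of the preceding general theorem into the notation of the dataset. The only subtlety worth flagging in the writeup is ensuring that the reader sees the reduction explicitly, namely that finiteness of $\mathcal{D}$ (cardinality $m$) and finiteness of the ambient dimension $n$ together guarantee both the compactness of the source simplex and the continuity of $\phi$, which are the two ingredients the preceding theorem actually consumes.
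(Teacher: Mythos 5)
Your proposal is correct and follows essentially the same route the paper intends: the corollary is a direct specialisation of the preceding theorem (compactness of the convex hull of a finite set, via the continuous image of the compact standard simplex), and your only additions — verifying that the $m$ feature vectors form a finite subset of the Banach space $\mathbb{R}^n$ and that norm equivalence makes the choice of norm immaterial — are exactly the hypothesis checks the paper leaves implicit. No gap here; if anything, your writeup is more explicit than the paper's one-line justification.
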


\subsection{Bounding Ball Generation Step}

\begin{corollary} \textbf{Existance of $n$-ball Encapsulation of Dataset.}
For a given compact convex hull \( \mathcal{M}^n \) of dimension \( n \), there exists a hypersphere \( B^n \) that completely covers and encapsulates the entire convex hull.
\end{corollary}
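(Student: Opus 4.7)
The plan is to exploit the compactness of $\mathcal{M}^n$ established in the preceding corollary, together with the Heine--Borel characterisation of compact subsets of $\mathbb{R}^n$, to produce an enclosing ball explicitly. Since $\mathcal{M}^n$ sits inside a finite-dimensional normed space and all norms on such a space are equivalent (by the theorem cited earlier in the section), we may work with the standard Euclidean norm without loss of generality, and compactness of $\mathcal{M}^n$ then forces it to be bounded in the usual metric sense.

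First I would fix an arbitrary reference point $c \in \mathbb{R}^n$: the most natural choices are either the origin, the centroid of the underlying finite dataset $\mathcal{D}$, or the Chebyshev centre of $\mathcal{M}^n$. I would then define the candidate radius by
\begin{equation*}
R \;=\; \sup_{x \in \mathcal{M}^n} \| x - c \|.
\end{equation*}
The map $x \mapsto \|x - c\|$ is continuous (again by the triangle inequality, as already invoked in the proof that $\phi$ is continuous), and $\mathcal{M}^n$ is compact, so by the extreme value theorem this supremum is finite and in fact attained. Hence $R < \infty$ and the closed ball
\begin{equation*}
B^n \;=\; \bigl\{\, x \in \mathbb{R}^n : \| x - c \| \leq R \,\bigr\}
\end{equation*}
satisfies $\mathcal{M}^n \subseteq B^n$ by construction, which is exactly the claim.

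There is essentially no hard step: the result is a direct corollary of compactness plus the extreme value theorem, and the only delicate point is ensuring that one chooses a center $c$ for which the supremum makes sense, which any fixed point in $\mathbb{R}^n$ does. If one wanted the minimal enclosing ball specifically, the main obstacle would shift to proving uniqueness and optimality of the Chebyshev centre, which one can handle by a standard strict convexity argument on the function $c \mapsto \sup_{x \in \mathcal{M}^n}\|x - c\|$ together with Jung's theorem to bound $R$ in terms of the diameter of $\mathcal{M}^n$; however, since the statement asks only for existence of some enclosing hypersphere, this refinement is not required and the construction above suffices.
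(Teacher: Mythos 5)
Your proof is correct, and it is sufficient for the corollary as stated, but it takes a genuinely more elementary route than the paper. You fix an arbitrary centre $c$, observe that $x \mapsto \|x-c\|$ is continuous on the compact set $\mathcal{M}^n$, and conclude via the extreme value theorem that $R = \sup_{x \in \mathcal{M}^n}\|x-c\|$ is finite and attained, so the closed ball $B(c,R)$ encloses $\mathcal{M}^n$ by construction. The paper instead proves a strictly stronger fact: it defines $\bar{R}$ as the set of all radii $r$ for which some centre yields an enclosing ball, sets $\rho = \inf(\bar{R})$, extracts a sequence of near-optimal centres $c_i$ with radii $\rho_i \in [\rho, \rho + \tfrac{1}{i}]$, shows this sequence is bounded, and uses Bolzano--Weierstrass in the finite-dimensional space to pass to a convergent subsequence whose limit $a$ satisfies $\mathcal{M}^n \subset B(a,\rho)$ --- that is, the paper establishes existence of a \emph{minimal} enclosing ball, not merely some enclosing ball. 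Your approach buys brevity and avoids the subsequence machinery entirely, which is all the stated corollary demands; the paper's approach buys attainment of the optimal radius, which is the object the later algorithmic sections (smallest bounding sphere, and its hypercube-circumscription surrogate) implicitly care about. You correctly anticipated this distinction in your closing remarks, noting that optimality and uniqueness of the Chebyshev centre would require additional strict-convexity arguments; the only thing you did not anticipate is that the paper chooses to prove attainment of the infimum anyway, even though the corollary does not ask for it.
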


\begin{proof}
Let $E$ be a \textbf{finite-dimensional} vector space, we denote the closed ball with center, $c$ and radius $r$ by:

\[ {B}(c, r) = \left\{ y \in E \mid \|y - c\| \leq r , \quad \exists r > 0 ,\quad \exists c \in E \right\}  \]

$\mathcal{M}^n \ \subset E $ where $\mathcal{M}^n \neq \emptyset$ and $\mathcal{M}^n$ is compact.

Since $\mathcal{M}^n$ is compact and hence, bounded, then the set 
\[\bar{R} = \{r \in \mathbb{R}_{+}^{*} \mid \exists c \in E, \quad \mathcal{M}^n  \subset {B}(c, r)\}\] 

is not empty, and we can consider $\rho = \inf(\bar{R})$.

For every $i \in \mathbb{N}^{*}$, there exist $\rho_i \in [\rho, \rho + \frac{1}{i}]$ and $c_i \in E$ such that $\mathcal{M}^n  \subset {B}(c_i, \rho_i)$.

Fix $m \in \mathcal{M}^n$. For all $i \ge 1$, we have:
$$
\Vert x_i \Vert - \Vert m \Vert \le \Vert c_i - m \Vert \le \rho_i \le \rho + \frac{1}{i} \le \rho + 1.
$$

So, the sequence $(c_i)$ is bounded. Since $\dim(E) = n < \infty$, there exists a convergent subsequence $(c_{\varphi(i)})$.

Set $a = \lim_{i\to\infty}c_{\varphi(i)}$. Taking the limit ($i \to \infty$) in $\Vert c_{\varphi(i)} - m \Vert \le \rho_i$, we get $\Vert a - m \Vert \le \rho$\cite{2100328}

$\Rightarrow$ $\mathcal{M}^n \subset {B}(a, \rho)$.
\end{proof}

\begin{definition} \textbf{Characteristic High Dimension Surface Class.} 
The boundary of the n-ball, we derived earlier, is defined by the hypersphere:

\[S(c, r) = \left \{ y \in E \mid \|y - c\| = r, \quad \exists r > 0 ,\quad \exists c \in E \right\}\]

where c is the center and r is the radius.

To find each Characteristic High Dimension Surface Class, we do the following:

\[P_\mathbf{C} = \inf_{\substack{z \in S(c,r)\\ \forall x \in \mathbf{C}}} ||z-x||\]

where x are the points in the dataset that belong to class $\mathbf{C}$ and z are the points that lie on the surface of the hypersphere.
\end{definition}

\subsection{Visualisation Step}

\begin{definition} \textbf{Characteristic $\mathbb{R}^2$ Vertex Class.}
Given the cardinality of C to be $n_c$, then the vertex class are defined as the vertices of a \textbf{regular simple} $n_c$-gon to which we do a one-to-one mapping of each $P_c$ to. 
\end{definition}

\subsection{Convexity Property Preservation (Preserving the existence of an "Interpolation Space")}

\begin{definition}
A regular simple polygon in $\mathbb{R}^2$, is characterized by the simultaneous fulfillment of two distinctive properties: 

\begin{enumerate}
\item \textbf{Equilaterality}, signifying uniformity in the lengths of all its sides.s
\item \textbf{Direct Equiangularity}, denoting the equality in the measurement of all internal angles
\end{enumerate}
\end{definition}

\begin{corollary}
A regular simple polygon in $\mathbb{R}^2$ as defined by the above two properties, inherently possesses convexity.    
\end{corollary}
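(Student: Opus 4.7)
The plan is to reduce convexity to a check on interior angles: a simple polygon in $\mathbb{R}^2$ is convex if and only if every interior angle is at most $\pi$, so it suffices to show the common interior angle of a regular simple $n_c$-gon is strictly less than $\pi$.

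First I would invoke the standard interior-angle-sum formula for a simple polygon, namely that the $n_c$ interior angles sum to $(n_c - 2)\pi$. Combining this with the direct equiangularity hypothesis (all interior angles have equal measure $\theta$), I obtain
\[
n_c \cdot \theta \;=\; (n_c - 2)\pi \qquad \Longrightarrow \qquad \theta \;=\; \pi - \frac{2\pi}{n_c}.
\]
Next, since $n_c \geq 3$ (a polygon in the sense used here, and in particular the $n_c$-gon arising from $|C| \geq 3$ classes in the methodology), we have $\frac{2\pi}{n_c} > 0$ and hence $\theta < \pi$.

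Finally, I would appeal to the classical equivalence that a simple polygon with every interior angle strictly less than $\pi$ contains no reflex vertex, and any such polygon equals the convex hull of its vertex set (equivalently, the boundary intersects every line in at most two points). Note that equilaterality is not actually needed in this argument — direct equiangularity together with simplicity already forces convexity — but it is consistent with the stated hypotheses. The only delicate ingredient is the interior-angle-sum formula itself, which relies on the Jordan curve theorem or an inductive triangulation of simple polygons; I would simply cite this as a standard result from planar geometry rather than reprove it. The remainder of the argument is a one-line computation.
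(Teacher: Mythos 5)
Your proposal is correct and follows essentially the same route as the paper: both arguments combine the interior-angle-sum formula $\sum \beta_i = (n-2)\pi$ with direct equiangularity to conclude that each interior angle is at most $\pi$, which characterizes convexity for a simple polygon. The only cosmetic differences are that the paper derives the angle-sum formula from its exterior-angle lemmas rather than citing it, and that your observation that equilaterality is not needed (and your use of $n_c \geq 3$, giving strict convexity) slightly sharpens the statement.
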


\begin{proof}
\begin{lemma}
For any simple polygon, for any vertex in a polygon, the sum of its exterior angle (\(\alpha\)) and corresponding interior angle (\(\beta\)) at the $i$-th vertex are given by:
\[ \alpha_i + \beta_i  = \pi \quad \forall i \]

where $\pi$ radians equal to $180^\circ$ 
\end{lemma}

\begin{lemma}
When considering only one of the two external angles at each vertex, for any simple polygon, the sum of its exterior angles, represented by $\alpha$, is given by:
\[ \sum_{i=1}^{n} \alpha_i = 2\pi , \quad n \in \mathbb{N}, n\geq 2 \]
Here, $n$ is the number of vertices in the polygon, and $\alpha_i$ is the external angle at the $i$-th vertex. The sum of all external angles, $\alpha_i$, equals $2\pi$ radians ($360^\circ$)\cite{posamentier2012secrets}

\end{lemma}
In the spirit of the colloquial expression of "walking around in circle(s)", an intuitive way to think about this involves envisioning a scenario where traversing the perimeter of a simple polygon in $\mathcal{R}^2$ eventually returns one to the initial point they started at. This behavior stems from a simple polygon being characterized as a closed, non-self-intersecting continuous curve. In an alternate interpretation, it is considered homeomorphic to a circle without twisting.

\begin{lemma}
A simple polygon is termed a convex polygon if all of its interior angles, represented by \(\beta\), satisfy 

\[\beta_i \leq  \pi \ \quad \forall i \] 

and strictly convex if 

\[\beta_i <  \pi \ \quad \forall i\] 
\end{lemma}

Reformulating the first lemma above would yield the following:

\begin{equation}
\sum_{i=1}^{n} (\alpha_i + \beta_i) = n\pi  
\end{equation}

and subtracting this from the second would result in

\begin{equation}
\sum_{i=1}^{n} \beta_i = (n-2)\pi  
\end{equation}

As a consequence of Direct Equiangularity, this means that for a regular simple polygon in $\mathbb{R}^2$, this would yield 

\begin{equation}
\beta_i = \frac{(n-2)}{n}\pi , \quad  \forall i 
\end{equation}

Finally considering the third lemma, 

\begin{equation}
\beta_i = \frac{(n-2)}{n}\pi \leq \pi , \quad  \forall i  \quad \because  n \geq 2 , n \in \mathbb{N}
\end{equation}

Implying convexity as desired.
\end{proof}

\subsection{Centroid Distance Preservation (Preserving "Global Centering")}

\begin{corollary}
In a regular simple polygon in $\mathbb{R}^2$ the centeroid \(R_c\) is equidistant from all its vertices. 
\end{corollary}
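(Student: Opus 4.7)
The plan is to exhibit an $n$-fold rotational symmetry of the regular polygon about $R_c$ and then exploit the fact that rotations preserve distances from their center.

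First I would construct a rotation $\rho$ of $\mathbb{R}^2$ by angle $2\pi/n$ that cyclically permutes the vertices, $V_i \mapsto V_{i+1}$ (indices modulo $n$). By equilaterality every side has a common length $s$, and combining the first lemma of the preceding subsection, $\alpha_i + \beta_i = \pi$, with the interior angle computation $\beta_i = (n-2)\pi/n$, each exterior angle equals $\alpha_i = 2\pi/n$. Walking along the boundary, the direction vector of the edge $V_i V_{i+1}$ differs from that of $V_{i-1} V_i$ by exactly this constant angle $2\pi/n$, independent of $i$. Hence the unique orientation-preserving isometry of $\mathbb{R}^2$ sending the directed segment $V_i V_{i+1}$ to $V_{i+1} V_{i+2}$ is a rotation by $2\pi/n$, and because the rotation angle and segment length are independent of $i$, it is the same rotation $\rho$ at every step.

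Next, I would identify $R_c$ as the center of $\rho$. Since $R_c$ is defined as the centroid, i.e.\ the arithmetic mean $\tfrac{1}{n}\sum_{i=1}^n V_i$, and $\rho$ induces a permutation of the vertex set, we get
\begin{equation*}
\rho(R_c) \;=\; \frac{1}{n}\sum_{i=1}^n \rho(V_i) \;=\; \frac{1}{n}\sum_{i=1}^n V_{i+1} \;=\; R_c.
\end{equation*}
A planar rotation by an angle in $(0, 2\pi)$ has a unique fixed point, so $R_c$ is necessarily the center of $\rho$.

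Finally, isometries fix distances to their fixed point, so $\|V_{i+1} - R_c\| = \|\rho(V_i) - \rho(R_c)\| = \|V_i - R_c\|$, and iterating over $i$ yields $\|V_1 - R_c\| = \cdots = \|V_n - R_c\|$ as required. The main obstacle is the first step: one must justify carefully that a single rotation realises the vertex shift, which amounts to verifying that the common turning angle $2\pi/n$ from equiangularity, together with the common side length from equilaterality, uniquely determines this isometry across every pair of consecutive edges. Everything else reduces to standard properties of planar isometries and the definition of the centroid.
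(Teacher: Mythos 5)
Your proof is correct in substance but takes a genuinely different route from the paper. The paper's proof simply invokes, as an unproved lemma, the fact that every regular polygon is cyclic, places all vertices on the circumscribed circle $\bar{C}(c,r)$, and identifies the centroid $R_c$ with the circumcenter $c$ (also without justification). You instead \emph{derive} both facts: you build the $2\pi/n$ rotational symmetry $\rho$ from equilaterality and equiangularity, observe that the vertex centroid is fixed by $\rho$ because $\rho$ permutes the vertices, conclude that $R_c$ is the unique fixed point and hence the center of rotation, and then read off equidistance from the fact that rotations preserve distance to their center. Your argument is therefore more self-contained and fills precisely the two gaps the paper leaves open (why a circumscribed circle exists, and why the centroid sits at its center); the price is the extra work of constructing $\rho$.

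One step needs tightening, and it is the one you yourself flag. The clause ``because the rotation angle and segment length are independent of $i$, it is the same rotation $\rho$ at every step'' is not a valid justification on its own: two rotations with the same angle but different centers are distinct isometries, so equality of angles and side lengths does not force equality of the maps. The clean repair is to let $\rho_i$ be the unique orientation-preserving isometry sending the directed edge $V_iV_{i+1}$ to $V_{i+1}V_{i+2}$ and to check that $\rho_i = \rho_{i+1}$ by showing they agree on two distinct points: both send $V_{i+1}$ to $V_{i+2}$, and $\rho_i(V_{i+2})$ is the point at distance $s$ from $V_{i+2}$ in the direction obtained by turning the direction of $V_{i+1}V_{i+2}$ through the exterior angle $2\pi/n$ — which, by equilaterality and equiangularity with a consistent traversal orientation (the same convention the paper uses when summing exterior angles to $2\pi$), is exactly $V_{i+3} = \rho_{i+1}(V_{i+2})$. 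Since an orientation-preserving planar isometry is determined by its values at two distinct points, all the $\rho_i$ coincide, and the rest of your argument goes through unchanged.
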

\emph{This is in line with the idea that the surface points on the hypersphere are equidistant from the n-ball's centroid.}
\begin{proof}

\begin{lemma}
    A general regular polygon (whether convex or star) is a cyclic polygon.
\end{lemma}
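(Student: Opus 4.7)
The plan is to exhibit a single point $O$ equidistant from every vertex of the regular polygon; since the cyclic-polygon property is precisely the existence of a common circumscribing circle, producing such an $O$ suffices. The strategy is to define $O$ from the perpendicular bisectors of two adjacent edges and then propagate equidistance around all $n$ vertices by induction via congruent isosceles triangles, feeding on equilaterality to transfer side lengths and on Direct Equiangularity to transfer interior angles.

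Concretely, I would label the vertices $V_1, V_2, \ldots, V_n$ in cyclic order and let $O$ be the intersection of the perpendicular bisectors of the edges $V_1V_2$ and $V_2V_3$. Three consecutive vertices of a simple polygon are non-collinear (the interior angle at $V_2$ is strictly less than $\pi$), so these two bisectors are not parallel and meet in a unique point. By construction $|OV_1| = |OV_2| = |OV_3|$, so $\triangle OV_1V_2 \cong \triangle OV_3V_2$ by SSS, and writing $\theta := \angle OV_2V_1 = \angle OV_2V_3$ the interior angle at $V_2$ is $2\theta$; by Direct Equiangularity every interior angle of the polygon equals $2\theta$.

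The inductive claim is: for every $i \geq 2$, one has $|OV_i| = |OV_1|$ and $\angle OV_iV_{i+1} = \theta$. Given the claim at $i$, compare $\triangle OV_iV_{i+1}$ with $\triangle OV_iV_{i-1}$: they share the side $OV_i$, satisfy $|V_iV_{i+1}| = |V_iV_{i-1}|$ by equilaterality, and have equal included angles $\theta$ (one by hypothesis, the other from the isosceles-base-angle equality in $\triangle OV_{i-1}V_i$, whose equal sides $|OV_{i-1}| = |OV_i|$ come from the previous step). SAS then yields $|OV_{i+1}| = |OV_{i-1}| = |OV_1|$. Now $\triangle OV_iV_{i+1}$ is itself isosceles with $|OV_i| = |OV_{i+1}|$, so its base angles match and $\angle OV_{i+1}V_i = \theta$. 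Finally the interior angle $\angle V_iV_{i+1}V_{i+2} = 2\theta$ decomposes at $V_{i+1}$ as $\angle OV_{i+1}V_i + \angle OV_{i+1}V_{i+2}$ (with $O$ on the interior side), giving $\angle OV_{i+1}V_{i+2} = \theta$ and closing the induction. After at most $n$ steps every vertex lies on the circle centered at $O$ of radius $|OV_1|$.

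The main obstacle I foresee is the star-polygon case, which the statement explicitly includes: for a regular star such as a pentagram the ray $V_iO$ need not lie inside the interior angle at $V_i$, so the decomposition $\angle V_{i-1}V_iV_{i+1} = \angle OV_iV_{i-1} + \angle OV_iV_{i+1}$ can fail and the bookkeeping of $\theta$ must be redone with directed angles modulo $\pi$. The cleanest remedy, which I would present as a parallel argument, is to invoke the rotational symmetry inherent in the definition of a regular polygon: any such polygon is invariant under a rotation $\rho$ of order $n$ cyclically permuting the vertices, and its unique fixed point $O$ is automatically equidistant from all vertices because rotations are isometries that fix $O$, yielding $|OV_i| = |\rho(O)\,\rho(V_{i-1})| = |OV_{i-1}|$ for every $i$ and covering the convex and star cases in one stroke.
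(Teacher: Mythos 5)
The paper never actually proves this lemma: it is stated inline within the proof of the centroid--equidistance corollary and immediately taken for granted, so your proposal supplies an argument where the source offers none. Your two-pronged approach is sound. The perpendicular-bisector induction is the classical synthetic proof that an equilateral, equiangular simple polygon is cyclic, and it matches the paper's own metric definition of a regular simple polygon (equilaterality plus direct equiangularity), which is the only case the surrounding corollary actually needs. The one step you should tighten is the repeated decomposition $\angle V_iV_{i+1}V_{i+2} = \angle OV_{i+1}V_i + \angle OV_{i+1}V_{i+2}$: you assert parenthetically that $O$ lies on the interior side at each vertex, but during the induction $O$ is not yet known to be the center of the polygon, so this needs a short justification (e.g., that the ray from $V_{i+1}$ through $O$ coincides with the internal bisector because $O$ and $V_{i+1}$ both lie on the perpendicular bisector of $V_iV_{i+2}$ once $|OV_i|=|OV_{i+2}|$ is known, or a convexity argument placing $O$ in each angular cone). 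You correctly identify that this bookkeeping breaks for star polygons.

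For the star case your rotational-symmetry argument is the right remedy, but note a mild circularity risk relative to the paper's definitions: the paper defines regularity metrically, and the existence of an order-$n$ rotation fixing the polygon is itself equivalent to what you are trying to prove (an isometry of the plane of finite order is a rotation about a point, and that fixed point is then automatically equidistant from the vertex orbit). If you take the standard definition of a general regular polygon, convex or star, as one whose symmetry group acts transitively on flags, the symmetry argument is immediate and subsumes the convex case; if you insist on the equilateral-plus-equiangular definition, the induction is doing the real work and the symmetry argument should be presented as a corollary of it rather than an independent route. Either way, your proof is more than the paper provides, and the convex branch alone fully supports the corollary the paper derives from the lemma.
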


A direct consequence of this lemma is that all vertices of the regular simple polygon ($\subset$ general regular polygon) lie on the circumscribed circle. 

Given a circle, $\bar{C}$ with center c and radius r can be described as the following:

\[\bar{C}(c,r) = \left\{ y \in \mathbb{R}_2 \mid \|y - c\| = r, \quad \exists r > 0 ,\quad \exists c \in {R}_2 \right\}\] 

Therefore, demonstrating since all the vertices have to respect the above constraints as they lie on  $\bar{C} \Rightarrow$ that the centeroid \(R_c\) = c is equidistant, precisely r distance away, from all the vertices. The corollary is thus, proved.

\end{proof}

\section{Computational Considerations and Adjustments for SPREV}

In the realm of computational algorithms, the transition from theoretical frameworks to practical implementation often necessitates a distinct perspective and approach. This imperative arises from the inherent dichotomy between continuous theoretical constructs and the discrete nature of computational settings. As computation unfolds within discrete domains, the seamless integration of theoretical frameworks demands thoughtful rephrasing and adaptation. Particularly, concepts originally formulated in a "continuous" setting must undergo transformation to align with the discrete nature of computational processes. This essential reinterpretation not only ensures the feasibility of computable frameworks but also paves the way for computational speedups. Thus, acknowledging the imperative for a nuanced and tailored computational viewpoint becomes paramount in bridging the theoretical-practical gap through some assumptions and refactoring to unleash the mathematical frameworks suggested in the earlier for use in a discrete computational domain.

\subsection{Core Algorithm}
The algorithm commences by initializing the samples and classes $\mathcal{X}$, $\mathcal{C}$, from dataset, $\mathcal{D}$. Subsequently, Min-Max Scaling is applied to $\mathcal{X}$, generating a normalized dataset $\mathcal{X}_t$, in a bid to decrease time spent computing the convex hull by limiting the data to a unit hypercube. Class centroids $\mathcal{X}_c$ are then computed based on $\mathcal{X}_t$ and class information $\mathcal{C}$. Utilizing a specialized function, the algorithm identifies a set $\mathcal{P}$ representing surface class points from the hypersphere the circumscribes the unit hypercube. A similarity matrix $\mathcal{M}_s$ is generated by evaluating the similarity between $\mathcal{X}_t$ and $\mathcal{P}$ through distance measurement. Finally, the algorithm employs a plotting function to visualize the resultant similarity matrix $\mathcal{M}_s$.

\begin{algorithm}[H]
\caption{SPREV algorithm}
\label{SPREV:main}
\SetKwProg{SPREV}{Function \emph{SPREV}}{}{end}

\KwIn{Supervised Learning Dataset}
\KwOut{ Visualisation Plot}
\BlankLine
\textbf{Data}: $\mathcal{D}$\;
\SPREV{($\mathcal{D}$)}{
     $\mathcal{X}$,$\mathcal{C}$ =  $\mathcal{D}$\;
     $\mathcal{X}_t$ = \emph{MinMaxScaling}($\mathcal{X}$)\;
     $\mathcal{X}_c$ = \emph{ClassCentroids}($\mathcal{X}_t$,$\mathcal{C}$)\;
     $\mathcal{P}$ = \emph{SurfaceClass}($\mathcal{X}_c$)\;
     $\mathcal{M}_s$ = \emph{GenerateSimilarityMatrix}($\mathcal{X}_t$, $\mathcal{P}$)\;
     \emph{Plot}($\mathcal{M}_s$)\;
}
\end{algorithm}

\subsection{Convex Hull Alternative}
To find a convex hull in $\mathbb{R}_2$ we can reach out to Grahams Scan\cite{graham1972efficient}, a viable O(NlogN) algorithm. However, extending this method to three dimensions and beyond poses challenges. This is because a surface homeomorphic to a sphere lacks a natural linear ordering of its points. The determination of the next point to add or the decision of whether to accept a point and proceed in a "left" or "right" direction becomes ambiguous in three-dimensional space. The absence of semantic coherence, particularly the challenge of defining "left" and "right" in higher-dimensional space compared to the 2-dimensional implementation, makes it difficult to generalize the original algorithm into higher dimensions.

An alternative approach for finding a convex hull in $\mathbb{R}_2$ is the gift wrapping algorithm, also known as Jarvis march\cite{jarvis1973identification}. This algorithm has a time complexity of O(nh), where n is the number of points and h is the number of points on the convex hull. Its real-life performance, when compared with other convex hull algorithms, is advantageous when dealing with small sample sizes (which aligns with our use case) or when h is expected to be very small relative to n, which we cannot guarantee for our algorithm.
We still would required a smallest bounding sphere algorithm after this and for high dimensions this is 

Since finding a convex hull and bounding ball in high dimensions is an expensive operation, we can reach out for an alternative method which is to min-max scale the values and this would result in them being bounded by the unit hypercube as the convex hull after which the proceeding bounding ball can be quickly found by calculating the principal diagonal of this hypercube which will yield the diameter of the hypersphere that circumscribes this hypercube.

\begin{theorem}
To enhance the computational efficiency and simplify our algorithm, consider normalizing the dataset, envisioning the data enclosed within a hypercube. Consequently, our problem transforms into determining the covering sphere of said hypercube. A plausible strategy involves the following:

Perform min-max scaling on the features of our dataset, mapping values from the interval [min, max] to [0, 1]. This transformation is achieved by subtracting the minimum value from each data point and subsequently dividing by the range (max - min). This process ensures that all feature values are distributed within the range [0, 1].

This normalization results in the encapsulation of the dataset within the unit hypercube, serving as the convex hull, provided there are at least two distinct points in the dataset, each represented by all feature values.
\end{theorem}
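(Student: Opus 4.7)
The plan is to verify three things in sequence: (i) the min-max transformation is well-defined under the stated non-degeneracy hypothesis, (ii) the scaled dataset lies inside the unit hypercube $[0,1]^n$, and (iii) the unit hypercube is convex and therefore a valid bounding convex container that can stand in for the true convex hull in the subsequent bounding-ball step. Once these are in place, the bounding ball from the previous corollary reduces to circumscribing $[0,1]^n$, whose covering sphere is immediate: center $(1/2,\dots,1/2)$ and radius $\tfrac{\sqrt{n}}{2}$ by computing the principal diagonal.

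First I would fix notation. Write the raw dataset as $\mathcal{X} = \{x_1,\dots,x_m\} \subset \mathbb{R}^n$ with $x_i = (x_i^{(1)},\dots,x_i^{(n)})$, and for each coordinate $j$ set $\mu_j = \min_i x_i^{(j)}$ and $M_j = \max_i x_i^{(j)}$. The scaling rule then reads $\tilde{x}_i^{(j)} = (x_i^{(j)} - \mu_j)/(M_j - \mu_j)$. This is exactly where the hypothesis ``at least two distinct points, each represented by all feature values'' enters: it guarantees $M_j - \mu_j > 0$ for every $j$, so the denominator never vanishes and the transformation is well-defined componentwise.

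Next I would check $\tilde{x}_i \in [0,1]^n$ directly. For each coordinate, $\mu_j \le x_i^{(j)} \le M_j$ forces the numerator into $[0, M_j - \mu_j]$, and division by the positive quantity $M_j - \mu_j$ lands the quotient in $[0,1]$. Doing this for every $j$ gives $\tilde{x}_i \in [0,1]^n$ for every $i$, so the scaled dataset $\tilde{\mathcal{X}}$ is wholly contained in the unit hypercube.

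The final step is to justify treating $[0,1]^n$ as the convex container. Convexity is immediate: if $p,q \in [0,1]^n$ and $t \in [0,1]$, then coordinatewise $t p^{(j)} + (1-t) q^{(j)} \in [0,1]$, so the segment stays in the hypercube. Combining with the minimality clause in the definition of the convex hull (and the earlier corollary that the convex envelope is the smallest convex set containing the data), one gets $\operatorname{conv}(\tilde{\mathcal{X}}) \subseteq [0,1]^n$, whence any hypersphere covering the hypercube automatically covers $\operatorname{conv}(\tilde{\mathcal{X}})$. The only genuine obstacle I anticipate is interpretive rather than technical: the phrasing ``serving as the convex hull'' is loose, since typically $[0,1]^n \supsetneq \operatorname{conv}(\tilde{\mathcal{X}})$. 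I would make explicit in the write-up that the hypercube plays the role of an enclosing convex proxy, and argue that this substitution is harmless for the algorithmic purpose of bounding-ball construction while yielding the desired computational speedup over computing a genuine high-dimensional convex hull.
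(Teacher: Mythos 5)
Your proposal is correct, and it actually takes a more defensible route than the paper's own proof. The paper tries to argue that the unit hypercube literally \emph{is} the convex hull of the scaled data: it asserts that the per-feature extremal values map to $0$ and $1$, that ``the corners of the hypercube coincide with the points mapped to these extreme values,'' and that the union of all simplices on these corners fills out $[0,1]^n$. That middle assertion is false in general --- a scaled data point sits at a corner of $[0,1]^n$ only if it is simultaneously extremal in \emph{every} coordinate, so for generic data none of the $2^n$ corners is occupied and $\operatorname{conv}(\tilde{\mathcal{X}})$ is a strict subset of the hypercube. You instead prove exactly what the algorithm needs: the non-degeneracy hypothesis makes the scaling well-defined, the coordinatewise inequality $\mu_j \le x_i^{(j)} \le M_j$ places every scaled point in $[0,1]^n$, and convexity of the hypercube plus minimality of the hull gives $\operatorname{conv}(\tilde{\mathcal{X}}) \subseteq [0,1]^n$, so any sphere circumscribing the cube covers the hull. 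Your explicit remark that the hypercube is an enclosing convex proxy rather than the hull itself identifies precisely the gap in the paper's version; the only thing the paper's (flawed) equality claim would ``buy'' over your containment argument is a tighter statement that is not actually needed for the bounding-ball construction, so nothing is lost by your weaker but correct formulation.
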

\begin{proof}

The viability of this method hinges upon the existence of minimum and maximum values within the dataset. Through this procedure, these extremal values are systematically transformed to 0 and 1, respectively. Consequently, the corners of the hypercube coincide with the points mapped to these extreme values. By considering the union of all simplices formed by these corner points, it becomes evident that this ensemble constitutes the unit hypercube. It is straightforward to observe that any additional simplex, formed by points not located at the corners, is inherently a subset of the union that forms this hypercube. Consequently, we establish that the hypercube serves as the convex hull, defined as the union of all simplices with vertices within the hypercube.

\end{proof}

\begin{algorithm}[H]
\SetKwProg{MinMaxScaling}{Function \emph{MinMaxScaling}}{}{end}
    \caption{MinMaxScaling Algorithm}
    \label{algo:minmaxscaling}
    \KwIn{Input data}
    \KwOut{Scaled data}

    \BlankLine
\MinMaxScaling{($\mathcal{X}$)}{
    $min\_data \leftarrow \min(\mathcal{X})$ \;$\quad$ $\quad$ \tcp{Find the minimum value in the dataset}
    $max\_data \leftarrow \max(\mathcal{X})$\;$\quad$ $\quad$ \tcp{Find the maximum value in the dataset}
    
    \For{$i \leftarrow 1$ \KwTo $|\mathcal{X}|$}{
        $scaled\_data[i] \leftarrow \frac{(\mathcal{X}[i] - min\_data) }{(max\_data - min\_data)} $ \;$\quad$ $\quad$ \tcp{Perform Min-Max Scaling}
    }
    
    \KwRet $scaled\_data$\;
    }

\end{algorithm}

\subsection{Infimum Alternative}
Infimising distances, especially in high-dimensional spaces, can impose a significant computational burden and demand substantial resources. The complexity arises from the requirement to compute distances between a point located on the surface and all other relevant class points within the dataset. This challenge escalates notably as the dataset expands in size. Considering the necessity to perform this calculation for every potential point on the sphere's surface, which represents the brute force search space, renders the approach computationally impractical.

Hence, we would have to reach out for a heuristic that would be sufficient and practical for our use case to make this algorithm practical. Hence, finding the class centroid from "averaging" all the relevant class points and projecting a ray through the center of the n-ball to the surface of the sphere offers a pragmatic and computationally efficient approximation alternative. This approach's calculation involves determining the central point of a class,a ray cast and a surface collision, which significantly reduces the computational burden compared to computing distances between an infinite pair of points.

The following is the algorithm to calculate the class centroids:

\begin{algorithm}[H]
\SetKwProg{ClassCentroids}{Function \emph{ClassCentroids}}{}{end}
    \caption{Class Centroids Algorithm}
    \label{algo:classcentr}
    \KwIn{Samples and their corresponding Classes}
    \KwOut{Class Centroids Matrix}

    \BlankLine
\ClassCentroids{($\mathcal{X}_t$,$\mathcal{C}$)}{

\For{each class $(c_i) \in \mathcal{C}$}{
      \For{each vector $(x_i) \in \mathcal{X}_t$}{
        \For{each projection ($x_{i,j}$) in dimension $j$ of $x_i$}{
      
            $\text{total}[c_i][j] \leftarrow \text{total}[c_i][j] + x_{i,j}$\;
  }}
  $\mathcal{X}_c[c_i] \leftarrow \frac{\text{sum}[c_i]}{size(total[c_i])}$\;
  }

    \KwRet $\mathcal{X}_c$\;
    }

\end{algorithm}

The ray casting and collision calculations can be simplified into a concise vector calculation when armed with information about the ball's center coordinates and its radius. By calculating the direction vector through a straightforward computation involving the coordinates of the center of the ball and the class centroid, we can then leverage the radius length to ascertain the precise intersection point where the ray meets the surface of the sphere.

The preceding selection of employing the unit hypercube as the convex hull turns out to be tremendously advantageous here. By opting for the circumscribing hypersphere of this unit hypecube as our bounding ball, we can analytically derive the center and radius of the ball, resulting in a substantial speed up the computation to find the surface class of interest.

\begin{theorem}
The diameter of the n-sphere that circumscribing the n-cube, which is equal to the prinicpal diagonal of the n-cube is $\sqrt{n} R$ . 
\end{theorem}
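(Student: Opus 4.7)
The plan is to prove the statement in two movements: first compute the length of the principal diagonal of an $n$-cube of side length $R$, then argue that this coincides with the diameter of the circumscribing $n$-sphere.

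For the diagonal length, I would place the cube as $[0,R]^n \subset \mathbb{R}^n$ without loss of generality (translation does not affect distances), and identify the principal diagonal as the segment joining the antipodal vertices $\mathbf{0} = (0,\ldots,0)$ and $\mathbf{R} = (R,\ldots,R)$. A direct application of the Euclidean norm gives
\begin{equation*}
\|\mathbf{R} - \mathbf{0}\| = \sqrt{\sum_{i=1}^{n} R^{2}} = \sqrt{n}\, R.
\end{equation*}
Alternatively, I would present this inductively: the base case $n=1$ gives a diagonal of length $R = \sqrt{1}\,R$, and for the inductive step, the diagonal of an $n$-cube is the hypotenuse of a right triangle whose legs are a diagonal of an $(n{-}1)$-dimensional face (length $\sqrt{n-1}\,R$ by hypothesis) and an edge perpendicular to that face (length $R$), yielding $\sqrt{(n-1)R^{2} + R^{2}} = \sqrt{n}\, R$ by the Pythagorean theorem.

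For the second movement, I would show that the centroid $\mathbf{c} = (R/2,\ldots,R/2)$ is equidistant from all $2^n$ vertices of the cube: every vertex $\mathbf{v}$ has coordinates in $\{0,R\}^n$, so $\mathbf{v} - \mathbf{c}$ has every coordinate equal to $\pm R/2$, giving $\|\mathbf{v} - \mathbf{c}\| = \sqrt{n(R/2)^2} = \tfrac{\sqrt{n}}{2}R$. The sphere $S(\mathbf{c}, \tfrac{\sqrt{n}}{2}R)$ therefore passes through every vertex, and since the cube is the convex hull of its vertices and the closed ball is convex, the entire cube is contained in this ball. Minimality (that this really is the circumscribing sphere) follows because any ball containing the cube must contain the two antipodal vertices, hence must have diameter at least $\|\mathbf{R} - \mathbf{0}\| = \sqrt{n}\,R$. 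Consequently, the diameter of the circumscribing $n$-sphere equals the principal diagonal, which equals $\sqrt{n}\,R$, as claimed.

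The only subtle point, and the one I would flag as the main obstacle, is the minimality claim: one must check that no smaller ball suffices, which is exactly where the two antipodal vertices do the work, since their separation already forces the diameter to be at least $\sqrt{n}\,R$. The rest is essentially a coordinate calculation and an invocation of convexity, neither of which introduces any real technical difficulty.
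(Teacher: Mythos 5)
Your proof is correct, but it takes a noticeably different route from the paper's. For the diagonal length, the paper sets up the Euclidean metric $g_{ab}=\delta_{ab}$, parameterises the diagonal as the curve $x^a(t)=Rt$, and evaluates the arc-length integral $\int_0^1 \sqrt{\sum_a R^2}\,\mathrm{d}t=\sqrt{n}\,R$; you simply apply the Euclidean norm to the difference of the two antipodal vertices (with an inductive Pythagorean alternative). These are of course the same computation in different clothing, and yours is the more elementary presentation. The real divergence is in the second half: the paper disposes of the circumscribing-sphere claim in two sentences, asserting that since all principal diagonals have the same length and the principal diagonal is the longest straight line contained in the cube, a covering sphere with that diameter can be constructed. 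You instead exhibit the circumscribed sphere explicitly — centre at the centroid, radius $\tfrac{\sqrt{n}}{2}R$, passing through all $2^n$ vertices — deduce containment of the whole cube from convexity of the ball and the fact that the cube is the convex hull of its vertices, and then prove minimality by observing that any enclosing ball must contain the two antipodal vertices and hence have diameter at least $\sqrt{n}\,R$. That minimality step is precisely what the paper leaves implicit, so your argument is the more complete of the two; what the paper's approach buys is consistency with the differential-geometric framing it uses elsewhere (Riemannian manifolds, distance metrics on the bounding ball), at the cost of rigour on the covering claim.
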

\begin{proof}

Using the metric of $n$-dimensional Euclidean space $g_{ab} = \delta_{ab}$ for $a,b \in [1,2,\dots,n]$. Let $s$, the distance be then defined as
$$
(\mathrm{d} s)^2 =  \sum_{a,b} g_{ab} \mathrm{d}x^a \mathrm{d}x^b \, ,
$$
Let the curve $x^a = x^a(t)$ parameterised by $t$. Then
$$
\frac{\mathrm{d} s}{\mathrm{d} t} = \sqrt{\sum_{a,b} g_{ab} \frac{\mathrm{d}x^a}{\mathrm{d}t} \frac{\mathrm{d}x^b}{\mathrm{d}t}} = \sqrt{\sum_a \left ( \frac{\mathrm{d}x^a}{\mathrm{d}t} \right ) ^2 } \, .
$$
The diagonal going from $(0,0,\dots,0)$ to $(R,R,\dots,R)$ can be described by the curve $x^a(t) = Rt$ for $t \in [0,1]$. The total length of the curve is
$$
s = \int_0^1 \mathrm{d} s = \sqrt{\sum_a \left ( \frac{\mathrm{d}x^a}{\mathrm{d}t} \right ) ^2 } \mathrm{d}t = \int_0^1 \sqrt{\sum_a \left ( R \right ) ^2 } \mathrm{d}t = \int_0^1 \sqrt{n} R \mathrm{d} t = \sqrt{n} R \, .  
$$
Therefore the length of the principal diagonal in $n$ dimensions is $\sqrt{n} R$. \cite{1603122}

This principal diagonal constitutes the circumradius under consideration. As inferred from the preceding derivation, given the uniform length of all principal diagonals, we can construct a covering n-sphere, granted the principal diagonal representing the greatest straight line length encompassed within the cube as per definition.

\end{proof}

\begin{algorithm}[H]
\caption{Surface Class Algorithm}
\label{SPREV:Surfaceclass}
\SetKwProg{SurfaceClass}{Function \emph{SurfaceClass}}{}{end}

    \KwIn{Samples and their corresponding Classes}
    \KwOut{Surface points corresponding to Class}

    \BlankLine

\textbf{Samples Class Centroid}: $\mathcal{X}_c$\;
\textbf{Number of dimensions of X in $\mathcal{D}$}: ${n}$\;
\textbf{Principal Radius of Unit Hypercube}: ${R = 1}$\;
\textbf{Centroid of n-ball}: ${B_c}$\;
\SurfaceClass{($\mathcal{X}_c$)}{
    nballRadius = $\sqrt{n} \cdot R$ \;$\quad$ $\quad$ \tcp{ $\sqrt{n} $ because of sphere covering unit hypercube proof from earlier}
     $D$ = $\mathcal{X}_c$ - ${B_c}$\; $\quad$ $\quad$ \tcp{direction vectors}
     $D_n$ = $\frac{D}{\lVert D \rVert}$\; $\quad$ $\quad$ \tcp{normalised direction vectors}
     $\mathcal{P}$ = ${B_c}$ + $R \cdot $ $D_n$ \;
     \textbf{return} $\mathcal{P}$\;
}
\end{algorithm}

\subsection{Similarity Generalised as Distance}

The application of distance measurements to evaluate similarity constitutes a fundamental aspect in various disciplines. Distance metrics, such as Euclidean distance, Cosine similarity, and the Wasserstein distance, offer quantitative measures to discern the proximity or dissimilarity between data points.  Researchers face the crucial task of selecting the most pertinent distance metric for their specific use case, a decision contingent upon the characteristics of the data and the analytical objectives. By generalizing similarity through the lens of various distance metrics, we introduce a modular element to this approach. This modularity enables adaptability, allowing researchers or analysts to tailor their methodology according to the specific goals of their analysis, thereby enhancing precision and facilitating applicability across diverse scientific and business domains.

\begin{algorithm}[H]
\SetKwProg{GenerateSimilarityMatrix}{Function \emph{GenerateSimilarityMatrix}}{}{end}
    \caption{Similarity Matrix Algorithm}
    \label{algo:SimMat}
    \KwIn{Samples and Surface Classes}
    \KwOut{Similarity Matrix}

    \BlankLine
\GenerateSimilarityMatrix{($\mathcal{X}_t$,$\mathcal{P}$)}{

\For{each class $(c) \in \mathcal{P}$}{
\For{each vector $(x_i) \in \mathcal{X}_t$}{
$\mathcal{M}_s[i][c]$ = dist($c$,$x_i$)\; $\quad$ $\quad$ \tcp{dist is the function to calculate the distance between the two vectors for a given metric}
}
}

    \KwRet $\mathcal{M}_s$\;
    }

\end{algorithm}

\subsection{Plotting Algorithms}
The algorithms presented herein pertain to the generation of visualizations. The lineplot and scatterplot algorithms are designed to serve as generic implementations, ensuring ease of comprehension and implementation for users in whichever programming language or framework of their choice. The foundational concept involves the systematic generation of a base simple regular polygon. By leveraging similarity information generate in the Similarity Matrix Algorithm and coordinates of the vertices of a simple regular polygon, the algorithms then proceed to construct a scatterplot encapsulated in the shape. In this scatterplot, the information from our dataset is represented in a manner that aligns with the similarity to the surface class points, allowing for a meaningful visualization of the data points within the confines of the generated polygon.

\begin{algorithm}[H]
\SetKwProg{Plot}{Function \emph{Plot}}{}{end}
  \caption{Plot Algorithm}
  \label{algo:Plot}

  \KwIn{Distance Matrix}
  \KwOut{SPREV Plot}
  \BlankLine
\Plot{($\mathcal{M}_s$)}{
  
  $coords =$ \emph{PlotPolygon}($\mathcal{M}_s$)\;
    $convexMatrix =$ \emph{convexComb}($\mathcal{M}_s$, $coords$)\;
    \emph{scatterplot}($convexMatrix$)\;

  \Return \;
}
\end{algorithm}

\begin{algorithm}[H]
\SetKwProg{PlotPolygon}{Function \emph{PlotPolygon}}{}{end}
  \caption{Plot Polygon Algorithm}
  \label{algo:PlotPoly}

  \KwIn{$\mathcal{M}_s$}
  \KwOut{Plot of regular-n-Polygon, Coordinate list of vertices}
  \BlankLine
\PlotPolygon{($\mathcal{M}_s$)}{
  num = number of columns of $\mathcal{M}_s$\;
  $\theta =$ \emph{LinSpace}(0,$2 \cdot \pi$,num)\;
    $x$ = cos($\theta$)\;
    $y$ = sin($\theta$)\;
    coords = [$x$,$y$]\;
    \emph{lineplot}(coords)\;
    \Return coords

}
\end{algorithm}

\begin{algorithm}[H]
\SetKwProg{LinSpace}{Function \emph{LinSpace}}{}{end}
  \caption{Linear Space Algorithm}
  \label{algo:linspace}

  \KwIn{start, stop, num}
  \KwOut{Array of evenly spaced values}
  \BlankLine
\LinSpace{(start,stop,num)}{
  
  $step \leftarrow \frac{{stop - start}}{{num - 1}}$ \;$\quad$ $\quad$\tcp{Calculate step size}

  \For{$i \leftarrow 0$ \KwTo $(num-1)$}{
    $result[i] \leftarrow start + i \cdot step$ \;
  }

  \Return{$result$} \;
}
\end{algorithm}

\begin{algorithm}[H]
\SetKwProg{convexComb}{Function \emph{convexComb}}{}{end}
  \caption{Convex Combination Algorithm}
  \label{algo:convexComb}

  \KwIn{Distance Matrix of size (M x N), Vertice Coordinates of Regular-N-Polygon of size (2xN)}
  \KwOut{Convex Combination Matrix of size (Mx2)}
  \BlankLine
  
\convexComb{($\mathcal{M}_s$, $coords$)}{
  
  $convexCombMtx \leftarrow \mathcal{M}_s \cdot (coords)^\top $ \;$\quad$ $\quad$\tcp{Linear Algebra Matrix Dot Product}   

  \Return convexCombMtx\;
}
\end{algorithm}

The systematic and detailed procedures embedded in these algorithms are deliberately structured to enhance reproducibility and foster accessibility for a diverse range of users. By providing a step-by-step framework, our objective is to not only facilitate the straightforward replication of results but also to encourage users to adopt and adapt these algorithms to their specific analytical contexts. This approach aligns with the principles of transparency and methodological clarity in scientific research, ensuring that the intricacies of the visualization process are comprehensible and accessible to a broader audience. The intention is to empower researchers, practitioners, and data analysts to seamlessly integrate these algorithms into their own workflows, thereby contributing to the broader dissemination and applicability of the proposed methodologies.

\section{Exploring the Data Sets}
\subsection{MNIST Dataset}
The MNIST\cite{MNISTcite} dataset, a benchmark in the field of machine learning and computer vision, has played a pivotal role in shaping the landscape of image recognition algorithms\cite{baldominos2019survey}.

\subsubsection{Dataset Information}
The MNIST dataset, comprising a vast collection of 28x28 grayscale handwritten digits, has served as a cornerstone for benchmarking machine learning algorithms, particularly in the realm of image classification. Originating from the National Institute of Standards and Technology (NIST), MNIST represents a diverse set of handwritten characters, encompassing digits from 0 to 9. This dataset's popularity stems from its accessibility, simplicity, and the ability to benchmark a wide array of classification models.

\subsubsection{Dataset Characteristics}
MNIST consists of 60,000 training images and 10,000 test images, each labeled with the corresponding digit it represents. The images, initially introduced to facilitate the development of automatic digit recognition systems, encapsulate variations in writing styles, stroke thickness, and orientation. The balanced distribution of digits ensures a representative sample for training and evaluation purposes.

\subsubsection{Reformulation into Tabular Form}
The 28x28 images are reformulated into tabular form by computing each pixel value, resulting in a dataset of 784 dimensions and 70,000 samples with 10 classes. This sets the dataset as one of medium class size, high dimension, high sample size. We will use this as a yardstick to compare with the small class size, high dimension, low sample size culled variant of MNIST that we will generate to allow us scrutinize the performance of the dimension reduction algorithms under these different situations. The method of which we will generate the culled variant will be elaborated upon in the final subsection of this section.

\subsubsection{Significance related to our Research Focus}
The capacity to express numerical values through distinct written representations that convey identical semantic meanings aligns with the argument of convexity. This argument posits the existence of a conceptual space wherein semantic significance remains invariant by variations in stylistic expression.

\subsubsection{Closing remarks}The dataset's simplicity, counterposed against the intricate challenges it presents, establishes MNIST as a foundational benchmark in the realm of image classification for researchers and practitioners. This attribute positions MNIST as an essential tool in our testing endeavors, aiming to align with standardized evaluation methods allows for our research to be both explainable and amenable to reproducibility.

\subsection{Fashion-MNIST Dataset}
The Fashion-MNIST\cite{xiao2017fashion} dataset, akin to its predecessor MNIST, has emerged as a pivotal asset in the domain of machine learning and computer vision, significantly influencing the development of image recognition algorithms\cite{leithardt2021classifying}\cite{meshkini2020analysis}.

\subsubsection{Dataset Information}
Fashion-MNIST, an ensemble of 28x28 grayscale images, diversifies the benchmarking landscape forimage classification algorithms. Conceived as an alternative to the traditional MNIST, it originates from the need to expand benchmarking horizons beyond handwritten digits. Fashion-MNIST encompasses a variety of fashion items from Zalando's article images, providing a rich dataset for evaluating classification models.

\subsubsection{Dataset Characteristics}
With 60,000 training images and 10,000 test images, Fashion-MNIST mirrors the balanced structure of its predecessor. Each image is labeled with the corresponding fashion category it represents. The dataset captures nuances in fashion items, including variations in texture, shape, and style. The equal distribution of categories ensures a representative sample for robust training and evaluation.

\subsubsection{Reformulation into Tabular Form}
Similar to MNIST, Fashion-MNIST's 28x28 images are reformulated into tabular form via each pixel getting allocated a unique pixel value, representing the luminance of the pixel. Higher numerical values indicate decreased intensity, with the scale ranging from 0 for white to 255 for black. Hence, resulting in a dataset of 784 dimensions and 70,000 samples with 10 classes. This establishes the dataset with medium class size, high dimension, and high sample size. Our intention is to utilize this as a baseline for comparison, contrasting it with a culled variant of Fashion-MNIST in the subsequent subsection. As mentioned previously, the generation methodology for the culled variant will be elucidated in the final subsection of this section.

\subsubsection{Significance related to our Research Focus}
Fashion-MNIST dataset depicts a diverse array of fashion items based on distinct visual characteristics. This may introduce a scenario where an intermediate state between some classes of clothing articles may not inherently exist within the given dimension, implying a lack of convexity.  This has the potential to significantly challenge the genus of 0 assumption we have established, providing an opportunity to assess the robustness of our algorithm under conditions where assumptions may be breached.

To illustrate, envision the $n$-dimensional space in which the topology might delineate a clear distinction between a distinct pair of pants and a shirt, with no straightforward linear(straight-line) homotopy connecting them. This implies that within this dimension, there might not exist some intermediary variants bridging the gap between a shirt and pants. While the possibility of all intermediaries existing could emerge in a higher dimension, our restriction to the dimension of "$n$" leads to an undesirable void/gap/hole that obstructs a seamless connection between pants and shirts.
\subsubsection{Closing Remarks}
Fashion-MNIST, with its amalgamation of simplicity and intricacy, establishes itself as a fundamental benchmark for image classification research. Its role as a touchstone aligns with our research objectives, where standardized evaluation methods contribute to the elucidation and reproducibility of our findings in the field of computer vision and machine learning.

\subsection{COIL-20 Dataset}
The COIL-20\cite{nene1996columbia} dataset, a prominent resource in the domain of machine learning and computer vision, contributes significantly to the advancement of algorithms geared toward object recognition\cite{matas2000object}.

\subsubsection{Dataset Information}
The COIL-20 dataset encompasses a diverse collection of object images, specifically focusing on 20 distinct objects. Widely recognized for its role in benchmarking machine learning algorithms, COIL-20 provides a platform for evaluating models designed for object classification. Originating from comprehensive efforts in the field, this dataset serves as an invaluable asset for assessing the capabilities of image recognition systems.

\subsubsection{Reformulation into Tabular Form}
In line with preprocessing conventions, the images in COIL-20 can be reformulated into tabular form, typically involving pixel value computation. The resulting dataset may exhibit dimensions dependent on the image resolution and other factors. This will yield a dataset comprising 1440 samples, characterized by 1024 dimensions and distributed across 20 distinct classes. Consequently, it aligns with the classification of a medium class size, high dimension dataset. While the sample size can be considered relatively low, it does not meet the criterion of having fewer samples than dimensions, as per the definition of low sample size. Therefore, we will proceed to generate a culled variant for further analysis.

\subsubsection{Dataset Challenges and Nuances}
While COIL-20 offers a rich set of object images, challenges persist, such as variations in object appearance due to changes in orientation. Addressing these challenges requires robust recognition algorithms capable of handling the inherent complexities introduced by the dataset's characteristics.

\subsubsection{Significance related to our Research Focus}
COIL-20 stands out due to its emphasis on capturing a variety of object orientations, introducing a challenge where many images feature structurally similar objects. This structural similarity implies the presence of repeated patterns within the samples, necessitating the algorithm's ability to recognize and delineate them accurately during dimensionality reduction for proper classification. This unique characteristic poses an intriguing challenge, which is why it has been selected for our experimentation.

\subsubsection{Closing Remarks}
COIL-20, with its emphasis on object orientation, stands as a pivotal benchmark for algorithms targeting object recognition. Its nuanced characteristics provide a testing ground that extends beyond traditional datasets, fostering a deeper understanding of  underlying structures and complexities associated with dimension reduction on real-world object recognition scenarios.

\subsection{CIFAR-100 Dataset}

The CIFAR-100\cite{CIFAR100cite} dataset, a prominent entity within the realm of machine learning and computer vision, represents a crucial resource that has significantly influenced the advancement of image recognition algorithms\cite{sharma2018analysis}.

\subsubsection{Dataset Information}
CIFAR-100, comprising 32x32 color images, expands the horizons of image classification benchmarking. Serving as an extension of the original CIFAR-10, this dataset was conceived to diversify the evaluation criteria beyond the confines of general object recognition. CIFAR-100 encompasses a diverse set of 100 classes, each containing images representative of various fine-grained categories, thereby providing a comprehensive dataset for assessing classification models.

\subsubsection{Dataset Characteristics}
Featuring 50,000 training images and 10,000 test images, CIFAR-100 maintains a balanced structure akin to its predecessor. Each image is labeled with its corresponding fine-grained class, capturing intricate details in object representation, such as color, texture, and shape. The equitable distribution of classes ensures a representative and diverse sample for robust training and evaluation.

\subsubsection{Reformulation into Tabular Form}
CIFAR-100's 32x32 color images are transformed into tabular format, assigning unique pixel values to each pixel, representing the color intensity. The numerical scale ranges from 0 for complete absence of color to 255 for maximum color intensity for red, green and blue values. As a result, the dataset is configured with 3,072 dimensions. With a total of 60,000 samples and 100 unique classes, it manifests as a dataset marked by high class size, high dimensionality, and a substantial sample size.

\subsubsection{Significance related to our Research Focus}
The CIFAR-100 dataset portrays a wide spectrum of fine-grained object categories prompting an examination into the potential ramifications of its extensive class diversity on our method which is initially designed for small class sizes. By investigating the breakdown in performance under such large class size conditions, valuable insights may be gleaned, shedding light on the effects and challenges introduced by the substantial number of classes within the dataset.

\subsubsection{Closing Remarks}
CIFAR-100's diverse object categories challenge methods tailored for smaller classes, offering an opportunity to explore algorithmic adaptability in complex scenarios. Our engagement with CIFAR-100 advances insights in fine-grained image classification, impacting the realms of computer vision and machine learning.

\subsection{Culled Datasets for Small Class Size, High Dimensionality, and Low Sample Size Representation}

In the pursuit of addressing the challenges associated with small class size, high dimensionality, and low sample size scenarios, the construction of purposeful datasets plays a pivotal role. This section elucidates the methodology employed in the creation of culled datasets, designed to encapsulate these intricate problem settings.

The datasets utilized in our study are curated through a meticulous culling process, aiming to simulate real-world scenarios where the dimensions of the data space are high, the sample size is limited due to budget constrains or limited exploration in early stages, and classes exhibit a small representation due to focus on small set of core characteristics of interest. The culling process is orchestrated as follows:
\subsubsection{Random Class Selection}
From the original dataset, a subset of classes is randomly selected without replacement. This emulates situations where only a handful of classes are observed in practice. The flexibility to specify the number of classes allows for tailoring the dataset to varying degrees of class scarcity.

\subsubsection{Filtering and Subsampling}
Subsequently, the dataset is filtered to retain only instances corresponding to the selected classes. A fraction of the filtered dataset is then subsampled, addressing the challenge of low sample size. A "subsample fraction" parameter provides control over the ratio of retained data, offering versatility in experimenting with different levels of data sparsity.

\subsubsection{Seed for Reproducibility}
In adherence to the principles of methodological transparency and reproducibility, a predetermined random seed value is implemented in our approach, introducing a controlled source of randomness. This deliberate choice ensures the consistency of outcomes in the stochastic processes inherent to the culling method, thereby facilitating the replication of our experiments with reliability across successive iterations.

\subsubsection{Rationale for Culled Datasets}
The motivation behind the creation of these culled datasets is rooted in the necessity to mimic real-world scenarios characterized by high-dimensional feature spaces, limited availability of samples, and imbalances in class representation. By intentionally inducing these challenges, our culled datasets become invaluable tools for evaluating the robustness and generalization capabilities of algorithms in the face of data scarcity and class imbalances.

\subsubsection{Applications in Experimental Design}
The culled datasets serve as fundamental components in our experimental design, allowing us to systematically investigate the impact of small class size, high dimensionality, and low sample size. Through rigorous evaluation on these representative datasets, we aim to contribute insights that transcend the idiosyncrasies of synthetic benchmarks, providing a more realistic appraisal of algorithmic efficacy in challenging real-world conditions.

\section{Practical Efficacy}

\subsection{Qualitative aspects of Data Visualizations}
  
Quantitative metrics often play a crucial role in evaluations. However, in the realm of data visualizations, these quantiative metrics often overlook crucial aspects of human perception and interpretation. This section looks into the qualitative aspects by drawing upon insights from some commonly used data visualisations reference texts and paradigms.

Handbook of Data Visualization\cite{chen2007handbook} provides a robust foundation for understanding the theoretical underpinnings of visual perception and design principles. It emphasizes the importance of considering perceptual limitations and cognitive processes when designing effective visualizations. This aligns with Gestalt theory, which posits that humans perceive visual elements as holistic wholes rather than isolated parts\cite{wertheimer1923laws}. Principles like proximity, similarity, closure, and continuity, guide our visual organization and interpretation. By adhering to these principles, data visualizations can leverage preattentive processing, allowing viewers to readily perceive patterns and trends\cite{ware2004information}.

Tufte's design principles\cite{tufte1983visual}\cite{tufte1990envisioning}\cite{tufte2001visual} emphasize maximizing data-ink ratio, avoiding chartjunk, and prioritizing clear communication above ornamentation. The Handbook further echoes these principles, highlighting their impact on information density, visual clarity, and data integrity. These principles promote clarity, efficiency, and integrity in data presentation, enhancing viewers' ability to extract meaning from the visualization.

\subsection{Guiding Principles Behind Construction of Visualisations}
Synthesizing these insights, we generated the visualisations in line with the following distilled standards:

\subsubsection{Perception-based criteria (inspired by Gestalt theory and Handbook of Data Visualization):}
\begin{itemize}
    \item Preattentive Processing: Visual elements belonging together should be grouped using proximity, similarity, and continuity. This facilitates pattern recognition without conscious effort.
    \item Visual Hierarchy: Prioritize key information using size, color, and placement to guide viewers' attention effectively.
\end{itemize}

\subsubsection{Communication-focused criteria (inspired by Tufte's design principles):}    

\begin{itemize}
    \item Data-ink Ratio: Maximize the proportion of ink used to represent data compared to decorative elements.
    \item Chartjunk Avoidance: Eliminate unnecessary visual elements that distract from the data itself.
    \item Data Integrity: Avoid misleading or manipulative design choices that distort the true nature of the data
\end{itemize}

This proposed criteria can be systematically employed to enhance a data visualization's alignment, ensuring each criterion contributes to the overall effectiveness of the visualization. Furthermore, taking into account the specific context and carefully weighing the importance of each criterion based on the intended audience can further refine the visualization process. It is important to acknowledge, akin to Arrow's impossibility theorem\cite{kelly2014arrow}, that achieving complete satisfaction of all factors may be unattainable, thus requiring strategic maximization based on prioritized considerations.

\subsection{Qualitative Comparison of Multiple Algorithms}

In our analysis of the visualizations presented in Figures \ref{MNIST-VIS}, \ref{FMNIST-VIS}, \ref{COIL20-VIS}, and \ref{CIFAR100-VIS}, a nuanced understanding of the performance of different dimensionality reduction techniques emerges. Notably, t-SNE and UMAP consistently exhibit remarkable clustering characteristics, portraying well-defined and distinguishable groupings of data points. Conversely, PCA's clustering performance is noticeably suboptimal across most scenarios, except for Figure \ref{COIL20-VIS}, where it manages to achieve commendable separation. Our proposed method, SPREV, demonstrates a clustering quality that, while not reaching the pristine levels of t-SNE and UMAP, is still commendable.

Upon closer scrutiny of SPREV's behavior in non-culled datasets, encompassing MNIST, Fashion-MNIST, COIL-20, and CIFAR-100, a concentration of measure phenomenon becomes conspicuous. This phenomenon materializes as a substantial portion of data points tends to cluster towards the center as the class size increases. As elucidated in Section \ref{section:Near-Orthogonality}, this behavior aligns with our expectations, given the underlying utilization of an n-ball. The prominence of the concentration of measure increases with the incorporation of additional classes on the surface, leading to probabilistically smaller net differences in similarity scores measured. Consequently, the visualization encounters challenges in separation in lower dimensions due to increasingly similar similarities from every comparison with the surface, as dictated by our algorithm.

Despite the challenges posed by this phenomenon, SPREV retains a distinctive advantage in facilitating the quick visual discernment of similarities. By assessing the proximity of a class to any vertex, one can promptly gauge its similarity, leveraging the class label represented by the coloration of the corresponding point. This attribute sets SPREV apart from UMAP and t-SNE variants, as these methods do not inherently capture such global relationships in their implementations. This unique strength positions SPREV as a valuable tool for exploratory data analysis, particularly in scenarios where a rapid and intuitive understanding of global structures is essential.

Especially, in the culled variants characterized by low sample size and low class size, the utility and visual performance of SPREV becomes apparent. It facilitates swift analysis of data points, enabling the rapid assessment of inter-class relationships and identifying potential outliers within a given class.

Analyzing Figure \ref{CIFAR100-VIS} within the context of CIFAR-100 visualizations reveals that non-culled scenarios present a formidable challenge due to the dataset's intricacy, characterized by a myriad of classes. Representing such multifaceted data within a two-dimensional space inherently introduces complexity, yielding a visual output that may resemble noise, thereby complicating the identification of meaningful patterns. Despite this challenge, an intriguing observation surfaces—there exists a congruence between the global structures retained by UMAP and PCA, adding a layer of complexity to the dataset's comprehension. In contrast, t-SNE excels in generating distinct clusters, effectively mitigating some of the challenges posed by the dataset's intricate nature. 

However, we note similarly, SPREV manifests a pronounced center concentration reminiscent of PCA, while appropriately dispersing points towards the vertices. it is imperative to consider the possibility that this center clustering in UMAP, PCA and SPREV is a byproduct of the phenomenon outlined in Section \ref{section:Near-Orthogonality}. The sheer class size may result in a concentration of measure-like effect, leading to a global-scale phenomenon we observe. 

Transitioning to the culled variant of CIFAR-100, a clearer and more structured representation emerges among the selected classes. However, a salient feature persists—the prevalence of a center clustering property. Across all algorithms, a discernible center pull is observed, accompanied by a light scatter, indicative of some underlying superclass similarities among the classes. This phenomenon can likely be explained by the dataset's contextual structure, where objects in CIFAR-100 have both fine classes (which was used in labels supplied to the algorithms) and a superclass (that was not supplied to the algorithms) that encompasses these fine classes. The concerted effort by UMAP, t-SNE, and SPREV to delineate distinct classes attests to their efficacy in capturing nuanced relationships within the data, and the center mixing observed offers profound insights into what we also assume is their ability to extract the hierarchical superclass feature of classes that latently exist in the CIFAR-100 dataset that these algorithms were not exposed to.

\afterpage{
  \clearpage   

\begin{figure}[htb]
  \centering
  \includegraphics[width=0.93\textwidth,height=\textheight,keepaspectratio]{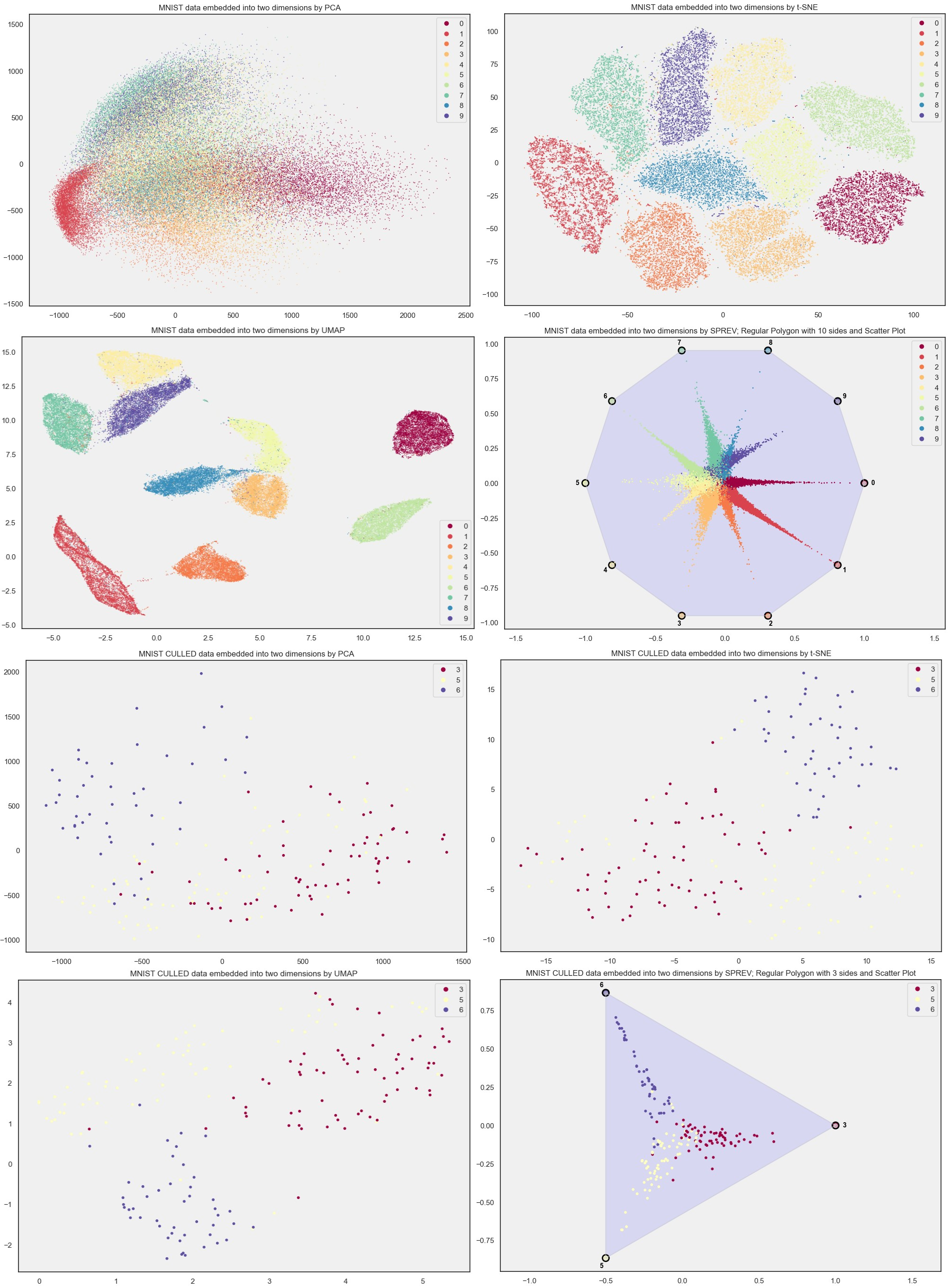}
  \caption{VISUALISATIONS of MNIST and MNIST CULLED embeddings for each of PCA, t-SNE, UMAP and SPREV.}
  \Description{Visualisation of embedding spaces for MNIST and MNIST CULLED}
  \label{MNIST-VIS}
\end{figure}

  \clearpage 
} 

\afterpage{
  \clearpage   

\begin{figure}[htb]
  \centering
  \includegraphics[width=0.93\textwidth,height=\textheight,keepaspectratio]{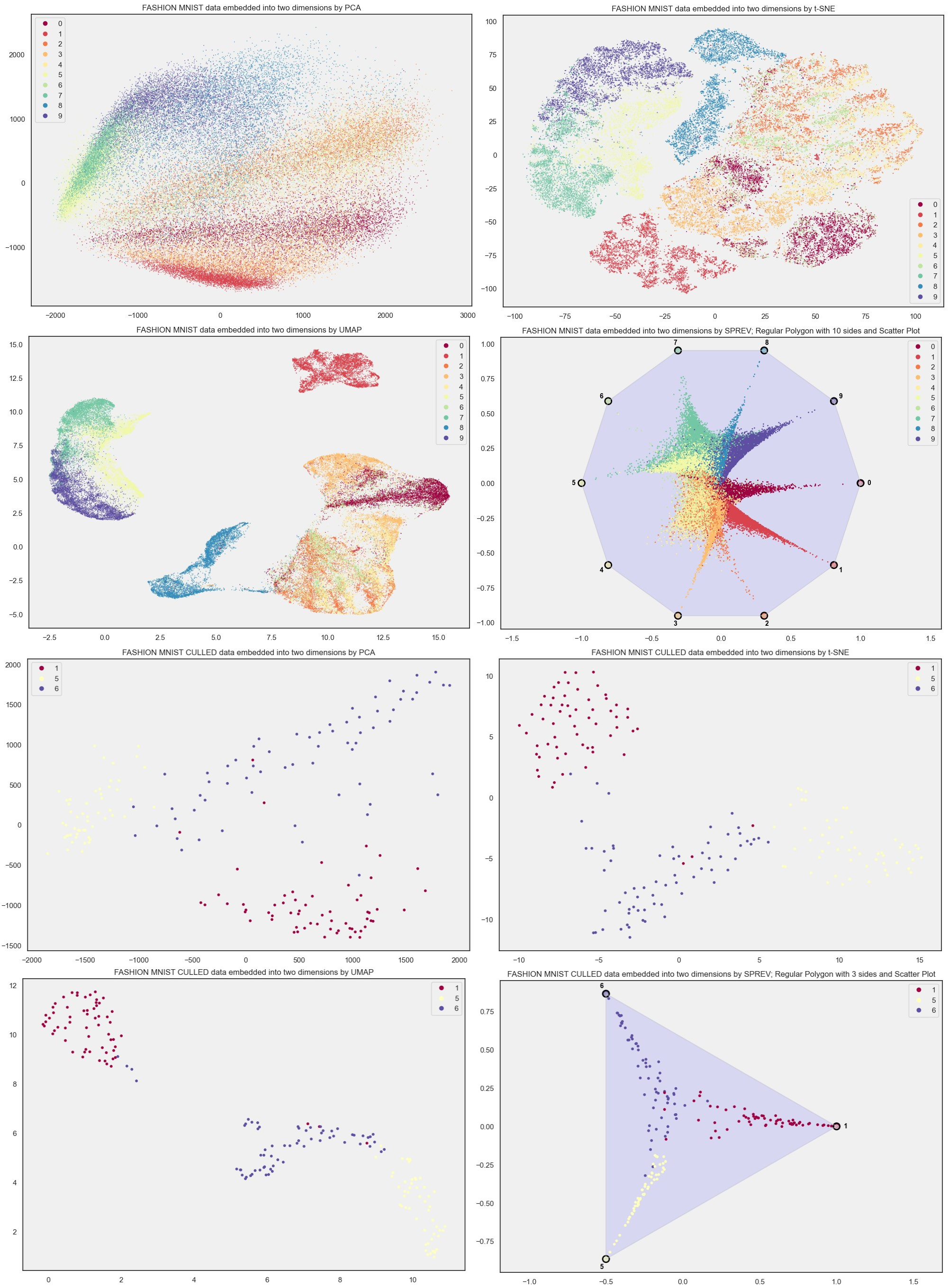}
  \caption{VISUALISATIONS of Fashion-MNIST and Fashion-MNIST CULLED embeddings for each of PCA, t-SNE, UMAP and SPREV.}
  \Description{Visualisation of embedding spaces for Fashion-MNIST and Fashion-MNIST CULLED}
  \label{FMNIST-VIS}
\end{figure}

  \clearpage 
} 

\afterpage{
  \clearpage   

\begin{figure}[htb]
  \centering
  \includegraphics[width=0.93\textwidth,height=\textheight,keepaspectratio]{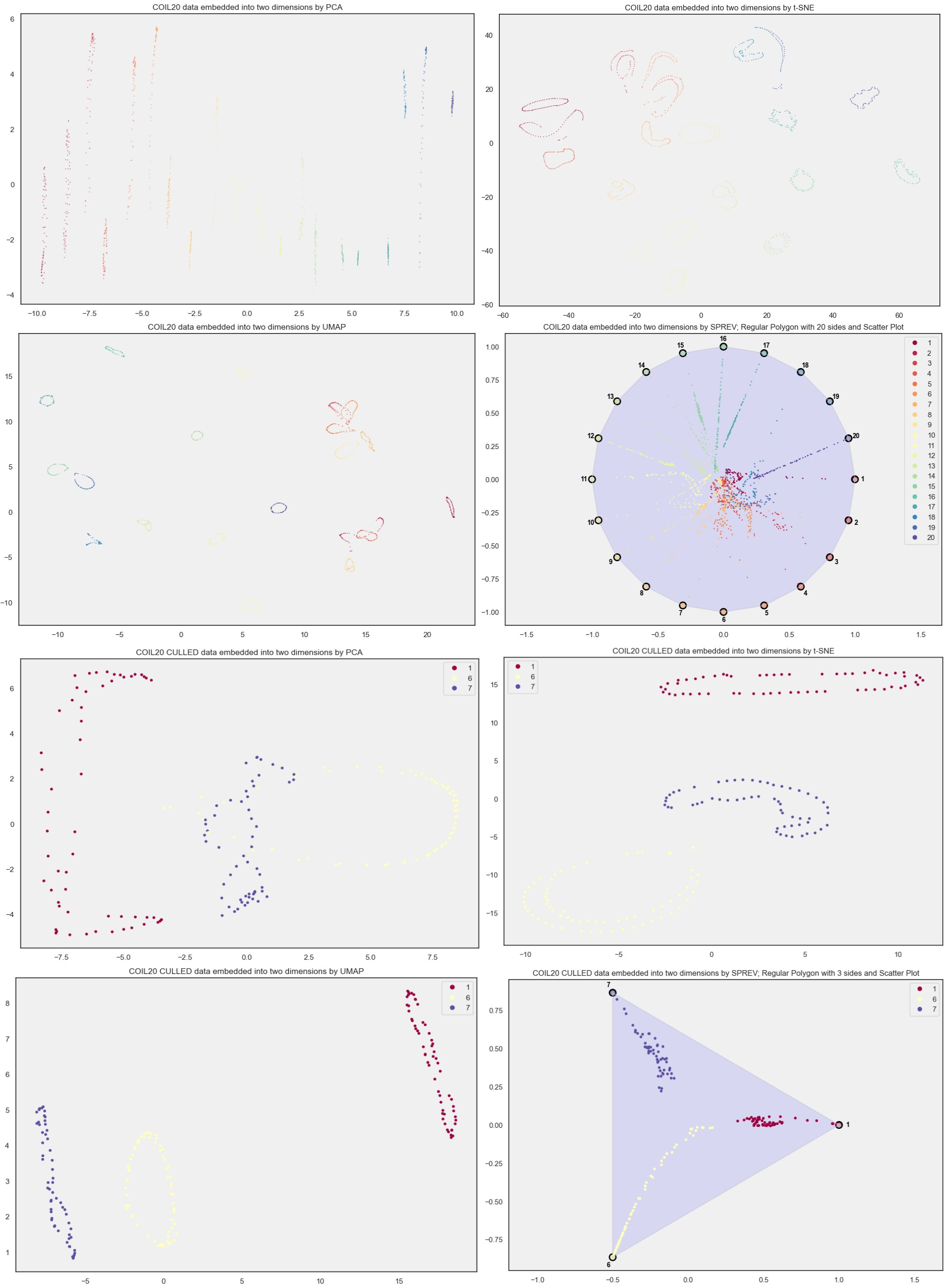}
  \caption{VISUALISATIONS of COIL-20 and COIL-20 CULLED embeddings for each of PCA, t-SNE, UMAP and SPREV.}
  \Description{Visualisation of embedding spaces for COIL-20 and COIL-20 CULLED}
  \label{COIL20-VIS}
\end{figure}

  \clearpage 
} 

\afterpage{
  \clearpage   

\begin{figure}[htb]
  \centering
  \includegraphics[width=0.93\textwidth,height=\textheight,keepaspectratio]{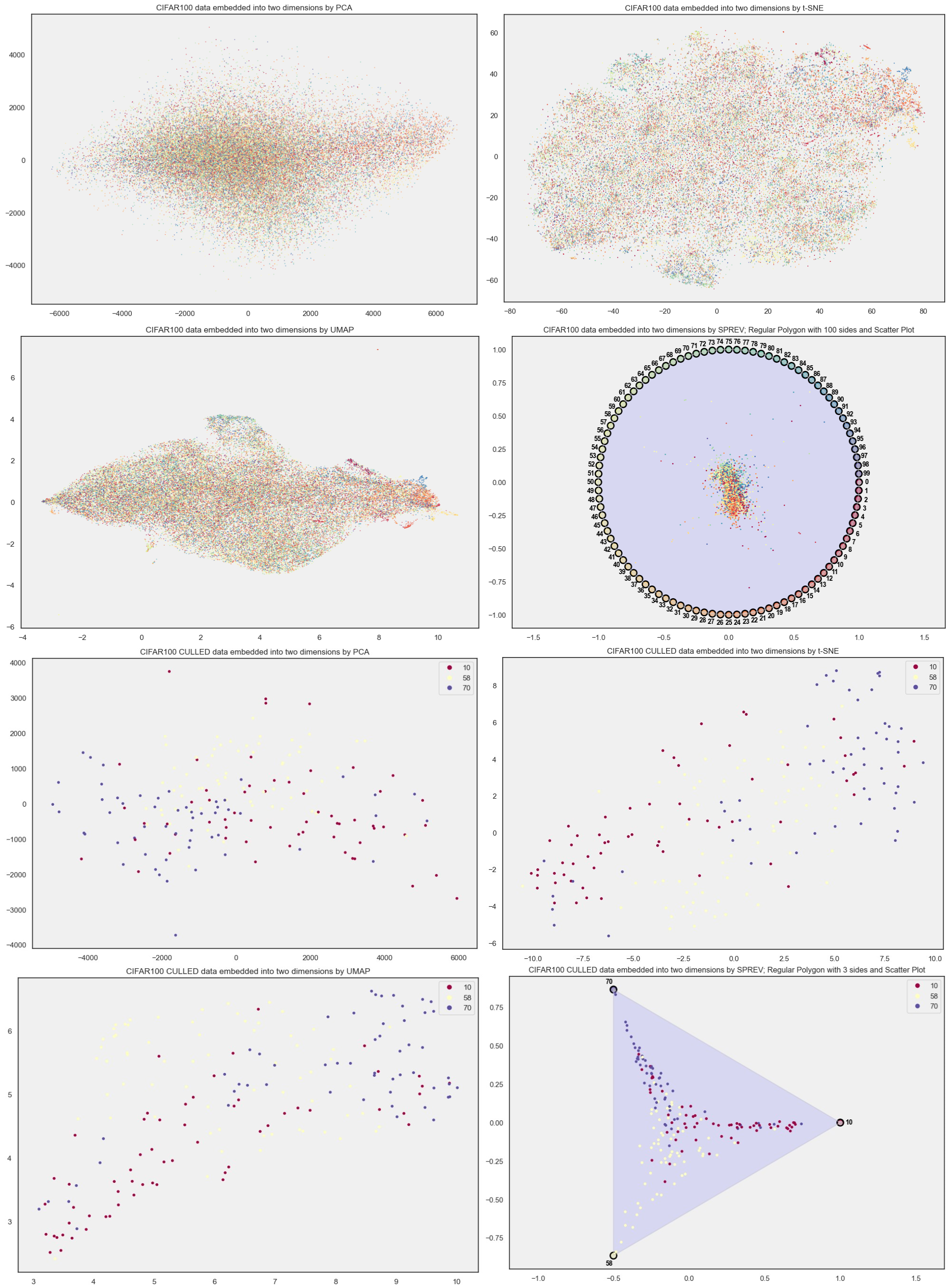}
  \caption{VISUALISATIONS of CIFAR-100 and CIFAR-100 CULLED embeddings for each of PCA, t-SNE, UMAP and SPREV.}
  \Description{Visualisation of embedding spaces for CIFAR-100 and CIFAR-100 CULLED}
  \label{CIFAR100-VIS}
\end{figure}

  \clearpage 
} 

\subsection{Limitations of Qualitative Comparison of Multiple Algorithms}
While the above section provides a starting point and provides insight into some aspects of the data  that were observed, it is imperative to acknowledge its limitations. Qualitative evaluation can be subjective and susceptible to individual biases. Developing reliability measures and incorporating more users observations and discussions could enhance the objectivity of the evaluation process. Furthermore, investigating these interactions and adapting the ideas from them would prove crucial for proper understanding.

\subsection{Quantiative Benchmarking of Embedding Space of the Algorithms}

\subsubsection{Separability benchmark via Linear SVM Classifier}
Our methodological choice to train a linear Support Vector Machine (SVM) classifier using the embedding spaces derived from the different dimensionality reduction techniques serves a purposeful role in the quantitative benchmarking via the assessment of class separation. The accuracy of a trained linear SVM model is informed by its inherent capability to discern linearly separable structures and patterns in feature spaces, aligning effectively with the characteristic required for clearer visualisation purposes using when using the various techniques. Our aim is to systematically scrutinize and juxtapose the performance of the chosen dimension reduction techniques, particularly in their efficacy to yield embedding spaces that foster effective separation of disparate classes. This emphasis on evaluating class separation is motivated by its pivotal significance in visualisation applications where discerning meaningful distinctions among categorical entities is of paramount importance. This discriminative prowess of the trained linear SVM serves as a judicious approach for  quantitatively gauging and assessing the extent to which embedding spaces, underpin robust and interpretable class separation for the purpose of rigorous quantitative benchmarking of the various techniques we would like to benchmark. Thus, a dimension reduction method that results in higher accuracy score in the linear SVM benchmark would show signs of greater class separability. The results of this test are observed in Table \ref{table:SVM-data-accuracy}.

\begin{table}[H]
\centering
\begin{tabular}{ll|rrrrr}
\toprule
\textbf{} & \textbf{Dataset} & \textbf{t-SNE} & \textbf{UMAP} & \textbf{PCA} & \textbf{SPREV} &\\
\cline{2-7}
\parbox[t]{2mm}{}
          & MNIST & 0.962{\tiny ($\pm$ 0.003)} & \textbf{0.965}{\tiny ($\pm$ 0.002)} & 0.445{\tiny ($\pm$ 0.007)} & 0.552{\tiny ($\pm$ 0.008)}   \\
\cline{2-7}
\parbox[t]{2mm}{}
          & MNIST CULLED & 0.851{\tiny ($\pm$ 0.097)} & 0.841{\tiny ($\pm$ 0.088)} & 0.754{\tiny ($\pm$ 0.117)} & \textbf{0.889}{\tiny ($\pm$ 0.050)} \\
\cline{2-7}
\parbox[t]{2mm}{}
          & Fashion-MNIST &  0.676{\tiny ($\pm$ 0.010)} & \textbf{0.691}{\tiny ($\pm$ 0.012)} & 0.520{\tiny ($\pm$ 0.017)} & 0.592{\tiny ($\pm$ 0.014)}   \\
\cline{2-7}
\parbox[t]{2mm}{}
          & Fashion-MNIST CULLED & 0.928{\tiny ($\pm$ 0.036)} & 0.928{\tiny ($\pm$ 0.043)} & 0.928{\tiny ($\pm$ 0.043)} & \textbf{0.933}{\tiny ($\pm$ 0.041)} \\
\cline{2-7}
\parbox[t]{2mm}{}
          & COIL-20 & 0.794{\tiny ($\pm$ 0.041)} & \textbf{0.834}{\tiny ($\pm$ 0.029)} & 0.593{\tiny ($\pm$ 0.051)} & 0.652{\tiny ($\pm$ 0.050)} \\
\cline{2-7}
\parbox[t]{2mm}{}
          & COIL-20 CULLED & \textbf{1.000}{\tiny ($\pm$ 0.000)} & \textbf{1.000}{\tiny ($\pm$ 0.000)} & 0.978{\tiny ($\pm$ 0.037)} & \textbf{1.000}{\tiny ($\pm$ 0.000)}\\
\cline{2-7}
\parbox[t]{2mm}{}
          & CIFAR-100 &  \textbf{0.047}{\tiny ($\pm$ 0.003)} & 0.044{\tiny ($\pm$ 0.003)} & 0.037{\tiny ($\pm$ 0.002)} & 0.038{\tiny ($\pm$ 0.002)}  \\
\cline{2-7}
\parbox[t]{2mm}{}
          & CIFAR-100 CULLED & 0.606{\tiny ($\pm$ 0.080)} & 0.572{\tiny ($\pm$ 0.132)} & 0.628{\tiny ($\pm$ 0.117)} & \textbf{0.655}{\tiny ($\pm$ 0.112)} \\

\bottomrule
\end{tabular}
\caption{SVM Classifier accuracy over the embedding spaces of all datasets. Average accuracy scores are given over 10-fold cross-validation for each of PCA, t-SNE and UMAP, SPREV.}\label{table:SVM-data-accuracy}
\end{table}

\subsubsection{Discussion of results from linear SVM benchmark}
During the SVM training, the omission of scaling in the embedding space data led to all algorithms, with the exception of PCA, concluding training within a reasonable timeframe (sub 30 minutes). In contrast, the PCA variant required an extensive duration of 490 hours after which it catastrophically crashed and remained incomplete. However, the implementation of scaling, which facilitated the normalization of features to a standard scale without introducing inherent bias, resulted in swift completion within a matter of minutes for the non-culled datasets and seconds for the culled datasets. This expedited convergence can be attributed to how we exploit the chain rule concept in differentiation during backpropagation, wherein the error function assumes a more spherical form after scaling. Consequently, gradient descent converges more rapidly due to more reduced and uniform curvature, necessitating fewer steps for convergence.

Post-scaling, the accuracy of the trained SVMs remained consistent for those that completed training. Notably, this observation is intriguing given that the SVM classifier lacks affine transformation invariance, implying that scaling features before training typically yields distinct outcomes. However, analysing our approach to feature scaling, achieved by normalizing each feature through mean centering and component-wise scaling to unit variance, we can see how undesirable artifacts were curbed.

\begin{figure}[htb]
  \centering
  \includegraphics[width=\linewidth]{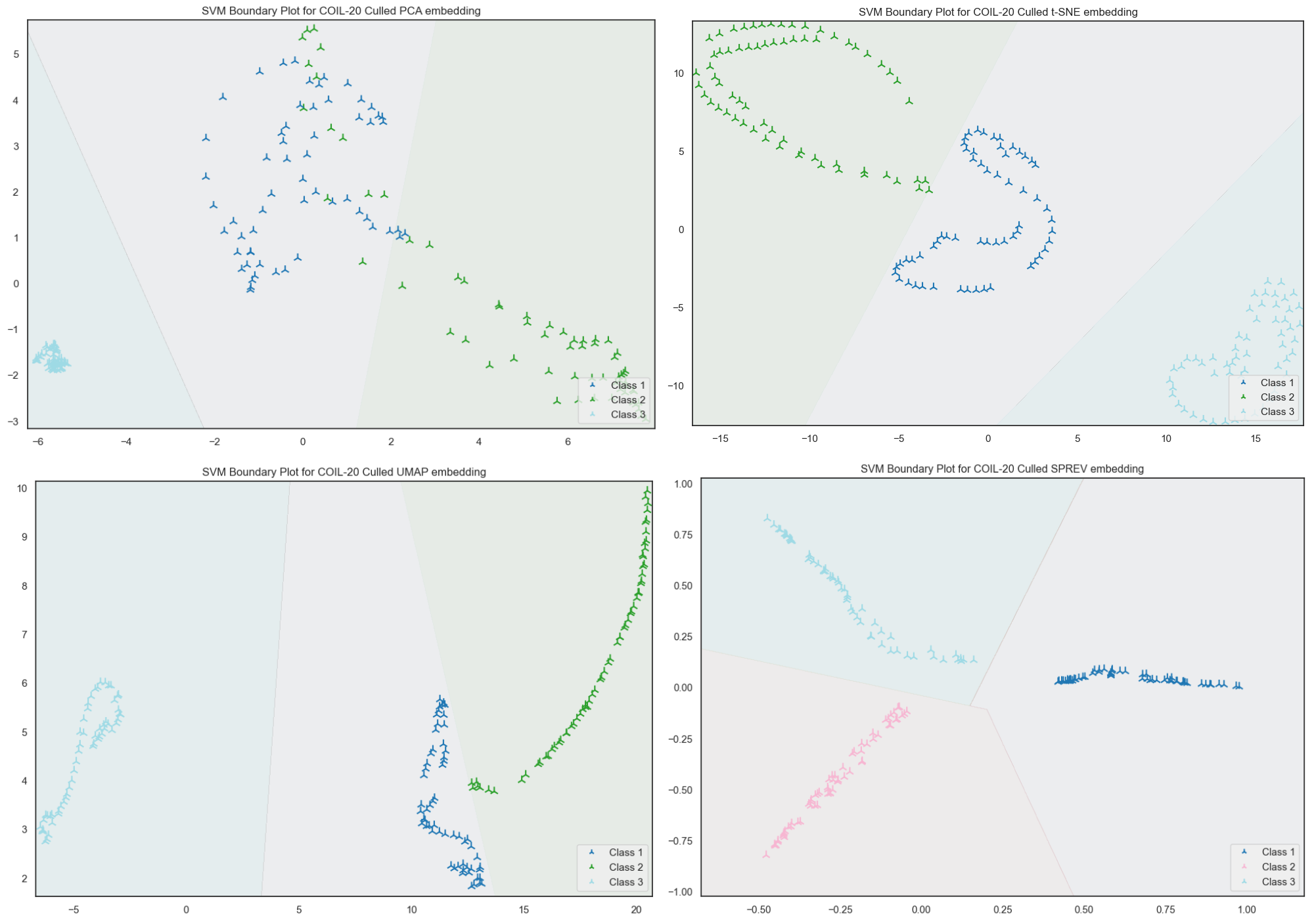}
  \caption{Graph showing boundary generated by the trained SVM on the embedding space.}
  \Description{Graph shows how the boundary is cleanly split for t-SNE, UMAP and SPREV. However for PCA this is not the case.}
  \label{SVM-graph}
\end{figure}

Upon scrutinizing the row for COIL-20 Dataset, an anomalous observation arises concerning the t-SNE, UMAP, SPREV, wherein a perfect accuracy is achieved. Such flawless accuracy typically indicates overfitting in certain contexts. when all three embeddings showcase distinct separation among the three classes, including their respective boundary regions. As depicted in Figure \ref{SVM-graph}, this peculiar behavior persists across different seed values for all employed dimension reduction methods when attempted without normalization. This recurrence suggests that the phenomenon is likely attributed to some other aspect of the dataset. Further discovery is required to elucidate the underlying reasons for this occurrence. However, as this would necessitate a more profound analysis of the dataset and an examination of the mechanics and internal workings of the SVM to identify specific pain points, delving into such intricacies is beyond the scope of this paper. For now, let us assume that the linear SVM might find it relatively easier to establish a separation that performs exceptionally well due to an undisclosed attribute present in the dataset post-culling. Leaving the exploration to understand the underlying reasons for this phenomenon to be deferred to the future work section.

When referencing with the qualitative benchmarks section, the accuracy attained by the classifier can be visually observed through the degree of separation achieved among individual classes in the embedding space. For instance, examining Figure \ref{CIFAR100-VIS} for CIFAR-100 reveals challenges in distinguishing points across all cases, manifesting as notably low accuracy scores for all employed techniques. In contrast, for COIL-20 CULLED, as illustrated in Figure \ref{COIL20-VIS}, clear class separation is evident, aligning with the high accuracy scores observed.

As an additional instance, considering MNIST, UMAP and t-SNE exhibit exceptional performance, corroborated by the well-defined separation of classes depicted in Figure \ref{MNIST-VIS}. Conversely, PCA yields a lower accuracy score, mirroring the densely packed point cloud where the separating lines are scarcely discernible. SPREV encounters a similar challenge towards the center, with points exhibiting less obvious separation, a characteristic duly reflected in the accuracy score – superior to PCA but notably inferior to t-SNE and UMAP.

\subsubsection{Local and Global Properties Preservation benchmark via kNN Classifier}
In this comprehensive investigation, we undertake a nuanced comparison of the embeddings derived from four distinct dimensionality reduction techniques:(\textit{PCA}), (\textit{UMAP}), (\textit{t-SNE}), and (\textit{SPREV}). Our primary focus revolves around evaluating the performance of a \(k\)-nearest neighbor (\textit{kNN}) classifier trained on the resulting embedding spaces across the chosen diverse array of datasets. The accuracy of the \textit{kNN} classifier serves as a robust quantitative measure, providing empirical insights into the extent to which each embedding preserves the crucial local structures inherent to the datasets.

The adaptability to manipulate the hyperparameter \(k\) during the training process affords us the opportunity to explore the spectrum of structure preservation, ranging from local to non-local and, ultimately, more global structures. This nuanced examination facilitates a deeper understanding of how each method interacts with the underlying data, capturing both intricate local details and overarching global patterns.

The datasets selected for this investigation encompass \textit{MNIST}, \textit{Fashion-MNIST}, \textit{COIL-20}, and \textit{CIFAR-100}, alongside their culled variants. The embedding evaluation is seperated into two distinct categories:

\begin{enumerate}
  \item \textbf{Culled Datasets:} This subset of datasets undergoes evaluation with a discerningly smaller range of \(k\) values, determined in accordance with the maximum \(k\) that is less than the size of all culled datasets, each comprising a standardized set of 180 samples. The application of exponential increments in \(k\) further refines our scrutiny of the structural transformations occurring within the embeddings.
  
  \item \textbf{Original Datasets:} These datasets necessitate a broader range of \(k\) values. The assignment of \(k\) values is intricately calibrated based on the sample sizes intrinsic to each dataset. For instance, \textit{MNIST} and \textit{Fashion-MNIST}, boasting 60,000 samples each, warrant a notably greater maximum \(k\). In contrast, \textit{COIL-20}, encompassing 1,440 samples, necessitates a smaller \(k\) value, albeit significantly larger than those applied to the culled datasets. \textit{CIFAR-100}, with its expansive 50,000 samples, justifies a large \(k\) value, albeit smaller than that of MNIST and Fashion-MNIST, to consistently explore the preservation of both local and global properties within the embeddings of each dataset.
\end{enumerate}

To ensure methodological rigor and consistency across all datasets, we employ a stratified 10-fold cross-validation framework. This approach yields a comprehensive set of 10 accuracy scores for each embedding, allowing for a detailed examination of the performance dynamics across diverse datasets. This rigorous evaluation methodology contributes to a robust understanding of the efficacy of different dimensionality reduction techniques in preserving the inherent structure of diverse datasets.

\begin{table}[H]
\centering
\begin{tabular}{ll|rrrrr}
\toprule
\textbf{} & \textbf{k} & \textbf{t-SNE} & \textbf{UMAP} & \textbf{PCA} & \textbf{SPREV} & \\
\cline{2-7}
\parbox[t]{2mm}{\multirow{6}{*}{\rotatebox[origin=c]{90}{MNIST}}}
&1500 & \textbf{0.963}{\tiny ($\pm$ 0.003)} & \textbf{0.963}{\tiny ($\pm$ 0.003)} & 0.464{\tiny ($\pm$ 0.006)} & 0.565{\tiny ($\pm$ 0.004)} \\
&3000 & 0.960{\tiny ($\pm$ 0.003)} & \textbf{0.961}{\tiny ($\pm$ 0.003)} & 0.456{\tiny ($\pm$ 0.006)} & 0.556{\tiny ($\pm$ 0.004)} \\
&5000 & 0.950{\tiny ($\pm$ 0.004)} & \textbf{0.953}{\tiny ($\pm$ 0.004)} & 0.450{\tiny ($\pm$ 0.007)} & 0.543{\tiny ($\pm$ 0.004)} \\
&15000 & \textbf{0.858}{\tiny ($\pm$ 0.006)} & 0.667{\tiny ($\pm$ 0.039)} & 0.430{\tiny ($\pm$ 0.006)} & 0.475{\tiny ($\pm$ 0.006)} \\
&30000 & 0.369{\tiny ($\pm$ 0.033)} & 0.317{\tiny ($\pm$ 0.006)} & 0.389{\tiny ($\pm$ 0.005)} & \textbf{0.423}{\tiny ($\pm$ 0.006)} \\
&50000 & 0.124{\tiny ($\pm$ 0.004)} & 0.189{\tiny ($\pm$ 0.005)} & 0.157{\tiny ($\pm$ 0.006)} & \textbf{0.264}{\tiny ($\pm$ 0.007)} \\

\cline{2-7}
\parbox[t]{2mm}{\multirow{6}{*}{\rotatebox[origin=c]{90}{Fashion-MNIST}}}
&1500 & \textbf{0.744}{\tiny ($\pm$ 0.006)} & 0.741{\tiny ($\pm$ 0.006)} & 0.547{\tiny ($\pm$ 0.006)} & 0.627{\tiny ($\pm$ 0.005)} \\
&3000 & 0.729{\tiny ($\pm$ 0.007)} & \textbf{0.736}{\tiny ($\pm$ 0.006)} & 0.529{\tiny ($\pm$ 0.007)} & 0.600{\tiny ($\pm$ 0.005)} \\
&5000 & \textbf{0.711}{\tiny ($\pm$ 0.007)} & 0.705{\tiny ($\pm$ 0.008)} & 0.507{\tiny ($\pm$ 0.008)} & 0.582{\tiny ($\pm$ 0.005)} \\
&15000 & \textbf{0.578}{\tiny ($\pm$ 0.006)} & 0.547{\tiny ($\pm$ 0.006)} & 0.431{\tiny ($\pm$ 0.007)} & 0.525{\tiny ($\pm$ 0.006)} \\
&30000 & 0.397{\tiny ($\pm$ 0.021)} & 0.406{\tiny ($\pm$ 0.031)} & 0.318{\tiny ($\pm$ 0.013)} & \textbf{0.435}{\tiny ($\pm$ 0.017)} \\
&50000 & 0.193{\tiny ($\pm$ 0.031)} & 0.189{\tiny ($\pm$ 0.035)} & 0.204{\tiny ($\pm$ 0.022)} & \textbf{0.286}{\tiny ($\pm$ 0.032)} \\

\cline{2-7}
\parbox[t]{2mm}{\multirow{6}{*}{\rotatebox[origin=c]{90}{COIL-20}}}
&30 & \textbf{0.878}{\tiny ($\pm$ 0.026)} & 0.841{\tiny ($\pm$ 0.020)} & 0.665{\tiny ($\pm$ 0.055)} & 0.725{\tiny ($\pm$ 0.031)} \\
&50 & 0.839{\tiny ($\pm$ 0.024)} & \textbf{0.842}{\tiny ($\pm$ 0.030)} & 0.627{\tiny ($\pm$ 0.041)} & 0.658{\tiny ($\pm$ 0.024)} \\
&150 & 0.589{\tiny ($\pm$ 0.030)} & \textbf{0.594}{\tiny ($\pm$ 0.065)} & 0.395{\tiny ($\pm$ 0.035)} & 0.501{\tiny ($\pm$ 0.045)} \\
&300 & 0.227{\tiny ($\pm$ 0.041)} & 0.280{\tiny ($\pm$ 0.077)} & 0.215{\tiny ($\pm$ 0.031)} & \textbf{0.311}{\tiny ($\pm$ 0.062)} \\
&600 & 0.083{\tiny ($\pm$ 0.027)} & 0.074{\tiny ($\pm$ 0.035)} & 0.108{\tiny ($\pm$ 0.026)} & \textbf{0.135}{\tiny ($\pm$ 0.037)} \\
&1200 & 0.023{\tiny ($\pm$ 0.009)} & 0.022{\tiny ($\pm$ 0.009)} & 0.029{\tiny ($\pm$ 0.010)} & \textbf{0.039}{\tiny ($\pm$ 0.022)} \\

\cline{2-7}

\parbox[t]{2mm}{\multirow{6}{*}{\rotatebox[origin=c]{90}{CIFAR-100}}}
&1250 & \textbf{0.075}{\tiny ($\pm$ 0.005)} & 0.066{\tiny ($\pm$ 0.005)} & 0.043{\tiny ($\pm$ 0.003)} & 0.043{\tiny ($\pm$ 0.002)} \\
&2500 &\textbf{} \textbf{0.064}{\tiny ($\pm$ 0.004)} & 0.055{\tiny ($\pm$ 0.004)} & 0.043{\tiny ($\pm$ 0.005)} & 0.042{\tiny ($\pm$ 0.002)} \\
&5000 & \textbf{0.049}{\tiny ($\pm$ 0.004)} & 0.048{\tiny ($\pm$ 0.003)} & 0.038{\tiny ($\pm$ 0.004)} & 0.037{\tiny ($\pm$ 0.003)} \\
&10000 & 0.037{\tiny ($\pm$ 0.004)} & \textbf{0.040}{\tiny ($\pm$ 0.003)} & 0.034{\tiny ($\pm$ 0.004)} & 0.033{\tiny ($\pm$ 0.002)} \\
&20000 & \textbf{0.033}{\tiny ($\pm$ 0.003)} & 0.031{\tiny ($\pm$ 0.003)} & 0.030{\tiny ($\pm$ 0.003)} & 0.029{\tiny ($\pm$ 0.003)} \\
&40000 & \textbf{0.016}{\tiny ($\pm$ 0.005)} & 0.010{\tiny ($\pm$ 0.004)} & 0.017{\tiny ($\pm$ 0.005)} & 0.012{\tiny ($\pm$ 0.003)} \\

\bottomrule
\end{tabular}
\caption{$k$NN Classifier accuracy for varying values of $k$ over the embedding spaces of MNIST, Fashion-MNIST, COIL-20
    and CIFAR-100 datasets. Average accuracy scores are given over a 10-fold cross-validation for each of PCA, t-SNE, UMAP and SPREV.}\label{table:original-data-accuracy}
\end{table}

\begin{table}[H]
\centering
\begin{tabular}{ll|rrrrr}
\toprule
\textbf{} & \textbf{k} & \textbf{t-SNE} & \textbf{UMAP} & \textbf{PCA} & \textbf{SPREV} & \\
\cline{2-7}
\parbox[t]{2mm}{\multirow{6}{*}{\rotatebox[origin=c]{90}{\footnotesize MNIST CULLED}}}
&3 & \textbf{0.933}{\tiny ($\pm$ 0.054)} & 0.906{\tiny ($\pm$ 0.090)} & 0.722{\tiny ($\pm$ 0.143)} & 0.853{\tiny ($\pm$ 0.058)} \\
&5 & \textbf{0.944}{\tiny ($\pm$ 0.050)} & 0.928{\tiny ($\pm$ 0.061)} & 0.761{\tiny ($\pm$ 0.108)} & 0.881{\tiny ($\pm$ 0.065)} \\
&15 & 0.883{\tiny ($\pm$ 0.076)} & 0.883{\tiny ($\pm$ 0.063)} & 0.767{\tiny ($\pm$ 0.089)} & \textbf{0.887}{\tiny ($\pm$ 0.072)} \\
&30 & \textbf{0.889}{\tiny ($\pm$ 0.066)} & 0.872{\tiny ($\pm$ 0.070)} & 0.756{\tiny ($\pm$ 0.094)} & 0.887{\tiny ($\pm$ 0.072)} \\
&60 & 0.861{\tiny ($\pm$ 0.057)} & 0.861{\tiny ($\pm$ 0.076)} & 0.750{\tiny ($\pm$ 0.100)} & \textbf{0.887}{\tiny ($\pm$ 0.077)} \\
&120 & 0.556{\tiny ($\pm$ 0.090)} & 0.617{\tiny ($\pm$ 0.094)} & 0.522{\tiny ($\pm$ 0.106)} & \textbf{0.715}{\tiny ($\pm$ 0.164)} \\
&150 & 0.383{\tiny ($\pm$ 0.118)} & \textbf{0.439}{\tiny ($\pm$ 0.098)} & 0.350{\tiny ($\pm$ 0.149)} & 0.401{\tiny ($\pm$ 0.047)} \\

\cline{2-7}
\parbox[t]{2mm}{\multirow{6}{*}{\rotatebox[origin=c]{90}{\footnotesize Fashion-MNIST CULLED}}}
&3 & \textbf{0.961}{\tiny ($\pm$ 0.056)} & 0.956{\tiny ($\pm$ 0.065)} & 0.922{\tiny ($\pm$ 0.079)} & 0.960{\tiny ($\pm$ 0.036)} \\
&5 & \textbf{0.967}{\tiny ($\pm$ 0.051)} & 0.956{\tiny ($\pm$ 0.054)} & 0.956{\tiny ($\pm$ 0.065)} & 0.960{\tiny ($\pm$ 0.036)} \\
&15 & \textbf{0.967}{\tiny ($\pm$ 0.051)} & 0.961{\tiny ($\pm$ 0.056)} & 0.944{\tiny ($\pm$ 0.082)} & 0.938{\tiny ($\pm$ 0.040)} \\
&30 & 0.956{\tiny ($\pm$ 0.048)} & \textbf{0.961}{\tiny ($\pm$ 0.056)} & 0.933{\tiny ($\pm$ 0.074)} & 0.932{\tiny ($\pm$ 0.049)} \\
&60 & 0.939{\tiny ($\pm$ 0.058)} & \textbf{0.956}{\tiny ($\pm$ 0.048)} & 0.906{\tiny ($\pm$ 0.086)} & 0.932{\tiny ($\pm$ 0.049)} \\
&120 & 0.456{\tiny ($\pm$ 0.124)} & 0.406{\tiny ($\pm$ 0.153)} & 0.811{\tiny ($\pm$ 0.090)} & \textbf{0.938}{\tiny ($\pm$ 0.040)} \\
&150 & 0.389{\tiny ($\pm$ 0.124)} & 0.406{\tiny ($\pm$ 0.153)} & 0.478{\tiny ($\pm$ 0.156)} & \textbf{0.621}{\tiny ($\pm$ 0.152)} \\

\cline{2-7}
\parbox[t]{2mm}{\multirow{6}{*}{\rotatebox[origin=c]{90}{\footnotesize COIL-20 CULLED}}}
&3 & \textbf{1.000}{\tiny ($\pm$ 0.000)} & \textbf{1.000}{\tiny ($\pm$ 0.000)} & 0.933{\tiny ($\pm$ 0.054)} & \textbf{1.000}{\tiny ($\pm$ 0.000)} \\
&5 & \textbf{1.000}{\tiny ($\pm$ 0.000)} & \textbf{1.000}{\tiny ($\pm$ 0.000)} & 0.922{\tiny ($\pm$ 0.057)} & \textbf{1.000}{\tiny ($\pm$ 0.000)} \\
&15 & \textbf{1.000}{\tiny ($\pm$ 0.000)} & \textbf{1.000}{\tiny ($\pm$ 0.000)} & 0.917{\tiny ($\pm$ 0.067)} & 0.994{\tiny ($\pm$ 0.017)} \\
&30 & 0.989{\tiny ($\pm$ 0.022)} & \textbf{1.000}{\tiny ($\pm$ 0.000)} & 0.883{\tiny ($\pm$ 0.058)} & 0.977{\tiny ($\pm$ 0.037)} \\
&60 & 0.989{\tiny ($\pm$ 0.022)} & \textbf{1.000}{\tiny ($\pm$ 0.000)} & 0.772{\tiny ($\pm$ 0.123)} & 0.944{\tiny ($\pm$ 0.066)} \\
&120 & 0.639{\tiny ($\pm$ 0.134)} & 0.683{\tiny ($\pm$ 0.096)} & 0.683{\tiny ($\pm$ 0.149)} & \textbf{0.933}{\tiny ($\pm$ 0.085)} \\
&150 & 0.611{\tiny ($\pm$ 0.099)} & 0.444{\tiny ($\pm$ 0.155)} & \textbf{0.583}{\tiny ($\pm$ 0.134)} & 0.573{\tiny ($\pm$ 0.207)} \\

\cline{2-7}

\parbox[t]{2mm}{\multirow{6}{*}{\rotatebox[origin=c]{90}{\footnotesize CIFAR-100 CULLED}}}
&3 & \textbf{0.683}{\tiny ($\pm$ 0.103)} & 0.628{\tiny ($\pm$ 0.153)} & 0.556{\tiny ($\pm$ 0.129)} & 0.667{\tiny ($\pm$ 0.107)} \\
&5 & \textbf{0.678}{\tiny ($\pm$ 0.108)} & 0.606{\tiny ($\pm$ 0.152)} & 0.583{\tiny ($\pm$ 0.100)} & 0.673{\tiny ($\pm$ 0.111)} \\
&15 & 0.678{\tiny ($\pm$ 0.116)} & 0.661{\tiny ($\pm$ 0.133)} & 0.589{\tiny ($\pm$ 0.094)} & \textbf{0.723}{\tiny ($\pm$ 0.100)} \\
&30 & 0.678{\tiny ($\pm$ 0.136)} & 0.689{\tiny ($\pm$ 0.114)} & 0.606{\tiny ($\pm$ 0.104)} & \textbf{0.717}{\tiny ($\pm$ 0.102)} \\
&60 & 0.656{\tiny ($\pm$ 0.119)} & 0.667{\tiny ($\pm$ 0.127)} & 0.606{\tiny ($\pm$ 0.107)} & \textbf{0.689}{\tiny ($\pm$ 0.108)} \\
&120 & 0.361{\tiny ($\pm$ 0.103)} & 0.367{\tiny ($\pm$ 0.100)} & 0.389{\tiny ($\pm$ 0.136)} & \textbf{0.549}{\tiny ($\pm$ 0.195)} \\
&150 & 0.367{\tiny ($\pm$ 0.112)} & 0.361{\tiny ($\pm$ 0.103)} & 0.361{\tiny ($\pm$ 0.103)} & \textbf{0.391}{\tiny ($\pm$ 0.149)} \\

\bottomrule
\end{tabular}
\caption{$k$NN Classifier accuracy for varying values of $k$ over the embedding spaces of MNIST, Fashion-MNIST, COIL-20
    and CIFAR-100 CULLED datasets. Average accuracy scores are given over a 10-fold cross-validation for each of PCA, t-SNE, UMAP and SPREV.}\label{table:culled-data-accuracy}
\end{table}
\subsubsection{Discussion of results from kNN benchmark.}

In Table \ref{table:original-data-accuracy}, we present the average cross-validation accuracy for the MNIST, Fashion-MNIST, COIl-20, and CIFAR-100 datasets. A discernible trend is observed wherein UMAP and t-SNE exhibit relatively superior performance, particularly at lower \(k\)-values. This aligns with expectations, given the inherent approach of t-SNE and UMAP in creating connectivity through pairwise similarities between data points in the original graphical representation before projecting it to a lower dimension. This indicates their proficiency in preserving local structures compared to PCA and SPREV. Notably, SPREV, in many cases, closely trails behind UMAP and t-SNE while outperforming PCA, suggesting a nuanced balance in capturing local structures, although not as adeptly as t-SNE and UMAP.

As the \(k\)-value increases to a more non-local level, a distinctive pattern emerges where SPREV surpasses all algorithms, demonstrating superior control in maintaining non-local structures after dimension reduction compared to its counterparts. This suggests a successful achievement of the objective to strike a balance between preserving local and global structures post-dimension reduction. In the non-local scenarios, PCA exhibits relatively favorable performance compared to t-SNE and UMAP, consistent with expectations stemming from PCA's reliance on a global structure for separation and dimension reduction.

Notably, for the CIFAR-100 dataset, there is a significant performance decline across all algorithms. This is likely attributed to the multitude of classes compressed into a confined setting, posing a formidable challenge for preserving both non-local and local structures. The visual inspection of data points in Figure \ref{CIFAR100-VIS} confirms their dense clustering in all instances, making it arduous to extract meaningful structure. Additionally, a general observation is that as class size increases, non-local structure preservation diminishes. This is evident in the lower performance at higher \(k\)-values for COIL-20 and CIFAR-100 in comparison to MNIST and Fashion-MNIST.

In Table \ref{table:culled-data-accuracy}, we present the average cross-validation accuracy for the culled variants of the MNIST, Fashion-MNIST, COIL-20, and CIFAR-100 datasets. An interesting trend emerges in which, at medium to higher k values, SPREV generally outperforms UMAP and t-SNE, except for the case of MNIST CULLED, where UMAP exhibits a slight advantage, and for  COIL-20, where PCA holds a similar edge. However, these differences are marginal. The same trends persist at smaller k values, mirroring the patterns observed in the non-culled variants, where t-SNE and UMAP excel at the local scale.

An intriguing observation pertains to the COIL-20 CULLED variant, displaying numerous perfect scores at low and medium k values. This phenomenon is likely a characteristic of the culled dataset, as evident in Figure \ref{COIL20-VIS}, where well-separated clusters exist. The occurrence of perfect accuracy scores, even with 10-fold cross-validation, is higher due to the dataset's inherent characteristics, influencing kNN classifier behavior as a result.

It's essential to highlight the impact of dataset culling, particularly for CIFAR-100, where the reduced class size significantly enhances the performance of all algorithms. Visualizations in Figure \ref{CIFAR100-VIS} align with these results, showing more discernible local and non-local trends in the culled variants, contrasting the almost noise like visualizations in the unculled counterparts, attributed to the sheer abundance of samples and classes. The improvement in results hence, corroborates the visual observations. Comparing the higher performance in Fashion-MNIST to MNIST, the better clustering in culled embeddings versus non-culled embeddings in Figure \ref{FMNIST-VIS} and Figure \ref{MNIST-VIS} reinforces our analysis and provides a quantitative perspective on the observed visual trends.

\subsection{Comparison of Computational Performance of Plotting Algorithms}

Benchmarking against real-world datasets was conducted on an ArchLinux Instance equipped with a 3.80 GHz 24-Core, 48-Thread AMD Ryzen Threadripper 3960X processor and 128 GiB of quad-channel DDR4-3200 RAM. The implementations of PCA, t-SNE, and UMAP utilized default non-multicore or multithreaded variants available in scikit-learn, chosen for the sake of reproducibility on simpler systems. While it was feasible to develop a variant of the SPREV algorithm leveraging GPU parallelism, a deliberate decision was made during testing to adhere to a straightforward CPU-bound approach. This choice aimed to maintain fairness and reproducibility, facilitating ease of use on more economical systems lacking dedicated GPUs.

\begin{table}[H]
\centering
\begin{tabular}{ll|rrrrr}
\toprule
\textbf{} & \textbf{Dataset} & \textbf{t-SNE} & \textbf{UMAP} & \textbf{PCA} & \textbf{SPREV} &\\
\cline{2-7}
\parbox[t]{2mm}{}
          & MNIST & 3m 2.2s & 1m 5.4s & 3.9s & \textbf{3.0s}\\
\cline{2-7}
\parbox[t]{2mm}{}
          & MNIST CULLED & 0.7s & 3.1s & \textbf{0.2s} & \textbf{0.2s} \\
\cline{2-7}
\parbox[t]{2mm}{}
          & Fashion-MNIST & 3m 14.2s & 46.4s & 3.8s & \textbf{2.7s}   \\
\cline{2-7}
\parbox[t]{2mm}{}
          & Fashion-MNIST CULLED & 0.7s & 1.1s & 0.5s & \textbf{0.2s} \\
\cline{2-7}
\parbox[t]{2mm}{}
          & COIL-20 & 3.5s & 3.8s & \textbf{0.2s} & 0.3s \\
\cline{2-7}
\parbox[t]{2mm}{}
          & COIL-20 CULLED & 0.7s & 1.0s & \textbf{0.2s} & \textbf{0.2s}  \\
\cline{2-7}
\parbox[t]{2mm}{}
          & CIFAR-100 & 4m 25.2s & 1m 4.6s & \textbf{4.7s} & 29.7s \\
\cline{2-7}
\parbox[t]{2mm}{}
          & CIFAR-100 CULLED & 0.8s & 1.3s & \textbf{0.2s} & \textbf{0.2s} \\

\bottomrule
\end{tabular}
\caption{Execution time for plot of all datasets to render for each of PCA, t-SNE and UMAP, SPREV.}\label{table:execution-time}
\end{table}

Upon examining Table \ref{table:execution-time}, a discernible pattern emerges regarding the CPU execution times until plots observed in Figures \ref{MNIST-VIS}, \ref{FMNIST-VIS}, \ref{COIL20-VIS} and \ref{CIFAR100-VIS} are rendered for the non-culled datasets. Notably, t-SNE exhibits the lengthiest processing duration, succeeded by UMAP, while PCA and SPREV demonstrate comparatively faster performances. An exception to this trend is observed in the case of CIFAR-100 and COIL-20, where SPREV is outperformed by PCA, indicating that, for scenarios characterized by low class sizes, SPREV exhibits exceptional efficiency, while PCA retains superior performance in variants with higher class sizes.

Conversely, when analyzing the culled dataset variants, a distinct trend manifests wherein UMAP registers as the slowest, trailed by t-SNE, PCA, and SPREV, respectively. Despite the relatively rapid completion times for all methods, this observation underscores SPREV's noteworthy performance, particularly in scenarios characterized by small class sizes, high dimensions, and low sample sizes. Moreover, SPREV exhibits promise even in high sample size scenarios, albeit encountering challenges as class sizes increase. Although SPREV maintains faster speeds compared to t-SNE and UMAP in such cases, it falls short of PCA's performance. Consequently, class size emerges as a notable weakness in the current SPREV implementation, suggesting potential avenues for enhancing its efficacy in high class size variants to yield more favorable outcomes and insights.

Exploring ways to augment the implementation for improved results and observations in high class size scenarios could be a viable future direction for research.

\section{Future Work}
Several promising directions emerge, each offering opportunities for refinement and expansion.

\subsection{Multicore and Multithreaded Optimization:}
An avenue ripe for exploration entails the optimization of SPREV for contemporary CPUs through strategic utilization of multicore and multithreaded architectures. Crafting a variant that judiciously leverages available processing cores is anticipated to significantly enhance overall performance, particularly in scenarios with ample computational resources.

\subsection{Behavior Analysis on Specific Datasets:}
Conducting a focused analysis on SPREV's behavior, particularly on datasets like COIL20, offers a pertinent avenue for deeper understanding. The recurrent occurrence of perfect accuracies post-culling demands meticulous investigation, aiming to unveil inherent dataset characteristics or shed light on the algorithm's behavior under specific conditions. Such insights promise to enrich our understanding of SPREV's capabilities and limitations. Moreover, conducting further testing with a different array of datasets could potentially unveil additional intriguing insights.

\subsection{Large Class Size Variant:}
The exploration of the large class size variant of SPREV presents an intriguing avenue for future work. While SPREV has demonstrated commendable efficiency in scenarios characterized by small class sizes, high dimensions, and low sample sizes, addressing the challenges associated with larger class sizes emerges as a vital next step. Investigating ways to enhance SPREV's performance and scalability in datasets with a substantial number of classes could lead to valuable insights. This endeavor could involve optimizing algorithms, refining computational strategies, or exploring novel approaches to better handle the increased complexity associated with larger class sizes through shape analysis. By delving into the intricacies of SPREV's behavior in datasets with varied class sizes, researchers can contribute to advancing the method's versatility and applicability across a broader spectrum of real-world scenarios.

\subsection{Automatic Choice of Bounding Space and Visualization Space Variant:}
In this iteration, we constrain our approach to utilizing a hypersphere as the high dimensional representation space and a regular polygon as the two-dimensional visualisation space. It would be intriguing to devise a methodology capable of autonomously exploring the dataset, discerning contextual cues related to the topology and (co)homology. This exploration aims to unveil the inherent shape and genus of the high-dimensional data, facilitating its transformation into a fundamental group. Subsequently, this fundamental group can be mapped to a two-dimensional object that optimally represents the data, enhancing comprehension and enabling the extraction of meaningful insights.

\subsection{Unsupervised Dataset Variant:}
In expanding the utility of SPREV, a prospective avenue of exploration involves the development of a variant specifically tailored for unsupervised datasets. The adaptation of SPREV to unsupervised scenarios necessitates a fundamental reevaluation of the algorithm's core components to accommodate the absence of labeled data. Addressing this unique challenge requires the formulation of novel methodologies for feature extraction, clustering, and pattern recognition that align with the inherent characteristics of unsupervised datasets. This avenue of research aligns with the overarching goal of augmenting SPREV ensuring its versatility in addressing real-world challenges.

\section{Conclusions}

We have devised a versatile dimension reduction technique tailored for labeled datasets, firmly rooted in robust mathematical principles fine-tuned through algorithmic optimisations for performant use in discrete computational settings. The algorithm implementing this technique exhibits notable computational efficiency compared to UMAP and t-SNE, while concurrently ensuring global structure preservation comparable to PCA. This capability enables the generation of high-quality embeddings for datasets characterized by small class sizes, high dimensions, and low sample sizes—an underexplored yet rapidly burgeoning realm within the domain of data science. The empirical assessment of SPREV's performance across diverse datasets underscores the algorithm's robustness, positioning it as a potentially transformative tool with significant implications for various scientific computing applications.

It is additionally furnished with a robust framework for visualizing the embedding space. This feature enables its application in comprehending relationships through observational analysis of the data, facilitating users such as data analysts, fast turnaround R\&D teams, and business intelligence personnel to employ it for promptly instigating discovery and effectively communicating findings with stakeholders.

However, SPREV exhibits certain limitations and weakness that we should note. Although the method aims to reduce dimensionality, the initial data needs to be in a sufficiently high-dimensional space for the convex hull and bounding sphere construction to be meaningful. In low-dimensional inital settings, these elements might not provide valuable insights into the data structure. While these elements contribute to preserving global structure, their effectiveness depends on the underlying data distribution. If the data exhibits significant outliers or complex shapes that deviate significantly from convexity, the generated sphere might not accurately represent the overall distribution, potentially leading to distortions in the visualization. The selection of an appropriate distance metric significantly influences the method's performance, as different metrics capture distinct aspects of similarity between data points. The choice of an inappropriate metric may lead to misleading or inaccurate representations of the data in the lower-dimensional space.

\begin{acks}
I would like to express my sincere gratitude to Ong Jun Wen\orcidlink{https://orcid.org/0009-0007-0870-604X} for his invaluable contribution and collaborative efforts in co-developing the foundational framework for the rigorous proof of orthogonality within high-dimensional spaces, as detailed in section \ref{section:Near-Orthogonality}. His dedication, expertise, and insightful perspectives significantly enriched the development and refinement of our previous research on high dimensional medical imagery denoising which laid the foundation and played a pivotal role in enabling this research endeavour of mine. This collaborative endeavor has not only enhanced the quality and depth of the work here but has also laid the core groundwork for the motivations behind the proofs presented in our study.
\end{acks}

\bibliographystyle{ACM-Reference-Format}
\bibliography{sample-base}

\end{document}